\theoremstyle{definition}
\newtheorem{theorem}{Theorem}
\newtheorem{definition}{Definition}
\newtheorem{corollary}{Corollary}
\newtheorem{claim}{Claim}[section]
\newcommand{\oset}[3][0ex]{%
  \mathrel{\mathop{#3}\limits^{
    \vbox to#1{\kern-2\ex@
    \hbox{$\scriptstyle#2$}\vss}}}}
\renewcommand{\vec}[1]{{\ensuremath{\mathbf{#1}}}}
\newcommand{\vecofvec}[1]{{\ensuremath{\oset[-0.2ex]{\rightharpoonup}{\mathbf{#1}}}}}
\newcommand{\vvc}{\vecofvec{c}}
\newcommand{\vm}{\vec{m}}
\newcommand{\vc}{\vec{c}}
\newcommand{\calM}{\mathcal{M}}
\let\OLDthebibliography\thebibliography
\renewcommand\thebibliography[1]{
  \OLDthebibliography{#1}
  \setlength{\parskip}{0pt}
  \setlength{\itemsep}{0pt plus 0.3ex}
}
\title{Thermodynamically Favorable Computation via Tile Self-assembly}
\author{
 Cameron Chalk
    \thanks{Department of Electrical and Computer Engineering, University of Texas at Austin, Austin, TX, USA \protect\url{ctchalk@utexas.edu} This author's research was supported in part by National Science Foundation Grants CCF-1618895 and CCF-1652824.}
\and
 Jacob Hendricks%
    \thanks{Department of Computer Science and Information Systems, University of Wisconsin - River Falls, River Falls, WI, USA
    \protect\url{jacob.hendricks@uwrf.edu}}
\and
 Matthew J. Patitz
    \thanks{Department of Computer Science and Computer Engineering, University of Arkansas, Fayetteville, AR, USA
    \protect\url{patitz@uark.edu} This author's research was supported in part by National Science Foundation Grants CCF-1422152 and CAREER-1553166.}
\and
    Michael Sharp
        \thanks{Department of Computer Science and Computer Engineering, University of Arkansas, Fayetteville, AR, USA.
        \protect\url{mrs018@uark.edu}.
        This author's research was supported in part by National Science Foundation Grants CCF-1422152 and CAREER-1553166.}
}
\date{}
\begin{document}

\maketitle

\begin{abstract}
    The recently introduced Thermodynamic Binding Networks (TBN) model was developed with the purpose of studying self-assembling systems by focusing on their thermodynamically favorable final states, and ignoring the kinetic pathways through which they evolve.  The model was intentionally developed to abstract away not only the notion of time, but also the constraints of geometry.  Collections of monomers with binding domains which allow them to form polymers via complementary bonds are analyzed to determine their final, stable configurations, which are those which maximize the number of bonds formed (i.e. enthalpy) and the number of independent components (i.e. entropy). In this paper, we first develop a definition of what it means for a TBN to perform a computation, and then present a set of constructions which are capable of performing computations by simulating the behaviors of space-bounded Turing machines and boolean circuits. In contrast to previous TBN results, these constructions are robust to great variability in the counts of monomers existing in the systems and the numbers of polymers that form in parallel.  Although the Turing machine simulating TBNs are inefficient in terms of the numbers of unique monomer types required, as compared to algorithmic self-assembling systems in the abstract Tile Assembly Model (aTAM), we then show that a general strategy of porting those aTAM system designs to TBNs produces TBNs which incorrectly simulate computations.  Finally, we present a refinement of the TBN model which we call the Geometric Thermodynamic Binding Networks (GTBN) model in which monomers are defined with rigid geometries and form rigid bonds.  Utilizing the constraints imposed by geometry, we then provide a GTBN construction capable of simulating Turing machines as efficiently as in the aTAM.
\end{abstract}

\section{Introduction}

The study of self-assembling systems has resulted in a wide range of theoretical models and results, showing powers and limitations of such systems across a large landscape of variation in component structures, dynamics, and other important system properties \cite{PatitzSurvey,DotCACM,Versus,Polygons,Polyominoes,jSignals,FlexibleVsRigid,AGKS05g,WoodsMeunierSTOC}.  Theoretical studies have also given rise to experimental implementations in which artificial self-assembling systems are being developed and demonstrated in laboratories \cite{evans2014crystals,SchulmanWinfreeNucleation,SchulmanWinfreeNucleation,SchWin07}. While at times theoretical studies are intended solely to explore the mathematical boundaries between the possible and impossible%
, at other times they are geared toward informing researchers on the behaviors of existing physical systems.  Toward that end, theoretical models of self-assembly have been developed which seek to elucidate errors observed in experimental implementations (e.g. the kinetic Tile Assembly Model\cite{Winf98}), and among these is the Thermodynamic Binding Network (TBN) model\cite{TBNintro}.  Although such models are generally intended to abstract away many of the details of physical systems, they are often designed to highlight certain important aspects and isolate them for study.  For the TBN model in particular, the desire is to focus on thermodynamically favored end states of systems while ignoring the kinetic pathways through which they evolve, in the hopes of being able to better design systems whose ``sink states'' will be those we desire, \emph{and avoid those we don't}, regardless of intermediate states which may be traversed along the way.  Preliminary work with the TBN model \cite{TBNintro,arxivTBNstable} has provided initial tools to begin working with self-assembling systems in this model, and the goal of this paper is to extend them to TBN systems capable of performing a larger class of computations.  Since the notion of what it means to compute in such a model is not obvious, we provide a definition of computing with TBNs.  We then present a result showing that for any space-bounded Turing machine there exists a TBN which can simulate it on any input.  However, the size of the set of monomers required for the construction is on the order of the amount of space used multiplied by the number of time steps of the machine, making it less efficient than typical algorithmic self-assembling systems.  Nonetheless, the construction is robust to the system containing multiple copies of the computation simulation self-assembling in parallel, and also to large ranges of the numbers of monomers of each type, which is in contrast to the previous results.
We further extend our construction to the simulation of arbitrary fan-in fan-out Boolean circuits.

Next, we present results which relate the simulation of computations within the abstract Tile Assembly Model (aTAM)\cite{Winf98} to TBNs, as there are many results related to computation in the aTAM (e.g. \cite{jSADS,jCCSA,CookFuSch11,SolWin07,OptimalShapes3D,WoodsIU2013,jLSAT1}), which has been shown to be computationally universal \cite{Winf98}. Along this line, we first consider a standard class of aTAM systems which are used to simulate Turing machines (i.e. ``zig-zag'' systems) and consider what happens if the tiles of those systems are interpreted as monomers of a TBN (in a straightforward manner).  We present a set of criteria which are relatively natural and likely to be met by the computations performed by many Turing machines and which, if true for a particular Turing machine, demonstrate why a TBN created in such a way would not correctly simulate the Turing machine's computations, and would be capable of outputting incorrect answers. Notably, the argument presented pertains to all of our currently known approaches to simulating computations using TBNs which are more monomer-efficient than those of our first result, leaving an open question of whether or not more efficient simulation is possible in the TBN model.  For our final result, we present a refinement to the TBN model which we call the Geometric Thermodynamic Binding Networks (GTBN) model, in which monomers and bonds between them are restricted by geometric constraints (unlike in the TBN model, but similar to the aTAM), and show how GTBNs can efficiently simulate arbitrary Turing machines for decidable languages.

\section{Preliminaries}

In this section we provide definitions for the TBN model.  Due to space constraints, definitions for the abstract Tile Assembly Model and zig-zag assembly systems can be found in Sections~\ref{sec:tam-informal} and \ref{sec:zig-zag}, respectively.

\subsection{TBN Model}\label{sec:model}

We use the definitions from \cite{TBNintro}, the majority of which we repeat here, but please see \cite{TBNintro} for more details and examples.%

Let $\N,\Z,\Z^+$ denote the set of nonnegative integers, integers, and positive integers, respectively.
A key type of object in our definitions is a multiset, which we define in a few different ways as convenient.%
Let $\calA$ be a finite set.
We can define a multiset over $\calA$ using the standard set notion, e.g., $\vc = \{a, a, c\}$, where $a, c \in \calA$.
Formally, we view multiset $\vc$ as a vector assigning counts to $\calA$.
Letting $\N^\calA$ denote the set of functions $f:\calA\to\N$,
we have $\vc \in \N^\calA$. 
We index entries by elements of $a \in \calA$, calling $\vc(a) \in \N$ the \emph{count of $a$ in $\vc$}.

Molecular bonds with precise binding specificity are modeled abstractly as binding ``domains'',
designed to bind only to other specific binding domains.
Formally, consider a finite set $\calD$ of \emph{primary domain types}.
Each primary domain type $a \in \calD$ is mapped to a \emph{complementary domain type} (a.k.a., \emph{codomain type}) denoted $a^*$.
Let $\calD^* = \{a^* \mid a \in \calD\}$ denote the set of codomain types of $\calD$.
The mapping is assumed 1-1, so $|\calD^*| = |\calD|$.
We assume that a domain of primary type $a \in \calD$ binds only to its corresponding complementary type $a^* \in \calD^*$, and vice versa.%
The set $\calD \cup \calD^*$ is the set of \emph{domain types}.

We assume a finite set $\calM$ of \emph{monomer types}, where a monomer type $\vm \in \N^{\calD \cup \calD^*}$ is a non-empty multiset of domain types, e.g., $\vm = \{a, b, b, c^*, a^*\}$. %
A  \emph{thermodynamic binding network} (TBN) is a pair $\calT = (\calD,\calM)$ consisting of a finite set $\calD$ of primary domain types and a finite set $\calM \subset \N^{\calD \cup \calD^*}$ of monomer types.
A \emph{monomer collection} $\vvc \in \N^\calM$ of $\calT$ is multiset of monomer types;
intuitively, $\vvc$ indicates how many of each monomer type from $\calM$ there are, but not how they are bound.%
Since one monomer collection usually contains more than one copy of the same domain type,
we use the term \emph{domain} to refer to each copy separately.%
We similarly reserve the term \emph{monomer} to refer to a particular instance of a monomer type if a monomer collection has multiple copies of the same monomer type.

A single monomer collection $\vvc$ can take on different configurations depending on how domains in monomers are bound to each other.
To formally model configurations, we first need the notion of a bond assignment.
 Let $(U,V,E)$ be the bipartite graph  describing all possible bonds, where
$U$ is the multiset of all primary domains in all monomers in $\vvc$,
$V$ is the multiset of all codomains in all monomers in $\vvc$,
and $E$ is the set of edges between primary domains and their complementary codomains $\{ \{u,v\} \mid u \in U, v \in V, v = u^*\}$.
A \emph{bond assignment} $M$ is  a matching %
on $(U,V,E)$.
Then, a \emph{configuration} $\alpha$ of monomer collection~$\vvc$ is  the (multi)graph 
$(U \cup V,E_M)$, where 
the edges $E_M$ include both the edges in the matching $M$ and an edge between each pair of domains within the same monomer.%
Specifically, for each pair of domains $d_i,d_j \in \calD \cup \calD^*$ that are part of the same monomer in $\vvc$, let $\{d_i,d_j\} \in E_M$, calling this a \emph{monomer edge}, and for each edge $\{d_i,d_i^*\}$ in the bond assignment $M$, let $\{d_i,d_i^*\} \in E_M$, calling this a \emph{binding edge}.
Let $[\vvc]$ be the set of all configurations of a monomer collection $\vvc$.
For a configuration $\alpha$, we say the size of a configuration, written $|\alpha|$, is simply the number of monomers in it.
Each connected component in $\alpha$ is called a \emph{polymer}. %
Note that a polymer is itself a configuration, but of a smaller monomer collection $\vvc' \subseteq \vvc$ (as $\vvc'$ and $\vvc$ are  multisets). %
As with all configurations, the size of a polymer is the number of monomers in~it.

Which configurations are thermodynamically favored over others depends on two properties of a configuration: its bond count and entropy.
The \emph{enthalpy} $H(\alpha)$ of a configuration is the number%
of binding edges (i.e., the cardinality of the matching $M$).
The \emph{entropy} $S(\alpha)$ of a configuration is the number of polymers (connected components) of $\alpha$.%

As in \cite{TBNintro}, we study the particularly interesting limiting case in which enthalpy is \emph{infinitely} more favorable than entropy. %
We say a configuration $\alpha$ is \emph{saturated} if it has no pair of domains $d$ and $d^*$ that are both unbound;
this is equivalent to stating that $\alpha$ has maximal bonding among all configurations in $[\vvc]$.
We say a configuration $\alpha\in[\vvc]$ is \emph{stable} (aka thermodynamically favored) if it is saturated and maximizes the entropy among all saturated configurations, i.e., every saturated configuration $\alpha'\in[\vvc]$ obeys $S(\alpha') \leq S(\alpha)$.
Let $[\vvc]_\Box$ denote the set of stable configurations of monomer collection $\vvc$.

\section{Simulating Space-Bounded Turing Machines}
\label{sec:hard-coded-comp}

Our first result proves that for any Turing machine $M$ such that $M$ requires no more than $s$ tape cells and $t$ time steps (we mention time bound $t$ for efficiency of monomer types and polymer size, but the result also holds if we assume the worst case where $t = O(2^s)$), there exists a TBN which simulates $M$.  We provide definitions of what it means for a TBN to simulate a Turing machine, then formally state our Theorem and give our proof, which is by construction.

For the remainder of the section, let $M$ be an arbitrary $s$ space-bounded and $t$ time-bounded Turing machine. Let $i$ be an arbitrary input bit string to $M$, noting that $0 \le |i| \le s$.

Definition~\ref{def:simulate-TM} uses the following notation. 
\begin{enumerate}
\itemsep0em 
\item Let $\calT = (\calD, \calM)$ be a TBN with $\calD$ a finite set of primary domain types and $\calM$ a finite set of monomer types. 
\item Let $\mathcal{O}$ and $\mathcal{I}$ be subsets of $\calM$. We call $\mathcal{O}$ the set of \emph{output monomer types}, and call $\mathcal{I}$ the set of \emph{input monomer types}.
\item Let $\vvc$ be a monomer collection of $\calT$.
\item Let $E_{input}$ be a function from finite sets of input monomers (i.e. monomers of types in $\mathcal{I}$) to binary strings of length $s$. Let $S$ be a finite set of input monomers and let $i$ be a binary string of length $s$. If $E_{input}(S) = i$, then we say that $S$ \emph{encodes} $i$.
\item Let $E_{output}$ be a map from a finite set of output monomers (i.e. monomers of types in $\mathcal{O}$) to binary strings of length $s$. If $S$ is a finite set of monomers and $o\in 2^s$ are such that $E_{output}(S) = o$, then we say that $S$ \emph{encodes} $o$.
\end{enumerate}

Definition~\ref{def:simulate-TM} says that a monomer collection for a TBN  \emph{simulates} a Turing machine $M$ on some input $i$ if 
every stable configuration of the monomer collection is such that every polymer, $p$ say, that contains a monomer with input monomer type contains a set of monomers with input monomer type which encode $i$, and moreover, the set of monomers in $p$ with output monomer type contained in this polymer encode $M(i)$.\footnote{Under some reasonable representation of monomers as binary strings, one might want to require that the encoding is sufficiently weak, in FAC$^0$ for example. However, in this paper we do not require such restrictions on encodings in the definition of ``simulation''. We do note that the encodings $E_{input}$ and $E_{ouput}$ that we use are straightforward encodings which require checking $O(\log s)$ domains in order to determine the input $i$. Moreover, we note that to translate $M$ and $i$ to a TBN, we first translate $M$ and $i$ to an aTAM system via a standard technique and then translate this aTAM system to a TBN using a straightforward approach outlined in Sections~\ref{sec:construction-comps} and~\ref{sec:construction-explosion}.}

\begin{definition}\label{def:simulate-TM}
A monomer collection $\vvc$ for the TBN $\calT$ \emph{simulates} an $s$ space-bounded Turing machine $M$ on input $i$ if and only if
there exist encodings $E_{input}$ and $E_{output}$ such that for every stable configuration $\alpha$ in $[\vvc]_\Box$, if $\alpha$ contains a polymer that contains a monomer with type in $\mathcal{I}$, then 
\begin{enumerate}
\itemsep0em 
\item for $S_{in}$ the set of monomers in $\alpha$ with types in $\mathcal{I}$,  $E_{input}(S_{in}) = i$,
\item for $S_{out}$ the set of monomers in $\alpha$ with types in $\mathcal{O}$,  $E_{output}(S_{out}) = M(i)$, where $M(i)$ is the output of the Turing machine $M$ on input $i$.
\end{enumerate}

\end{definition}

\begin{theorem}\label{thm:hard-coded-sim}
For any $s$ space-bounded, $t$ time-bounded Turing machine $M$, there exists a set of primary domain types $\calD$, and sets of monomer types $\calM$, $\calM_{seed}$, and $\mathcal{O} \subset \calM$ consisting of monomers with binding domains in $\calD\cup \calD^*$
such that, for any valid input $i$ to $M$, the following properties hold.

\begin{enumerate}
    \itemsep0em 
    \item there exists a monomer type $m_i \in \calM_{seed}$ such that $m_i$ encodes $i$, 
    \item for $\calM_i = \calM \cup \{m_i\}$, there exists a monomer collection $\vvc$ for TBN $\calT_i = (\calD, \calM_i)$ such that $\vvc$ simulates $M$ on input $i$, and
    \item the set of output monomer types for the simulation is equal to $\mathcal{O}$.
\end{enumerate} 
\end{theorem}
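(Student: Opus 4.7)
The plan is to hard-code the deterministic space-time trace of $M$ on $i$ into the TBN by creating one monomer type per (position, time, content) triple in the grid of the computation. For each position $j \in \{1,\ldots,s\}$, each time step $k \in \{0,\ldots,t\}$, and each possible cell content $c$ (tape symbol together with optional head/state information), put a monomer type $m_{j,k,c} \in \calM$ carrying up to four binding domains (north, south, east, west) whose primary types are chosen to be unique to the (position, time, content, side) quadruple. The north domain of $m_{j,k,c}$ is declared complementary to the south domain of $m_{j,k+1,c'}$ exactly when $c'$ is the content forced at $(j,k+1)$ by the local row-$k$ window under $M$'s transition function; east/west domains carry head and state information horizontally across a row. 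For each input $i$, the seed $m_i \in \calM_{seed}$ exposes the row-$0$ complementary domains needed to pin down the initial tape configuration encoding $i$, and I would set the output monomer types to $\mathcal{O} = \{m_{j,t,c} : 1 \le j \le s,\ c \text{ valid}\}$.

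Given $i$, I would define $\vvc$ to contain one copy of $m_i$ and one copy of each $m_{j,k,c}$ such that $(j,k,c)$ actually appears in the trace of $M$ on $i$ (every other monomer type in $\calM$ has count $0$). By construction, each primary domain type occurring in $\vvc$ is carried by exactly one monomer and its complementary codomain is also carried by exactly one monomer, so every bond in any saturated configuration is forced. Hence the unique saturated configuration is the single-polymer grid assembly encoding the computation trace, and it is trivially stable. I would then define $E_{input}$ to read $i$ off the encoded bits on $m_i$, and $E_{output}$ to read $M(i)$ off the tape-symbol fields of the output monomers $m_{j,t,c}$ contained in this polymer, which clearly satisfies Definition~\ref{def:simulate-TM}.

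To obtain the robustness to parallel assemblies highlighted in the introduction, let $\vvc$ carry $k$ copies of each trace monomer and $k$ copies of $m_i$ for arbitrary $k \ge 1$. The main obstacle is then showing that the stable configurations are exactly $k$ disjoint parallel copies of the correct trace polymer, rather than some ``mixed'' saturated configuration with more components. I would handle this by an exchange argument: since every domain type in $\vvc$ appears on a unique monomer type, any two saturated configurations differ only by rematchings among the $k$ copies of each complementary domain pair. Starting from $k$ parallel copies and applying any single rematch that swaps a bond between two distinct copies merges those two polymers into one (because the within-copy bond graph is connected), strictly decreasing entropy; rematches that stay within one copy preserve entropy. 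Thus $k$ is the maximum entropy and it is achieved uniquely (up to relabeling of copies) by the $k$-parallel-copy configuration, completing the proof that $\vvc$ simulates $M$ on $i$.
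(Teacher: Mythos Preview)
Your single-copy construction does prove the bare existence statement: with one seed and exactly one copy of each trace monomer (all other counts zero), every domain type present in $\vvc$ occurs exactly once, saturation forces every bond, and the unique saturated configuration is the connected trace polymer. But this route is quite different from the paper's and buys strictly less. Building your $\vvc$ requires already knowing which triples $(j,k,c)$ lie on the trace of $M$ on $i$---you must run the computation before you can even write down the collection---whereas the paper's $\vvc$ may be any collection satisfying only the crude inequalities $\#(\vvc,m_i)\le \min_{m\in\calM_{comp,xy}\cup\calM_{end,xy}}\#(\vvc,m)\le \min_{m\in\calM_{cap,xy}}\#(\vvc,m)$, with \emph{every} monomer type in $\calM$ present in arbitrary positive amount. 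Two ingredients make this possible and are entirely absent from your proposal: (i) for each computation or end monomer type the paper adds a \emph{capping monomer} carrying exactly the complements of that monomer's two input domains, so that surplus computation monomers sit harmlessly in size-$2$ capped polymers; and (ii) the end monomers in the final column carry extra codomains $g_n^*$ that bind back to helper domains $g_n$ on the seed, so that completing the computation polymer frees $s$ caps and creates an entropy gap over the fully-capped baseline. This wrap-around is what singles out the correct trace among all the extraneous monomer types.

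Your $k$-copy extension has a genuine gap. The exchange argument shows only that the $k$-parallel-copy configuration is a \emph{local} maximum under single-bond swaps; it does not establish a global maximum, and the uniqueness claim is false. Already for a $2\times2$ block with $k=2$, matching $A_1$ to $B_2$, $A_2$ to $B_1$, $C_1$ to $D_2$, $C_2$ to $D_1$ on the horizontal edges and $A_i$ to $C_i$, $B_i$ to $D_i$ on the vertical edges gives two components $\{A_1,B_2,C_1,D_2\}$ and $\{A_2,B_1,C_2,D_1\}$, a stable configuration not equal to (and not reachable by entropy-preserving single swaps from) two straight copies. To salvage the argument you would need to prove that every max-entropy saturated configuration has each component containing exactly one monomer of each trace type; this is plausible for grids but requires a real argument you have not supplied. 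The paper never needs such a lemma, because the excess of caps gives directly $S(\beta)\le \#(\vvc,m_i)+\#cap(\vvc)$ for every saturated $\beta$, and the seed-to-end wrap-around domains then force the monomer content of any seed-containing polymer in a stable configuration.
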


\subsection{Overview of the proof of Theorem~\ref{thm:hard-coded-sim}}\label{sec:hard-code-overview}

In this section we give a brief overview of the construction used to prove Theorem~\ref{thm:hard-coded-sim}. Please see Section~\ref{sec:hard-coded-comp-append} for the full details.

The abstract Tile Assembly Model (see Section~\ref{sec:tam-informal} for a brief introduction) has been shown to be computationally universal \cite{Winf98}, and many aTAM results utilize a construction technique in which a Turing machine is simulated as a series of columns self-assemble in a zig-zag manner, with each successive column representing the contents of the tape, the state of the machine, and the location of the read-write head at successive steps in time. Despite the fact that the TBN model does not incorporate any geometry or any notion of time, we are still able to leverage the ideas of an aTAM zig-zag system simulating a Turing machine to design the domains and monomers of a TBN.

\begin{figure}[ht]
    \centering
    \includegraphics[width=4in]{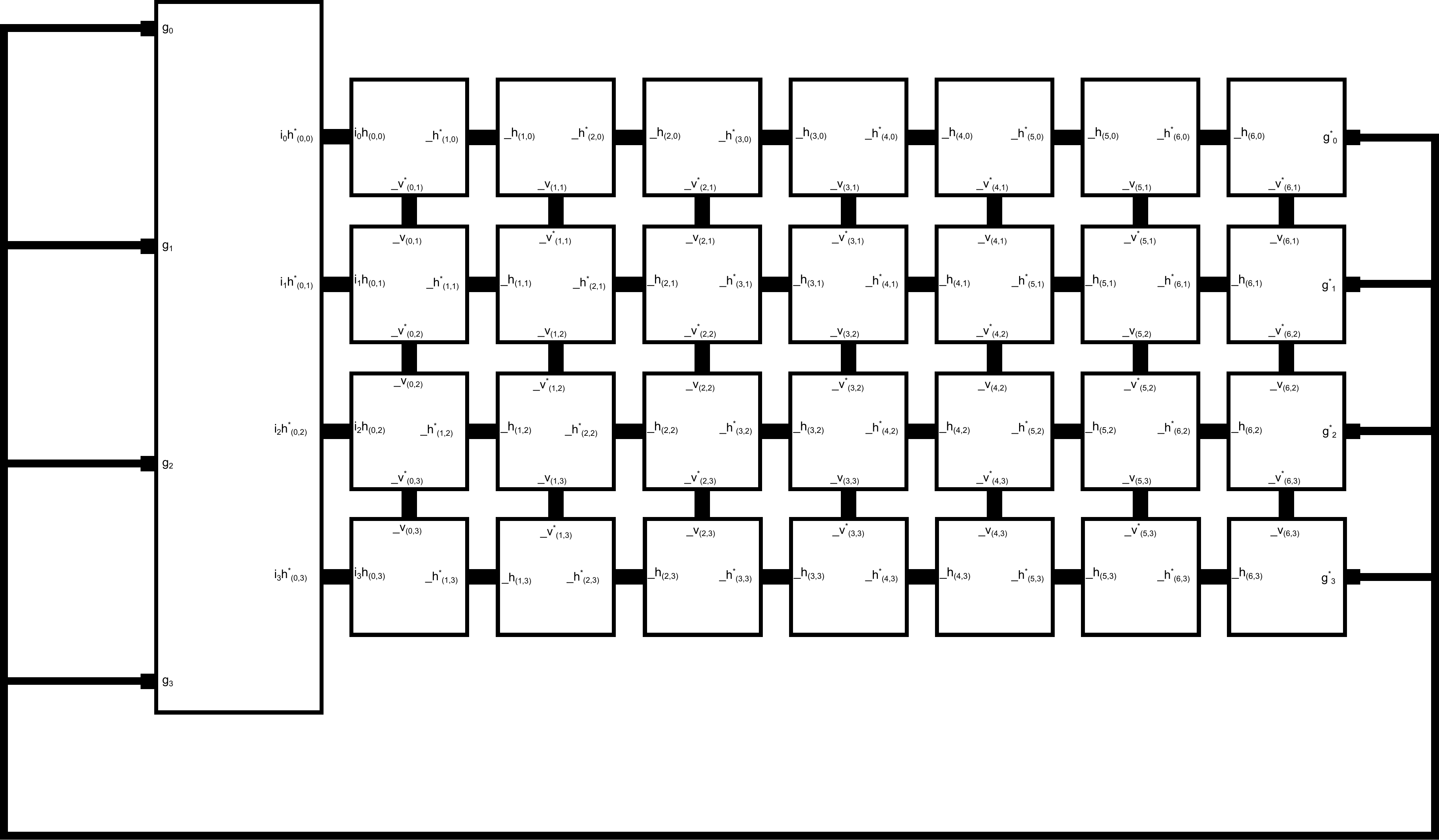}
    \caption{A schematic example of a polymer encoding a Turing machine simulation.}
    \label{fig:hard-coded-overview}
\end{figure}

Essentially, for each tile type of the aTAM system, we first design a monomer with $4$ domains roughly equivalent to the tile's glue labels.  Then, using the time and space bounds of the Turing machine to determine the height of each column and the number of columns, we make a copy of each of those monomers (and their domains) specific for each row and column location. The other main components are the \emph{seed monomer} which has a domain for each bit of input to the Turing machine, \emph{end monomers} which bind to the monomers representing the last column of the computation as well as to the seed monomer, and finally the \emph{cap monomers}.
There is a cap monomer $m_c$ specific to every monomer type $m$ (except the seed monomers), which has exactly one complementary domain for each primary domain type on $m$--- e.g., if $m = \{a, b, c*, d*\}$, the cap monomer for $m$ is $m_c = \{a*,b*\}$. The purpose of cap monomers is to make the attachment of a monomer onto the large TM simulating polymer ``entropy-neutral'': without capping monomers, binding a monomer to the large polymer reduces entropy by one; with capping monomers, the binding of the monomer to the large polymer implies the cap monomer is free, resulting in $0$ net entropy gain/loss.
Additionally, $s$ many end monomers must be bound to the seed monomer in order to maximize enthalpy, where $s$ is the space used by the TM simulation. The complete TM simulating polymer--- by virtue of binding its end monomers to the matching computation monomers as well as the seed --- implies a number of free end cap monomers equal to $s$, creating a net entropy gain of $s$, causing the correct simulating polymer to be in the stable configuration.

Although the number of unique monomer types for this construction is high, an important aspect of it is that, unlike previous results in the TBN model, the construction is robust to inexact counts of monomer types.  In fact, the only requirements for the counts of monomer types in a collection which correctly simulates the Turing machine is that the number of input monomers present is less than or equal to the number of any of the computation monomer or end monomer types, and that the number of each of the cap monomer types is greater than the number of any of the other monomer types.  Given any collection in which the counts of monomer types respect these ratios, that collection correctly simulates the Turing machine (following Definition~\ref{def:simulate-TM}) with $\mathcal{I}$ equal to the input monomer type and $\mathcal{O}$ equal to the end monomer types.  The single stable configuration of any such collection will include (1) a polymer for every copy of the input monomer which contains that input monomer as well as a full set of computation monomers which represent the entire computation and the output encoded by the end monomers (see Figure~\ref{fig:hard-coded-overview} for an example), (2) the leftover, unused computation and end monomers each in a polymer of size $2$ which also includes its unique cap monomer, and (3) the singleton cap monomers whose computation or end monomers are incorporated in the computation-simulating polymers.  Such a configuration is saturated and maximizes entropy over all saturated configurations, and thus is stable.

\section{Simulation of arbitrary Boolean circuits via TBN without a tile assembly pathway}
Under the TBN model--- with no consideration of a corresponding tile assembly system--- the TM simulation discussed in Section~\ref{sec:hard-coded-comp} is easily generalized to arbitrary fan-in fan-out Boolean circuits, mainly via the removal of planarity constraints imposed by the bonds of the aTAM.
One may suspect the removal of an accompanying aTAM system also removes the argument for a plausible kinetic pathway, yet the construction described here is similar enough to the system described in Section~\ref{sec:hard-coded-comp} to argue that a similar ``monomer-by-monomer attachment to a growing assembly'' pathway exists for this construction as well.

An overview of the construction is given here in an explicit example shown in Figure~\ref{fig:circuit}.
More discussion and details may be found in Section~\ref{sec:circuits-append}.
The use of seed monomers, capping monomers, and end monomers are effectively identical to the construction given in Section~\ref{sec:hard-coded-comp}.
The main difference in this construction is the construction of the computation monomers: for each gate $g$ of fan-in $i$ and fan-out $o$ which computes $f_g : \{0,1\}^i \rightarrow \{0,1\}^o$, we construct $2^i$ computation monomers--- one for each possible input to the gate.
The monomer corresponding to a particular input $s$ to the gate exposes domains corresponding to $f_g(s)$ which are complementary with the gate monomers for the gates in $C$ take input from $g$.

 \begin{figure}[ht]
 \centering
     \begin{subfigure}{0.36\textwidth}
         
         \includegraphics[width=1.0\textwidth]{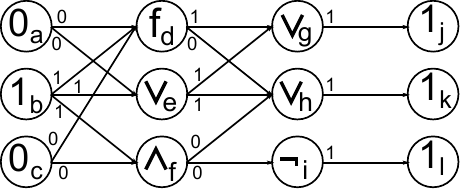}
         \caption{The subscript denotes the gate node; the left-most nodes are input nodes, the right-most nodes are output nodes, and the rest are gate nodes. $f_d(010) = 10$.}\label{fig:circuit_ex}
         \end{subfigure}
     \begin{subfigure}{0.60\textwidth}
         \includegraphics[width=1.0\textwidth]{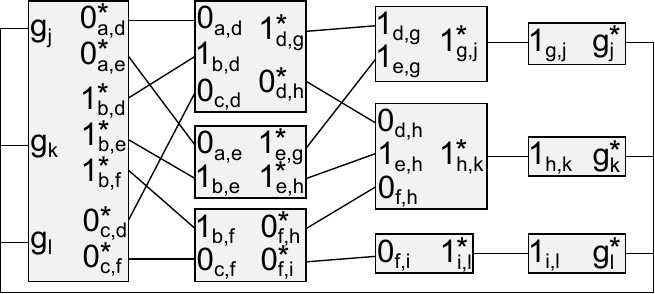}
     \caption{The polymer containing the seed in the stable configuration of the TBN simulating $C$.}
     \label{fig:circuit_polymer_ex}
     \end{subfigure}
 \begin{subfigure}{0.6\textwidth}
     \includegraphics[width=1.0\textwidth]{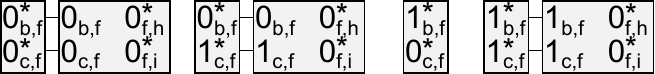}
     \caption{The capping and computation monomers corresponding to gate node f which are not in the polymer containing the seed.}
     \label{fig:circuit_polymer_ex}
     \end{subfigure}
     \caption{An example simulation of an arbitrary Boolean circuit C. (a) shows the circuit represented as a directed acyclic graph with edges corresponding to the input/outputs of the gate. (b) shows the TBN polymer which effectively simulates the circuit. (c) shows an example of the monomer set constructed for each gate in the circuit.}\label{fig:circuit}
 \end{figure}

\vspace{-.5cm}
\section{A negative result on porting computing systems from the aTAM to the TBN model}
\label{sec:aTAM-fail}

In this section we provide very high-level details of an argument which shows what occurs when standard aTAM systems which are designed to simulate Turing machines are treated as TBNs. (Due to space constraints, we present a high-level overview of the argument here, but details can be found in Section \ref{sec:aTAM-fail-details}.) This is similar to the TBN designed for the proof of Theorem~\ref{thm:hard-coded-sim} but without creating unique, hard-coded monomer types for each location.  

Let $M$ be an $s$-space-bounded Turing machine and $T_M$ be a standard zig-zag aTAM tile set which simulates $M$.  (Note that this argument will also apply when $M$ is not space-bounded.) For $n \in N$ where $n$ is a valid input to $M$, let $T_n$ be the set of ``input'' tile types which assemble the binary representation of $n$ as a vertical column to serve as the input to $T_M$.  Then the aTAM system $\calT_{M(n)} = (T_M \cup T_n, \sigma_n, 2)$, where $\sigma_n$ is simply the first tile of $T_n$ at the origin, simulates $M(n)$. We refer to the unique terminal assembly of $\calT_{M(n)}$ as $S_n$ (i.e. simulation $n$).  Additionally, if $X$ is a set of coordinate locations, by $S_n(X)$ we refer to the subassembly of $S_n$ contained at the locations of $X$. Let $i \ne j \ne k$ be valid inputs to $M$, and $S_i$, $S_j$ and $S_k$ be the terminal assemblies of $\calT_{M(i)}$, $\calT_{M(j)}$, and $\calT_{M(k)}$, respectively, such that the following conditions hold:

\begin{enumerate}
  \setlength{\itemsep}{0pt}
  \setlength{\parskip}{0pt}
    \item The outputs $M(i) \ne M(k)$
    \item There exist columns (i.e. sets of all tile locations in a given column) $c_1$ and $c_2$, and individual tile locations $l_1$ and $l_2$ in $c_1$ and $c_2$, respectively, such that:
    \begin{enumerate}\label{cond:matching-locs}
        \item $S_i(c_1) = S_j(c_1)$ (i.e. both columns have the exact same tile types in each location) except at location $l_1$, where they have differing tile types with different glues on their west sides (which would represent different cell values for the respective simulated tape cells of $M$)
        \item $S_j(c_2) = S_k(c_2)$ except at location $l_2$ where they have tile types which differ in their west glues
        \item $S_i(l_1) = S_k(l_2)$
        \item $S_j(l_1) = S_j(l_2)$
        \item $S_i(c_2) \ne S_k(c_2)$
    \end{enumerate}
\end{enumerate}

If a TBN is created to simulate $M$ using the same techniques as for the proof of Theorem~\ref{thm:hard-coded-sim}, these conditions of the computation being simulated on input $i$ allow ``splicing'' to occur between polymers which could represent computations on inputs $i$,$j$, and $k$ and still retain a polymer in which all domains are bound, and a configuration with maximum enthalpy and entropy which does not simulate $M(i)$. (See Figure~\ref{fig:comp-splicing-rewiring} for a schematic depiction.)  The conditions required by this argument are also relatively natural and likely to occur for sets of three inputs for a large number of computations, as they only require that across two pairs of three different inputs there are points at which the computations have nearly identical tape contents, and also a tape cell location whose value is changed at one point and then changed back to the former value later and the rest of the configuration matches across another pair of the two inputs.%
While this result does not show the impossibility of so-called efficient Turing machine simulation, it implies that, if possible, more innovative techniques will be required.  This leads us to our next result, in which we consider a variant of the TBN model which imposes geometry on the monomers and their bonds, and we demonstrate efficient Turing machine simulation. 

\begin{figure}
    \centering
    \includegraphics[height=1.5in]{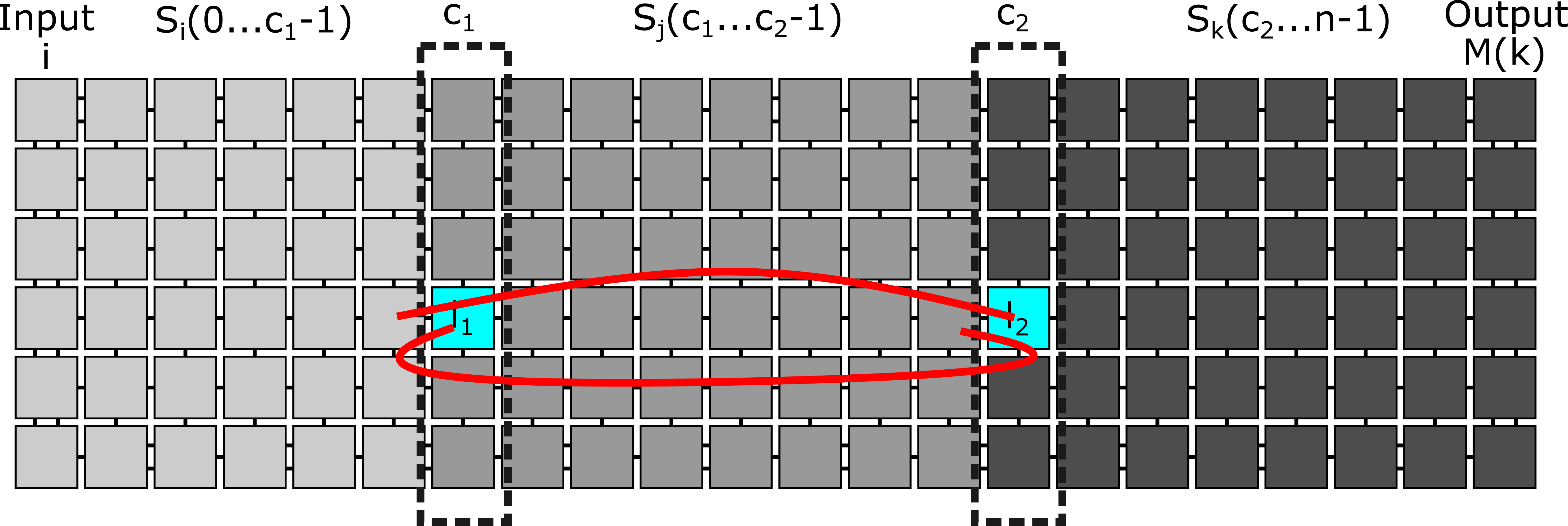}
    \caption{Schematic view of how portions of three computation-simulating polymers could be connected together to yield a polymer representing an invalid computation.}
    \label{fig:comp-splicing-rewiring}
\end{figure}

\section{Geometric Thermodynamic Binding Networks}\label{sec:GTBN}

The result of Section \ref{sec:hard-coded-comp} demonstrates that computation can be simulated by TBNs when they are composed of location-specific monomers for every location within the computation, which is quite inefficient compared to, for instance, the number of unique tile types required to simulate computations within the aTAM. However, Section \ref{sec:aTAM-fail} shows that treating aTAM tiles of computation-simulating systems directly as monomers of a TBN results in systems with many fewer monomer types, but which incorrectly simulate computations. Intuitively, the reason for the failure of such systems is due to the lack of geometry included within the TBN model, which allows for the domains of any monomer to bind to complementary domains of any other monomers, independent of the patterns of connections, which would not be the case if the monomers had to conform to geometric constraints on their sizes and locations.  In order to address this issue, in this section we introduce a refinement to the TBN model which includes such geometric constraints.
\vspace{-.3cm}
\subsection{GTBN Model definition}

We define the \emph{Geometric Thermodynamic Binding Networks} (GTBN) model to be an extension of the TBN model, with a few notable differences which restrict the ways in which polymers can form.  A GTBN is a pair $\calT = (\calD,\calM)$ consisting of a finite set $\calD$ of primary domain types and a finite set $\calM$ of monomer types.  But, rather than a monomer simply being defined as a multiset of domains, a \emph{geometric monomer type} $m \in \calM$ is instead defined as a polygon $p$, along with a set of pairs $(d,l)$ where $d \in \calD \cup \calD^*$ and $l \in \mathbb{R}^2$ is the point on the perimeter of $p$ where $d$ is located.\footnote{Note that the definitions can naturally be extended to 3D polyhedra.}  Geometric monomers are taken to be rigid polygons, and given a pair of geometric monomers, $m_1$ and $m_2$ where $(d_i,l_i) \in m_1$ and $(d_i^*,l_j) \in m_2$, if $m_1$ and $m_2$ can be positioned in the plane so that they do not overlap but the locations $l_i$ and $l_j$ on $m_1$ and $m_2$, respectively, are adjacent to each other, then those domains can bind. Bonds are rigid and therefore so are polymers formed by their binding. Geometric monomers and polymers can be translated and rotated (but not reflected), and can bind together if they can be positioned such that they do not overlap and complementary domains on their perimeters are adjacent.%
In this paper, we will only consider geometric monomers which are unit squares with at most a single domain on any face, located in the center of the face.  (Note that this is similar to tiles in the aTAM, but while the aTAM prevents tiles from rotating through two dimensional space, geometric monomers are allowed to within the GTBN.)  Thus, each monomer in a geometric polymer can be represented by a pair $(p,m)$ where $p \in \mathbb{N}^2$ represents the coordinates of the center of the geometric monomer and $m \in \mathcal{M}$ the monomer type, and a geometric polymer is a set of such pairs, and the \emph{geometric monomer binding graph} contains edges representing complementary domains which are adjacent to each other in some polymer.

A major difference between TBNs and GTBNs is that, due to geometric constraints, it is possible to have a configuration in a GTBN in which there exists an unbound domain $d$ on some monomer and an unbound domain $d^*$ on either that or another monomer, but $d$ and $d^*$ cannot bind together.  That is, it may be impossible for the monomers (or the polymers containing them) to be validly positioned so that the domains are adjacent.  Therefore, we define a condition of a GTBN configuration called \emph{effectively saturated} which occurs when the configuration either (1) is saturated, or (2) for all pairs of domains $d$ and $d^*$ such that both are unbound, there is no valid positioning of the monomers or polymers containing them such that $d$ and $d^*$ can be placed adjacent to each other (i.e. they are geometrically prevented from binding).

\vspace{-.2cm}
\subsection{Efficient simulation of Turing machines in GTBNs}

With the definition of the GTBN model, we are now able to prove that the geometric constraints of the model allow for efficient, accurate simulation of Turing machines.  In this section we present the theorem statement and a high-level overview of the proof, which is by construction.  For full details, please see Section~\ref{sec:GTBN-append}.

\begin{theorem}\label{thm:geo-TM}
Let $L \in DTIME(f(n))$ be a decidable language for arbitrary function $f$, and $M$ be a Turing machine which decides $L$. There exists a set of primary domain types $\mathcal{D}$, and sets of geometric monomer types $\mathcal{M}$, $\mathcal{M}_{seed}$, and $\mathcal{O} \subset \mathcal{M}$ consisting of geometric monomers with binding domains $\mathcal{D} \cup \mathcal{D}^*$ such that, for any valid input $i$ to $M$, the following properties hold:
\begin{enumerate}
    \itemsep0em 
    \item there exists a set of geometric monomer types $m_i \subset \mathcal{M}_{seed}$ such that $m_i$ collectively encodes $i$,
    \item for $\mathcal{M}_i$ = $\mathcal{M} \cup m_i$, there exists a geometric monomer collection $\vvc$ for GTBN $\mathcal{T}_i = (\mathcal{D},\mathcal{M}_i)$ such that $\vvc$ simulates $M$ on input $i$, and
    \item the set of output geometric monomer types for the simulation is equal to $\mathcal{O}$.
\end{enumerate}
\end{theorem}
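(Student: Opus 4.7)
The plan is to port a standard zig-zag aTAM Turing-machine simulator directly into the GTBN, relying on geometry to block exactly the splicing attacks that Section~\ref{sec:aTAM-fail} exhibited in the non-geometric TBN. Concretely, I would start from a constant-size tile set $T_M$ that simulates $M$ in a zig-zag fashion on input $i$. For each tile type $t\in T_M$ with glues $(g_N,g_E,g_S,g_W)$, define a unit-square geometric monomer with one primary or co-domain placed at the center of each corresponding face, chosen so that for every glue $g$ shared between two adjacent tiles in a legal zig-zag assembly, one side exposes a primary domain $a_g$ and the other its complement $a_g^*$. The seed set $\mathcal{M}_{seed}$ is a linear chain of $s$ geometric monomers encoding $i$ (padded with blanks), ending in a start-state marker, exactly mirroring the seed row of the aTAM system; the output monomer set $\mathcal{O}$ consists of the tiles that can carry $M$'s halting output in its final column.

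Following the template of the construction for Theorem~\ref{thm:hard-coded-sim}, I would then introduce, for each non-seed geometric monomer type $m$, a \emph{cap} monomer $m_c$ consisting of a single geometric monomer (a degenerate or small polygon) carrying the complements of all primary domains of $m$ and nothing else, so that whenever $m$ is fully bound into a polymer its cap is released as a singleton. I would also add end monomers that bind along the final column to provide the same entropic ``bonus'' of size $\Theta(s)$ that made the hard-coded construction stable. The monomer collection $\vvc$ would contain one seed complex, plus a number of copies of each computation monomer and each cap that respects the same ratio constraints described in Section~\ref{sec:hard-code-overview} (caps in strict excess, computation/end monomers at least matching the seed count).

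The intended stable configuration is then the geometric polymer that grows, monomer by monomer, exactly like the aTAM zig-zag assembly seeded by $i$: it is effectively saturated (every matched glue becomes a bond, and unmatched glues on its outer boundary have no accessible complement because all potential partners are inside bound caps), and its entropy equals $\#\text{caps}+\#\text{free end-caps}+1$. As in Section~\ref{sec:hard-code-overview}, binding any computation monomer into this polymer is entropy-neutral (its cap is freed), while binding each of the $s$ end monomers frees an end-cap \emph{and} is required to saturate the seed chain, producing a net entropy gain of $s$ over any configuration that fails to complete the simulation. A standard counting argument then shows that any effectively saturated configuration that does not contain the full simulation polymer has strictly smaller entropy.

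The main obstacle, and the place where the proof actually uses the new model, is ruling out ``non-aTAM'' polymers that would be admissible in a bare TBN: spurious bindings in which monomers are placed at positions not reachable by the zig-zag, or rotated/translated copies whose domains nevertheless match complementary partners. The geometric argument I would give has two pieces. First, rigidity and non-overlap: any connected polymer is a rigid planar arrangement of unit squares, so once a single monomer is pinned (e.g. by the seed chain), the positions of all other monomers it is bound to are forced. This immediately kills the splicing scenario of Section~\ref{sec:aTAM-fail}, since reconnecting portions of two distinct zig-zag computations would require two unit squares at the same coordinate. Second, rotation: because geometric monomers can rotate, I would break rotational symmetry by choosing the domain multiset on each tile's four faces so that no non-identity rotation sends the set of primary/complementary domain assignments to a configuration that could bind productively anywhere in the construction; concretely, giving each face a unique ``side-marker'' domain (paired globally with a single complementary marker carried only by the intended neighbor's opposite face) forces every monomer into its aTAM orientation. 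Combining rigidity, non-overlap, and orientation-locking reduces the analysis of stable GTBN configurations to the analysis of producible aTAM assemblies of the zig-zag system, from which correctness of the simulation on input $i$ follows from the standard aTAM Turing-machine simulation argument.
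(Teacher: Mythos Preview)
Your overall skeleton (port a zig-zag aTAM simulator, add caps, use rigidity to kill splicing) matches the paper's approach, but the step that actually makes the configuration stable has a genuine gap.

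You propose to create the entropy gap exactly as in the hard-coded TBN construction: end monomers in the final column bind back to the seed chain, so that completing the simulation frees $\Theta(s)$ end-caps. In the GTBN this is geometrically impossible. A GTBN polymer is a rigid planar arrangement of unit squares; once the seed is placed, the final column of the zig-zag sits at horizontal distance roughly $f(|i|)$ from it, and a bond can only join two faces that are physically adjacent. There is no way for an end monomer to be simultaneously adjacent to a face of the last computation column and to a face of the seed, so the ``end-monomer binds seed'' edges that generate the $\Theta(s)$ entropy bonus simply cannot exist. The paper calls this out explicitly and replaces the mechanism: it designs the monomers so that two complete computation polymers dock together head-to-tail (the top row of one becomes adjacent to the seed of the other), and this pairing step frees exactly one extra seed-cap per pair, yielding a $+1$ entropy gap over the fully capped state. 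Without some such local, geometry-respecting device for manufacturing an entropy advantage, your argument that ``any effectively saturated configuration that does not contain the full simulation polymer has strictly smaller entropy'' does not go through: every partial or complete single computation is only entropy-neutral relative to the fully capped state, so nothing forces the computation to assemble.

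Two smaller points are also worth fixing. First, a single unit-square cap cannot bind both input domains of a computation monomer, since those sit on two adjacent faces; the paper uses three-monomer L-shaped cap supertiles for this reason, and your ``degenerate or small polygon'' should be made concrete along these lines. Second, the theorem is stated for arbitrary decidable $L$, not a fixed space bound, so a seed padded to a fixed length $s$ does not suffice; the paper adds dedicated tape-extension supertiles that grow the tape by one cell every two rows, keeping the monomer-type set independent of $|i|$ and $f(|i|)$.
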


\begin{figure}
    \centering
    \includegraphics[width=\linewidth]{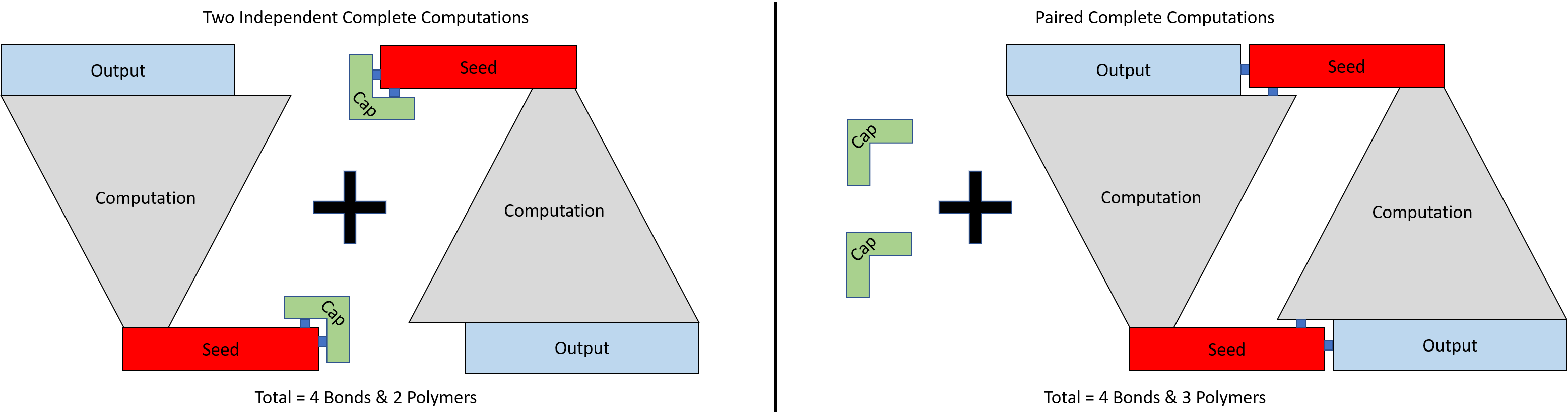}
    \caption{This illustrates the pairing of two complete computations. The dislodging of the seed caps creates an entropy bonus of one. This bonus is what makes the final complete paired construction favorable.}
    \label{fig:paired_computation_gap}
\end{figure}

The proof of Theorem~\ref{thm:geo-TM} is by construction, and begins similarly to the construction for the proof of Theorem~\ref{thm:hard-coded-sim}, with the creation of domains and (geometric) monomers of a GTBN $\calT$ based off of the definition of a zig-zag aTAM system $\calT_M$ which simulates the Turing machine $M$, with a few notable differences and stopping before the need to increase the size of the domain and monomer sets by creating copies hard-coded for each position in the simulation. %
We first note that the geometric monomers are all designed to simply be unit squares like the aTAM tiles, with single domains located in the center of faces to represent the tiles' glues. Since this construction doesn't require monomers hard-coded to locations, and in fact doesn't require a fixed number of rows or columns, it is able to simulate a tape of steadily increasing length and so utilizes collections of monomers that combine to extend the length of the tape. 

It is ensured that the deterministic path followed by the zig-zag aTAM system which simulates the same Turing machine is faithfully encoded by the resulting ``computation'' polymer of the single stable configuration by the geometric constraints placed on the positioning of geometric monomers and the rigidity of their bonds, which prevents erroneous ``re-wiring'' to occur as it could in the regular TBN example of Section~\ref{sec:aTAM-fail}.  However, in order to create an entropy gap which makes the configuration containing the correct computation simulations the single stable configuration, since we can no longer have bound domains which span the full distance of the polymer (as they do from the seed to end monomers in the proof of Theorem~\ref{thm:hard-coded-sim}), we instead provide an analogous method of freeing additional caps--- thus gaining entropy--- by designing the monomers so that polymers encoding the computation combine in pairs (as seen in Figure~\ref{fig:paired_computation_gap}).

Thus, an arbitrary halting Turing machine computation can be simulated efficiently in terms of domain and monomer type counts, both of which are $O(|Q||\Gamma|)$ (where $Q$ is the state set and $\Gamma$ is the tape alphabet).  As with the construction for the proof of Theorem~\ref{thm:hard-coded-sim}, this construction is robust over a class of configurations in which relationships exist between the counts of different categories of monomers. The inclusion of the fact that the language being decided $L \in DTIME(f(n))$ is simply to specify the count of computation monomers which must be included in the collection, relative to input seeds, to ensure that the computation can be completely represented without running out of monomers, i.e. $O(f(n)^2)$ copies of the computation monomers must be available per copy of the seed monomer.

\vspace{-5pt}
\section*{Acknowledgements}
\vspace{-5pt}
The authors would like to thank David Doty, Pierre{-}{\'{E}}tienne Meunier, David Soloveichik, Chris Thachuk, and Damien Woods for much help in developing these results.
\vspace{-10pt}

\bibliographystyle{plain}
\bibliography{tam,experimental_refs,tbn}

\clearpage
\appendix

\section{Informal description of the abstract Tile Assembly Model}
\label{sec:tam-informal}

This section gives a brief informal sketch of the abstract Tile Assembly Model (aTAM), which will be referred to in several sections. See \cite{RotWin00} for a formal definition of the aTAM.

A \emph{tile type} is a unit square with four sides, each consisting of a \emph{glue label}, often represented as a finite string, and a nonnegative integer \emph{strength}. A glue~$g$ that appears on multiple tiles (or sides) always has the same strength~$s_g$. %
There are a finite set $T$ of tile types, but an infinite number of copies of each tile type, with each copy being referred to as a \emph{tile}. An \emph{assembly}
is a positioning of tiles on the integer lattice $\Z^2$, described  formally as a partial function $\alpha:\Z^2 \dashrightarrow T$. %
Let $\mathcal{A}^T$ denote the set of all assemblies of tiles from $T$, and let $\mathcal{A}^T_{< \infty}$ denote the set of finite assemblies of tiles from $T$.
We write $\alpha \sqsubseteq \beta$ to denote that $\alpha$ is a \emph{subassembly} of $\beta$, which means that $\dom\alpha \subseteq \dom\beta$ and $\alpha(p)=\beta(p)$ for all points $p\in\dom\alpha$.
Two adjacent tiles in an assembly \emph{interact}, or are \emph{attached}, if the glue labels on their abutting sides are equal and have positive strength. %
Each assembly induces a \emph{binding graph}, a grid graph whose vertices are tiles, with an edge between two tiles if they interact.
The assembly is \emph{$\tau$-stable} if every cut of its binding graph has strength at least~$\tau$, where the strength   of a cut is the sum of all of the individual glue strengths in the cut.

A \emph{tile assembly system} (TAS) is a triple $\calT = (T,\sigma,\tau)$, where $T$ is a finite set of tile types, $\sigma:\Z^2 \dashrightarrow T$ is a finite, $\tau$-stable \emph{seed assembly},
and $\tau$ is the \emph{temperature}.
An assembly $\alpha$ is \emph{producible} if either $\alpha = \sigma$ or if $\beta$ is a producible assembly and $\alpha$ can be obtained from $\beta$ by the stable binding of a single tile.
In this case we write $\beta\to_1^\calT \alpha$ (to mean~$\alpha$ is producible from $\beta$ by the attachment of one tile), and we write $\beta\to^\calT \alpha$ if $\beta \to_1^{\calT*} \alpha$ (to mean $\alpha$ is producible from $\beta$ by the attachment of zero or more tiles).
When $\calT$ is clear from context, we may write $\to_1$ and $\to$ instead.
We let $\prodasm{\calT}$ denote the set of producible assemblies of $\calT$.
An assembly is \emph{terminal} if no tile can be $\tau$-stably attached to it.
We let   $\termasm{\calT} \subseteq \prodasm{\calT}$ denote  the set of producible, terminal assemblies of $\calT$.
A TAS $\calT$ is \emph{directed} if $|\termasm{\calT}| = 1$. Hence, although a directed system may be nondeterministic in terms of the order of tile placements,  it is deterministic in the sense that exactly one terminal assembly is producible (this is analogous to the notion of {\em confluence} in rewriting systems).

\section{Zig-zag tile assembly systems}\label{sec:zig-zag}

In this section we describe the general structure of a commonly used type of aTAM tile assembly system for simulating the behavior of Turing machines.  Modified versions of such systems will be used as the basis for our constructions.

A \emph{zig-zag} aTAM system is one which grows in a strict column-by-column (or row-by-row) ordering.  More specifically, the first column grows either bottom-to-top or top-to-bottom, completely, at which point the second column begins growth in the opposite direction.  When it completes, the third grows, again in reversed direction, and so on.

\begin{figure}[ht]
    \centering
    \includegraphics[height=1.5in]{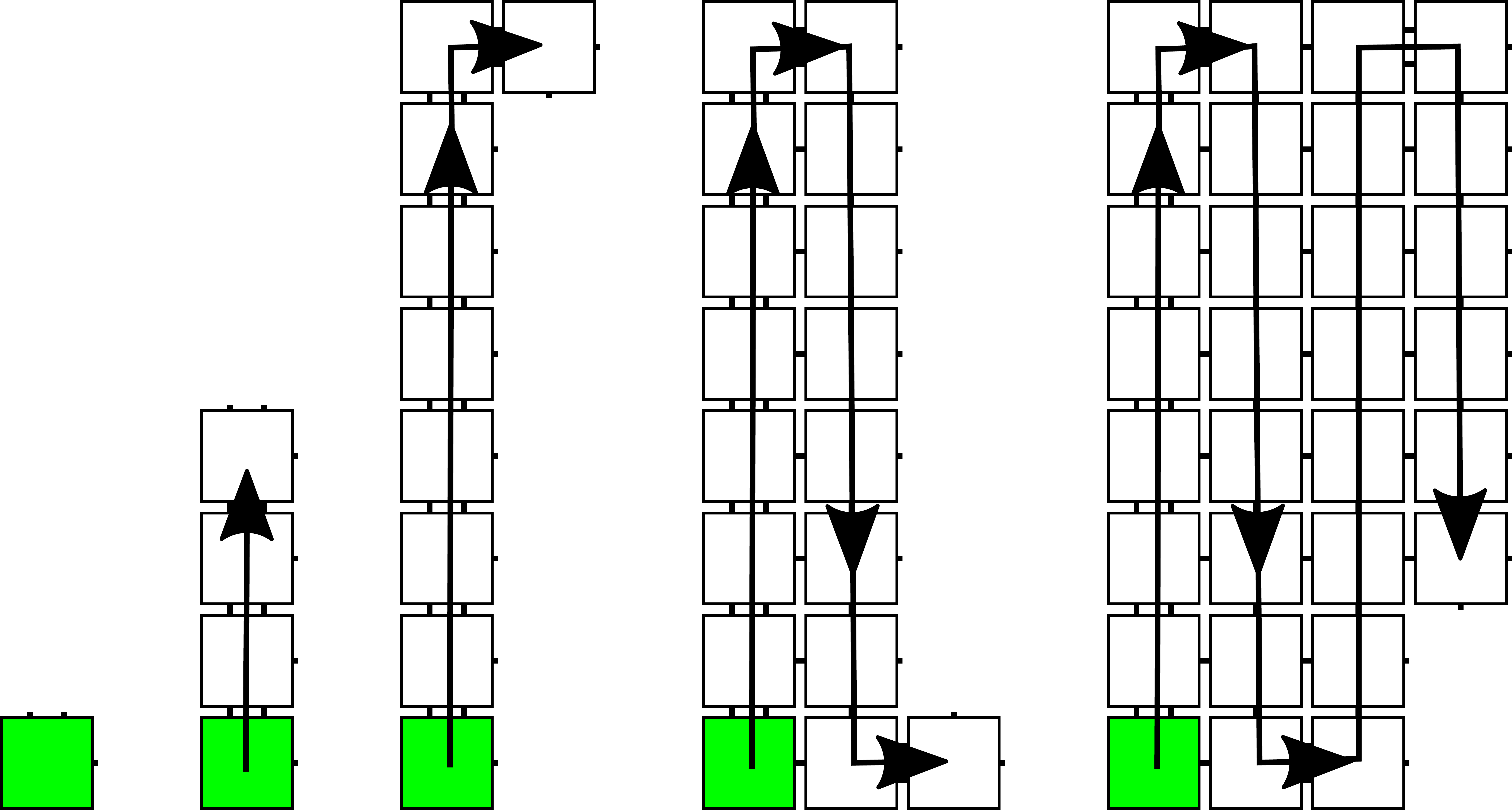}
    \caption{An example of zig-zag growth in the aTAM; the seed tile is shaded green.}
    \label{fig:zig-zag-example}
\end{figure}

Let $M = (Q, \Sigma, \Gamma, \delta, q_0, q_H)$ be an arbitrary Turing machine with state set $Q$, input alphabet $\Sigma$, tape alphabet $\Gamma$, transition function $\delta:(Q,\Gamma) \rightarrow (Q,\Gamma,D)$ (where $D$ is the set of directions $\{L,R\}$), start state $q_0$, and halting state $q_H$. It was shown in \cite{CookFuSch11} that for an arbitrary Turing machine $M$ there exists a zig-zag aTAM system which, for input $i$ to $M$, simulates $M(i)$.  %
We will first discuss the simulation of a space-bounded Turing machine, then explain how it can be modified for artibtrary Turing machines.  In general, a zig-zag aTAM system which simulates an $s$-space-bounded Turing machine $M$ on input $i$ works as follows, with the convention that each column represents a complete encoding of $M$'s tape (with each tile representing a tape cell) as well as the location of the read/write head and $M$'s current state.  First, a column of tiles which encode the bits of $i$ via their east-facing glues, along with enough additional tiles to represent blank tape cells for the remaining amount of tape space (i.e. $s - |i|$) self-assembles. Additionally, exactly one of the east-facing glues encoding an input bit also encodes the start state of $M$, and this is how the location of the read/write head is represented, as well as $M$'s state. We call these tiles the \emph{input tiles}. (For an example, see the left section of Figure~\ref{fig:zig-zag-transition} which shows the initial column of a simulation with input $1011$, space bound $6$, the tape head on the leftmost - topmost in this orientation - input symbol, and the initial state being state $A$.) As each subsequent column grows, in alternating order, most tiles simply copy the value of their corresponding tape cells forward to the right. (We call these tiles the \emph{cell copying tiles}.)  However, at the location representing the tape head, if $M$'s next transition moves the tape head in the same direction in which that column is growing, the transition is simulated.  This means that the value of the tape cell is changed if necessary, and the vertical glue encodes the state of the transition into which $M$ transitions.  (This can be seen in the middle portion of Figure~\ref{fig:zig-zag-transition} which simulates transition $(A,1) \rightarrow (A,1,R)$.) If the column happens to be growing in the direction opposite in which the tape head needs to move, no transition is simulated until the next column forms, as it will then be growing in the correct direction. We call the tiles that encode the tape cell head and $M$'s state the \emph{transition tiles}. (An example can be seen in the right portion of Figure~\ref{fig:zig-zag-transition}.)  At the top and bottom of the columns, the glues encode not only the tape cell value, but also the information that they are at the end of a column, and when they are the last tile to be placed in a column they are designed with a strength-2 glue which initiates the growth of the next column.  We call these tiles the \emph{column end tiles}.

\begin{figure}[ht]
    \centering
    \includegraphics[height=2.5in]{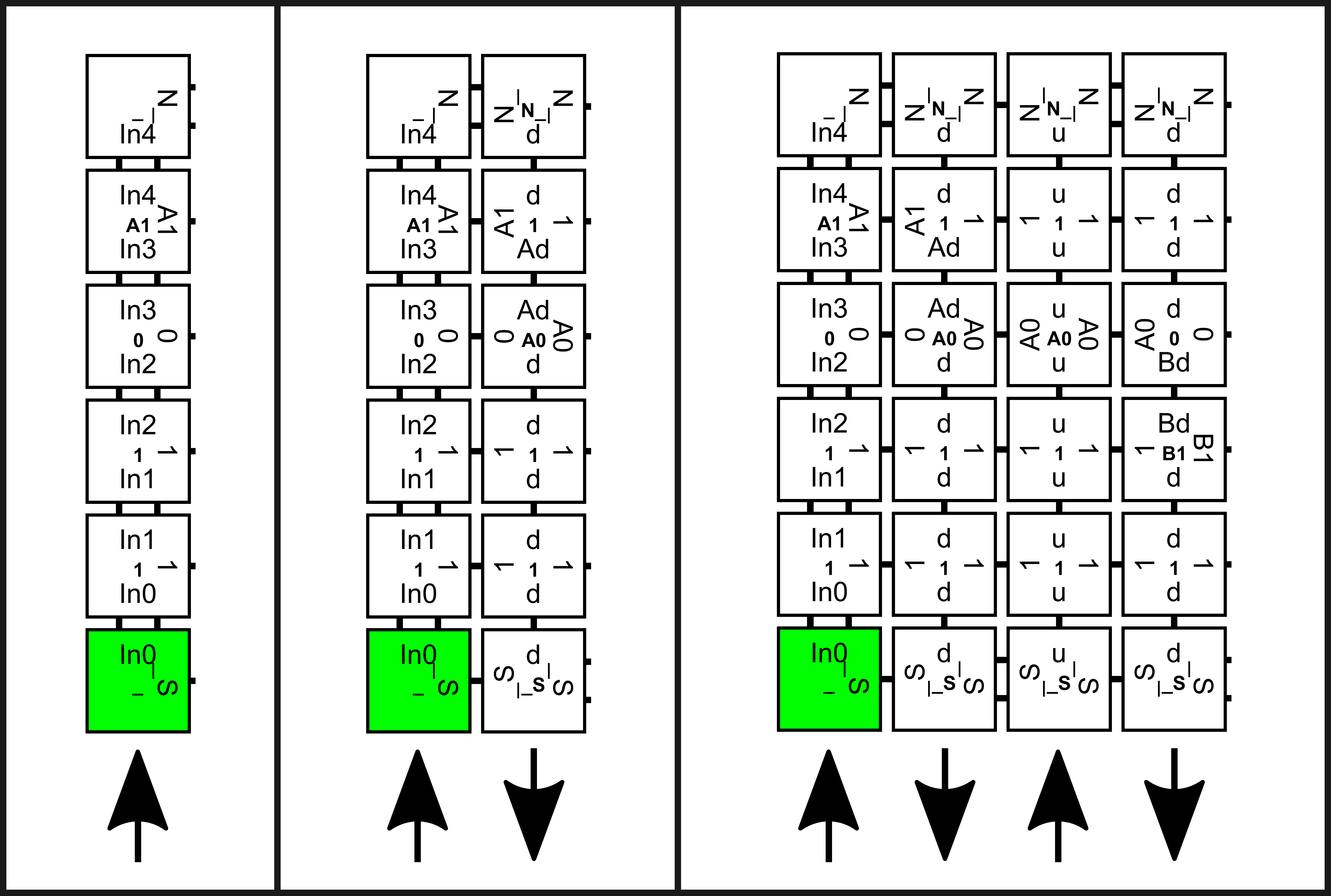}
    \caption{An example of zig-zag growth in the aTAM.}
    \label{fig:zig-zag-transition}
\end{figure}

In order to perform a simulation which is not space bounded, each column simply extends its height by adding one additional tile (which represents a new blank tape cell) at the end of the column beyond the height of the previous column before turning to begin the next.  This produces a tape which add cells more quickly than they can be visited by the tape head, effectively simulating an unbounded tape.  In either case, bounded or unbounded, once the simulation reaches a transition in which $M$ transitions into a halting state, the currently forming column completes and no further tiles attach. (We call the tiles in the last column the \emph{output tiles}.) Also note that, in either case, the \emph{tile complexity}, i.e. the number of unique tile types required, for such a system is simply $O(|Q||\Gamma|)$, which is quite efficient as it is independent of the time or space requirements of the computation.

\section{Technical Details of Simulating Space-Bounded Turing Machines}\label{sec:hard-coded-comp-append}

\begin{proof}
We prove Theorem~\ref{thm:hard-coded-sim} by construction. Let $M = (Q, \Sigma, \Gamma, \delta, q_0, q_H)$ be an arbitrary $s$ space-bounded, $t$ time-bounded Turing machine to simulate, with state set $Q$, input alphabet $\Sigma$, tape alphabet $\Gamma$, transition function $\delta:(Q,\Gamma) \rightarrow (Q,\Gamma,D)$ (where $D$ is the set of directions $\{L,R\}$), start state $q_0$, and halting state $q_H$. We will define $\calT_i$ based on the definition of $M$, as well as the space and time bounds $s$ and $t$, by creating groups of monomers which are logically grouped by functionality into a few main components, which we will now present.

\subsection{Construction components}\label{sec:construction-comps}

The monomers of $\calT_i$ can be logically grouped into four categories. Examples of each these categories monomers can be seen in Figures~\ref{fig:capping1} through~\ref{fig:capping4}. We now describe each monomer category.

\begin{enumerate}
    \item The \emph{seed monomers}:   Referred to as $m_i \in \mathcal{M}_{seed}$ for each valid input $i$, let $i_n$ be the value of the $n$th bit of $i$. For each value of $0 \le n < |i|$, define domain type $d_n = i_nh_{(0,n)}$ and add $d_n$ to $\mathcal{D}$.  We construct monomer $m_i$ by including a domain of the codomain type for each of those domain types.  To $m_i$ we also add ``$g$'' domains which bind to the $\mathcal{M}_{end}$ monomers (defined in~\ref{item:end-monomer}), and we call these \emph{helper domains}. For each value of $0 \le n < |i|$, define domain type $g_n$ and add $g_n$ to $\mathcal{D}$ and a copy of $g_n$ to $m_i$.

    \item The \emph{computation monomers}: We'll call this group of monomers $\mathcal{M}_{comp}$.   Let $\calT_M = (T_M, \sigma_i, 2)$ be a zig-zag aTAM system which simulates $M(i)$. (Please see Section~\ref{sec:zig-zag} for a description of how $\calT_M$ is created for $M$.  Also note that we augment the definition of $\calT_M$ such that once $M$ enters the halting state $q_H$, $\calT_M$ continues growing columns which keep the head in its current location and all tape call values unchanged, essentially performing ``no-op'' transitions.\footnote{This convention is for ease of discussion related to how the computation-simulating polymer completes, ensuring that it grows through the full time bound.  However, for more efficient monomer usage but slightly higher monomer type count, the system could be designed so that the polymer growth stops at the actual length of each specific computation.} For each $t \in T_M$, if $t$ attaches in a column which is growing from bottom to top, we consider the \emph{input} sides to be the south and west, and if $t$ attaches in a column which is growing from top to bottom, the input sides are its north and west (since these are the sides which bind as it first attaches).  Add the labels of $t$'s input glues to the set of primary domain types $\mathcal{D}$, but append $v$ if it's a north or south glue, and $h$ if it's a west glue. The strengths of the glues are disregarded.  Note that for each \emph{output} (i.e. non-input) glue of $t$, those will be input glues of some other tile types and thus will be added to $\mathcal{D}$, and $\mathcal{D}^*$ will contain the complement of everything in $\mathcal{D}$.  (The first two steps of Figure~\ref{fig:domain-renaming} depict this transformation.) 
    
\begin{figure}[ht]
    \centering
    \includegraphics[width=5in]{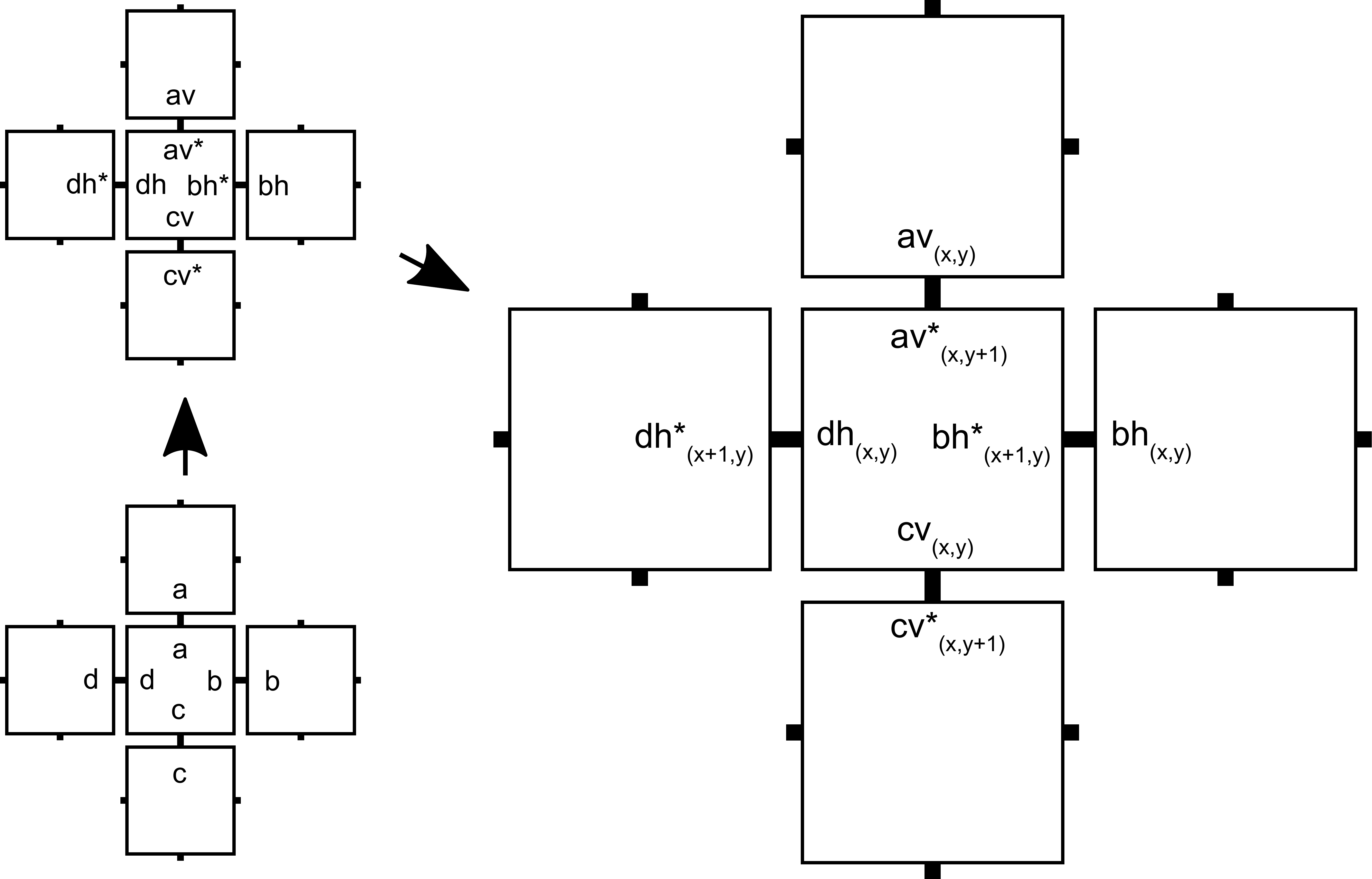}
    \caption{Depiction of how the domains for a computation monomer are created from the glue labels of a computation tile (center of each) for a column which grows bottom-to-top (i.e. has input sides on the west and south). First, ``$h$'' is appended to horizontal glues and ``$v$'' to vertical glues, and output glues have ``$*$'' appended to designate them as codomains.  Next, to make a hard-coded version for each row and column location, the appropriate $x$ and $y$ coordinates are added as subscripts so they can bind to the correct neighbors.}
    \label{fig:domain-renaming}
\end{figure}

    For each $t \in T_M$ where $t$ attaches in a column which grows bottom-to-top, if it has glues $g_n,g_e,g_s,g_w$ as its north, east, south, and west glues, respectively,  define monomer type $C_t \in \mathcal{M}_{comp}$ as $(g_nv*,g_ev*,g_sv,g_wv)$, i.e. using primary domain types to represent the input (south and west glues) and codomain types to represent output (north and east glues). For each $t \in T_M$ where $t$ attaches in a column which grows top-to-bottom, the input changes from south to north and thus $C_t \in \mathcal{M}_{comp}$ is defined as $(g_nv,g_ev*,g_sv*,g_wv)$.

    \item The \emph{end monomers}: We'll call this group of monomers $\mathcal{M}_{end}$. For each possible value  $s \in \{0,1,\_,q_H0,q_H1,q_H\_\}$ in the locations of the final computation column, there exists $m_{end,s_n} \in \mathcal{M}_{end}$ which has domains $s$, $g^*$, $up$, and $down^*$ (corresponding to input sides north and west since these will correspond to a column which grows top-to-bottom because there will always be an even number of computation columns in this construction and the first grows bottom to top).  These will be the monomers which attach to the final column of computation monomers which simulate $M(i)$, represent the output of the computation $M(i)$ since $\mathcal{O} = \mathcal{M}_{end}$, and also bind (via the $g^*$ domains after they are subscripted for their position in the next section) to the seed monomer $m_i$. \label{item:end-monomer}

    \item The \emph{capping monomers}: The set of capping monomers, $\mathcal{M}_{cap}$, contains one monomer $G_k \in \mathcal{M}_{cap}$ for each computation or end monomer $C_k \in \mathcal{M}_{comp} \cup \mathcal{M}_{end}$. Each $G_k$ contains one domain which is complementary to each input domain on $C_k$, i.e. capping monomer $G_k$ has exactly two domains and they are complementary to the two input domains of monomer $C_k$.%
    We call $G_k$ the \emph{matching} capping monomer for $C_k$. 
\end{enumerate}

\subsection{Hard-coding locations}\label{sec:construction-explosion}

This construction ensures that for each monomer type which binds in the polymer simulating the computation, there is only one copy in that polymer.  This means that for each location of a tile in the terminal assembly of $\calT_M$, we must make a copy of each monomer type of $\mathcal{M}_{comp}$ which is unique for that location.  (Note that we could be a bit more selective about how we create these unique copies since not every tile type has a chance to go in each location, e.g. those for rows which grow top-to-bottom could never appear in a row growing bottom-to-top.  However, this does not change the monomer complexity asymptotically, so for ease of discussion we will perform the more naive approach.)

Given the space bound $s$ and time bound $t$, and noting that in the worst case it could take two columns of growth to simulate a single transition of $M$ (i.e. if the next transition needs to move the head in the opposite direction that the current column is growing, the column will complete without performing a transition and the next column will simulate the transition), we will need to create a $2t \times s$ grid. Therefore let $\mathcal{D}_{xy} = \{d_{(x,y)} \vert d \in \mathcal{D}, x \in \{0,1,...,2t-1\}, y \in \{0,1,...,s-1\}\}$, i.e. for each domain $d \in \mathcal{D}$, the new set $\mathcal{D}_{xy}$ contains a copy of the original domain subscripted with $(x,y)$ for each $(x,y)$ coordinate pair.  Then for each $m = (d_1v,d_2h,d_3v,d_4h) \in \mathcal{M}_{comp}$, $x \in \{0,1,...,2t-1\}$, and $y \in \{0,1,...,s-1\}$, $\mathcal{M}_{comp,xy}$ contains $m_{xy} = (d_1v_{(x,y)},d_2h_{(x,y)},d_3v_{(x,y+1)}, d_4h_{(x+1,y)})$.  (The last step of Figure~\ref{fig:domain-renaming} depicts this transformation.)  Additionally, the monomers of $\mathcal{M}_{end}$ are expanded to $\mathcal{M}_{end,xy}$ so that there is a copy for each domain other than $g^*$ subscripted with the x-coordinate $2t-1$ (to attach to the east of the final computation column) and y-coordinate corresponding to each of the rows, and the $g^*$ subscripted just with the row values (to yield codomains $g^*_n$ for $0 \le n < |i|$, allowing them to bind to the domains of $m_i$)..

Similarly, the full set of capping monomers, $\mathcal{M}_{cap,xy}$, is created by making a monomer type for each $m \in \mathcal{M}_{cap}$ in each $(x,y)$ position.

The final set $\mathcal{M} = \mathcal{M}_{comp,xy} \cup \mathcal{M}_{end,xy} \cup \mathcal{M}_{cap,xy}$.

\subsection{Proof of Theorem~\ref{thm:hard-coded-sim}}\label{sec:hard-code-proof}

\subsubsection{Notation}\label{sec:hard-coded-notation}

In this section we introduce some notation used throughout the proof.

Let $\#(\vvc,m)$ be the count of the monomers of type $m$ in $\vvc$, and  $c_{min} = min\{\#(\vvc,m) | m \in \mathcal{M}_{comp, xy} \cup \mathcal{M}_{end, xy}\}$ (i.e. the number of monomers with computation or end monomer type with the fewest number of monomers in $\vvc$), and $k_{min} = min\{\#(\vvc,m) | m \in \mathcal{M}_{cap, xy}\}$. Let $\#comp(\vvc) = \Sigma_{m\in \calM_{comp, xy}} \#(\vvc, m)$, and let $\#end(\vvc) =  \Sigma_{m\in \calM_{end, xy}} \#(\vvc, m)$.

We call monomers with type in $\calM_{comp,xy}$ \emph{computation monomers}, monomers with type in $\calM_{cap,xy}$ \emph{capping monomers}, monomers with type in $\calM_{end,xy}$ \emph{end monomers}, and monomers with type in $\calM_{seed}$ \emph{seed monomers},

Note that by construction of $\calM$, domains and codomains have been encoded with pairs of integers $(x,y)$ corresponding to locations in $\Z^2$. We define the partial function $loc$ with domain in $\calD\cup \calD^*$ and range in $\Z^2$ to be such that $loc(d) = (x,y)$ if the domain or codomain $d$ encodes $(x,y)$ and undefined otherwise. Domains and codomains also encode a horizontal or vertical orientation using the symbols $h$ and $v$. We define a partial function $or$ with domain in $\calD\cup \calD^*$ and range in $\{h,v\}$ to be such that $or(d) = h$ if $d$ encodes $h$, $or(d) = v$ if $d$ encodes $v$, and undefined otherwise. For a computation monomer $m$, we let $loc(m) = (x,y)$ iff $m$ contains a domain $d$ with $loc(d) = (x,y)$.

For a polymer $p$, let $x_{max}(p) \in \N$ be such that there exists $y\in \N$ and a domain $d$ belonging to a monomer in $p$ with $loc(d) = (x_{max}(p), y)$, and moreover, for any domain $d'$ belonging to a monomer in $p$, if $loc(d') = (x', y')$, then $x' \le x_{max}(p)$. Similarly, let $y_{max}(p) \in \N$ be such that there exists $x\in \N$ and a domain $d$ belonging to a monomer in $p$ with $loc(d) = (x, y_{max}(p))$, and moreover, for any domain $d'$ belonging to a monomer in $p$, if $loc(d') = (x',y')$,  $y' \le y_{max}(p)$.
Also, let $L(p)$ be $\{(x,y) | \exists\ m \in p \text{ and } d\in m \text{ with } loc(d) = (x,y) \}$. 
We now define a relation on the pairs in $L_{p}$. For $(x,y), (x',y')\in L_p$, $(x,y) <_{zz} (x',y')$ iff 1) $x < x'$, 2) $x=x'$, $x$ is even, and $y < y'$, or 3) $x=x'$, $x$ is odd, and $y> y'$.
In addition, for computation or end  monomers $m$ and $m'$, we write $m <_{zz} m'$ iff $m$ contains a domain $d$ and $m'$ contains a domain $d'$ such that $loc(d)<_{zz} loc(d')$.

\subsubsection{Monomer collection $\vvc$}

We now define a class of monomer counts which serve to satisfy the constraints necessary for our construction to correctly simulate $M$: in $\vvc$ it must be that the number of seed monomers $\#(\vvc,m_i) \le c_{min} \le k_{min}$.

\subsubsection{A stable configuration of $\vvc$}

We now consider the configuration of $\vvc$, which we call $\alpha$, that consists of exactly 4 types of polymers describe next.
\begin{enumerate}
    \item Saturated polymers consisting of a single seed monomer, $s\cdot 2t$ computation monomers corresponding to a correct and complete computation to $M$ on input $i$, and $s$ end monomers such that all domains are bound to a corresponding codomain. We call such a polymer a \emph{computation polymer}. \label{item:comp-polymers}
    \item Polymers that consist of a single capping monomer with two codomains bound to two domains of a single computation monomer. 
    \item Polymers that consist of a single capping monomer with two codomains bound to two domains of a single end monomer.  
    \item Polymers that are single capping monomers. 
\end{enumerate}
 We will  prove that $\alpha$ is stable, and then show that $\alpha$ is the only stable configuration of $\vvc$. We begin by showing that $\alpha$ is saturated.

\begin{claim}\label{clm:saturated}
    $\alpha$ is saturated and $H(\alpha) = 2\cdot \#comp(\vvc) + 2\cdot \#end(\vvc) + s$
\end{claim}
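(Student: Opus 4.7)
The plan is to prove the two parts of the claim in turn: first saturation, by a domain-by-domain audit of $\alpha$; then the enthalpy formula, by aggregating bond counts across the four polymer classes that comprise $\alpha$.

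For saturation, I would first verify that each computation polymer realizes a complete binding structure mirroring the zig-zag aTAM $\calT_M$: the seed's $d_n^*$ codomains bind to the west primary domains of the column-$0$ computation monomers; adjacent computation monomers bind horizontally between consecutive columns and vertically within each column exactly as in $\calT_M$; the final column's east codomains bind to the end monomers' west primary $s$ domains; the end monomers chain vertically via their $up$/$down^*$ domains; and the seed's $g_n$ primary domains bind to the end monomers' $g_n^*$ codomains, closing the loop. In each capping polymer, the two codomains of the cap bind to the two input primary domains of its matched monomer by construction. I would then enumerate all remaining unbound domains in $\alpha$ and argue that none admits an unbound complementary partner. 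The key lever is the location-subscripting from Section~\ref{sec:construction-explosion}: every primary domain and codomain type carries a unique $(x,y)$ coordinate and orientation, so for any unbound output codomain on a capped monomer, the only monomers bearing the complementary primary are those hard-coded for one specific adjacent position, each of which is already either inside a computation polymer (bond realized with its neighbour) or in a capping polymer (bond realized to its cap). The same argument handles the unbound codomains on singleton capping monomers, whose matching primaries live on monomers already paired off.

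For the enthalpy formula, I would tally binding edges polymer class by class. Singleton capping polymers contribute $0$; each capping polymer containing a computation or end monomer contributes $2$, one per cap codomain. For each computation polymer I would sum the seed-to-column-$0$ bonds, the inter-column and intra-column bonds among the $2st$ computation monomers, the final-column-to-end-monomer bonds, the end-monomer chain bonds, and the seed-to-end-monomer $g_n$--$g_n^*$ closure bonds, arriving at a closed-form count per polymer. Aggregating across the $\#(\vvc, m_i)$ computation polymers, the $\#comp(\vvc) - 2st\cdot \#(\vvc, m_i)$ leftover capped computation monomers, and the $\#end(\vvc) - s\cdot \#(\vvc, m_i)$ leftover capped end monomers, the ``two bond-endpoints per computation or end monomer'' contributions combine to give $2\#comp(\vvc) + 2\#end(\vvc)$ after cancellation, while the residual from the seed-closure contributes the remaining additive $s$ as asserted.

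The main obstacle will be the careful bookkeeping at the column and end-monomer boundaries, where exterior-facing sides of column-end computation monomers or of the extremal end monomers must be handled without contributing stray unbound domains or miscounted bonds. I expect to resolve this by leaning on the underlying zig-zag template: such exterior sides either carry the null glue and thus produce no domain in the TBN translation, or they bear uniquely subscripted domain types whose complementary partners live on a small, enumerable set of monomers already accounted for in the audit. This both closes the saturation argument and ensures the residual bond count aggregates to the formula in the claim.
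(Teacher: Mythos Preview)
Your approach is correct but takes a genuinely different route from the paper. Instead of auditing $\alpha$ directly, the paper defines a sequence of configurations $\alpha_0, \alpha_1, \ldots$ terminating in $\alpha$, where $\alpha_0$ is the ``fully capped'' configuration: every computation and end monomer is bound to its matching cap, and every seed is bound to $s$ end monomers via the $g$-domains. Saturation of $\alpha_0$ is immediate (every primary domain is bound), and its enthalpy is read off as $2\#comp(\vvc) + 2\#end(\vvc) + s$. Each step $\alpha_{i-1} \to \alpha_i$ breaks exactly two bonds (a cap releasing a monomer) and forms exactly two new ones (that monomer attaching to the growing seed polymer), so $H(\alpha_i) = H(\alpha_{i-1})$ throughout, and both saturation and the enthalpy value transfer from $\alpha_0$ to $\alpha$ for free. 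Your direct domain-by-domain audit and per-polymer bond tally are perfectly valid, but the paper's bond-swap invariant buys exactly the thing you identify as your main obstacle: all the delicate boundary bookkeeping (column-end null glues, extremal end monomers, orientation at turns) is avoided because the count is performed on the trivial $\alpha_0$ rather than on $\alpha$. As a bonus, the sequence also furnishes a plausible monomer-by-monomer kinetic pathway, which the paper exploits elsewhere.
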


\begin{proof}(Proof of Claim~\ref{clm:saturated})
First, let $n = s\cdot 2t + s$ denote the number of monomers with type in $\calM_{comp, xy}\cup \calM_{end, xy}$ contained in a computation polymer of $\alpha$. Then to show that $\alpha$ is saturated, we define a sequence of configurations $\alpha_i$ for $i\in \N$ between $0$ and $n+2$, where $\alpha_{n+2} = \alpha$, $\alpha_0$ is saturated, and $\alpha_i$ is saturated if $\alpha_{i-1}$ is saturated.  We will describe $\alpha_1$ and define $\alpha_i$ recursively so that $\alpha_i$ is obtained from $\alpha_{i - 1}$ for $1\leq i \leq n$.  Then, we define $\alpha_{n+1}$ and $\alpha_{n+2}$ based on $\alpha_n$ and $\alpha_{n+1}$ (respectively) as a special case. 

\begin{figure}[ht]
    \centering
    \includegraphics[width=4.0in]{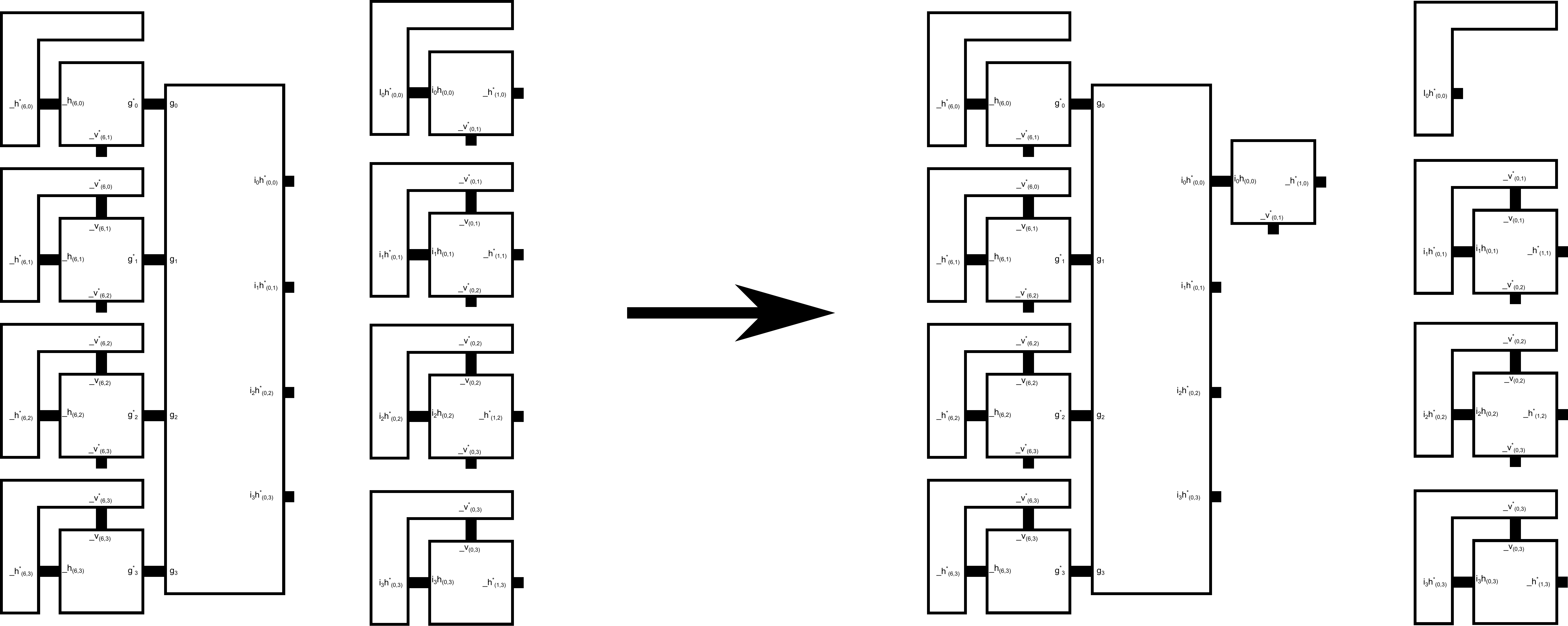}
    \caption{A schematic example of of $\alpha_{0}$ when $s = 4$ and $t=3$. Domain symbols corresponding to the glues of a tile set simulating the Turing machine have been suppressed and the symbol ``\_'' is shown. (Left) A depiction of a polymer containing a seed monomer of $\alpha_{0}$ and 4 polymers of $\alpha_{0}$ consisting of a single computation monomer and a single capping monomer. The polymer containing a seed monomer also contains 4 end monomers as shown, and the binding domains of each of these end monomers is bound to a codomain of a capping monomer. (Right) An example of of $\alpha_{1}$.}
    \label{fig:capping1}
\end{figure}

We now define $\alpha_0$. Shown in Figure~\ref{fig:capping1}, $\alpha_0$ is the configuration where each computation monomer and each end monomer each have two binding domains bound to two binding domains of a single capping monomer. In addition, all of the $s$ ``wrapping'' domains of the seed monomer are bound to a codomain of an end monomer. Note that $\alpha_0$ exists and is saturated as 1) the number of capping monomers matching a computation or end monomer is assumed to be greater than or equal to the said computation or end monomer, and 2) the number of end monomers with any type in $\calM_{end,xy}$ is greater than the number of seed monomers. We note that $H(\alpha_0) = 2\cdot \#comp(\vvc) + 2\cdot \#end(\vvc) + s$. This enthalpy count follows from the fact that each computation monomer or end monomer is bound to a single capping monomer and each seed monomer is bound to $s$ end monomers.

\begin{figure}[ht]
    \centering
    \includegraphics[width=4.0in]{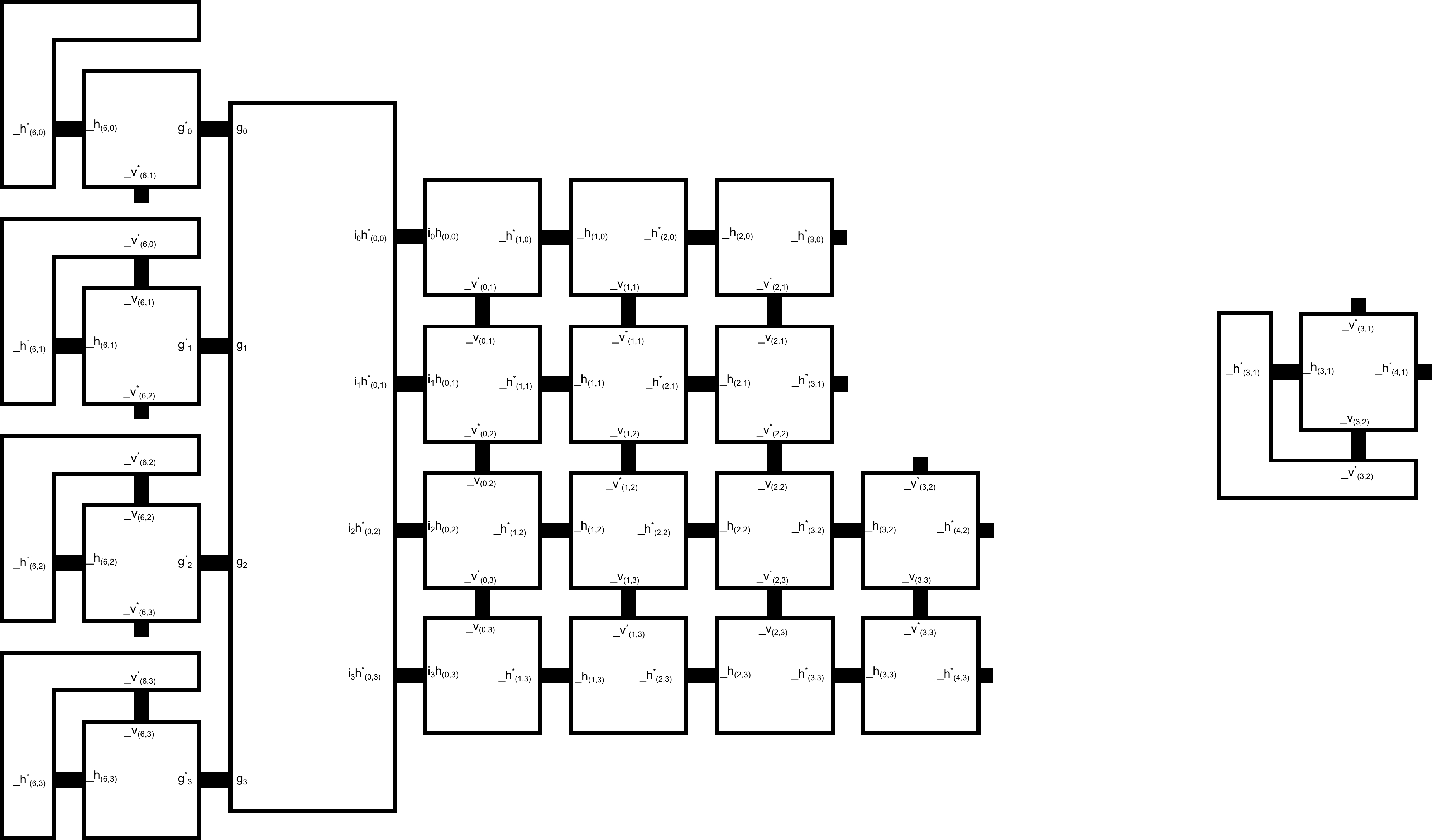}
    \caption{An exmaple depiction of $\alpha_{14}$ where $s = 4$ and $t=3$. Domain symbols corresponding to the glues of a tile set simulating the Turing machine have been suppressed and the symbol ``\_'' is shown. In $\alpha_{14}$, any monomer containing a seed monomer also contains $14$ computation monomers. A polymer consisting of a single computation monomer and a single capping monomer is also shown. Assuming that this computation monomer has appropriate domains, in order to obtain $\alpha_{15}$, the bonds between this computation monomer and capping monomer are broken, and two new bonds are formed between the domains of the computation monomer and codomains of monomers belonging to the polymer containing the seed monomer. Such ``bond swapping'' is performed for all such seed-monomer-containing polymers of $\alpha_{14}$ to obtain $\alpha_{15}$.}
    \label{fig:capping2}
\end{figure}

Now we recursively define $\alpha_i$. 
Let $i\in \N$ be such that $1\leq i \leq n$. We describe how to obtain $\alpha_i$ from $\alpha_{i-1}$. 
Enumerate the computation monomer and end monomer types of $\hat{p}$, $t_0, t_2, \dots, t_{n-s - 1}$ such that $t_{j-1} <_{zz} t_j$ for all $j$ such that $2\leq j\leq n$.

Let $d_1$ and $d_2$ be domains of monomers with types in $t_{i-1}$. 
Then, for each polymer $p$ in $\alpha_{i-1}$ that contains a seed monomer such that $p$ contains two monomers each with an unbound codomain, where one codomain is $d_1^*$ and the other is $d_2^*$, let $m$ be a computation monomer or end monomer in $\alpha_{i-1}$ with type $t_{i-1}$ such that the two domains of $m$ are bound to two codomains of a capping monomer, $K$ say. Moreover, if $t_{i-1}$ is in $\calM_{end,xy}$, choose $m$ so that $m$ belongs to $p$ if $m$ is a monomer in $p$, and otherwise choose $m$ such that $m$ belongs to a polymer consisting of a single capping monomer and a single end monomer. 
Next, we break the $d_1$ and $d_2$ bonds between $m$ and $K$, and then bind $d_1$ and $d_2$ of $m$ to $d_1^*$ and $d_2^*$ of the monomers belonging to the polymer $p$. Figure~\ref{fig:capping2} depicts this ``bond swapping''. 
$\alpha_i$ is defined to be this resulting configuration. 

Note that since $\alpha_{i}$ is obtained from $\alpha_{i-1}$ be breaking $2$ bonds and then creating $2$ bonds, $\alpha_i$ and $\alpha_{i-1}$ have the same enthalpy. I.e. $H(\alpha_i) = H(\alpha_{i-1})$. 

\begin{figure}[ht]
    \centering
    \includegraphics[width=4in]{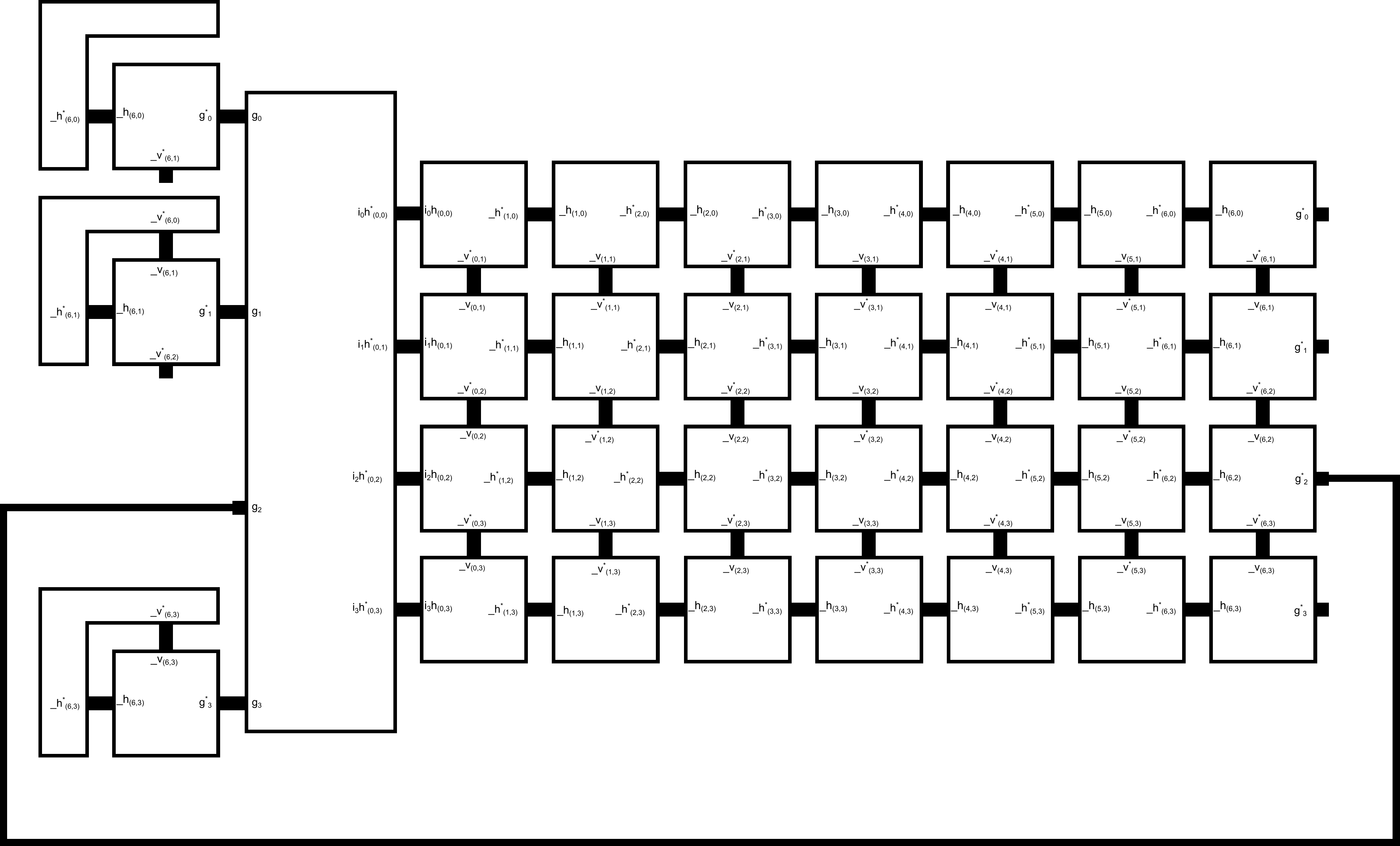}
    \caption{An exmaple of $\alpha_{n}$ where $s=4$ and $t=3$. Domain symbols corresponding to the glues of a tile set simulating the Turing machine have been suppressed and the symbol ``\_'' is shown. The rightmost column of monomers are end monomers. This example illustrates how some of the domains labeled $g_0$, $g_1$, $g_2$, and $g_3$ can be bound to end monomers that are bound to capping monomers ($g_0$, $g_1$, and $g_3$ in this example) and some of these domains ($g_2$ in particular) may be bound to end monomers bound to computation monomers.}
    \label{fig:capping3}
\end{figure}

Note that $\alpha_n$, defined recursively, has the following properties. For any polymer, $p$ say, in $\alpha_n$ containing a seed monomer, $p$ contains a single seed monomer, the sum of the number of computation monomers in $p$ and the number end monomers in $p$ is $n$, and the only capping monomers in $p$ have codomains that are bound to domains of end monomers in $p$ and these end monomers are bound to $s$ many domains labeled $g_j$ (for $0\leq j < s$) belonging to the seed monomer. Figure~\ref{fig:capping3} depicts $\alpha_n$.

\begin{figure}[ht]
    \centering
    \includegraphics[width=4in]{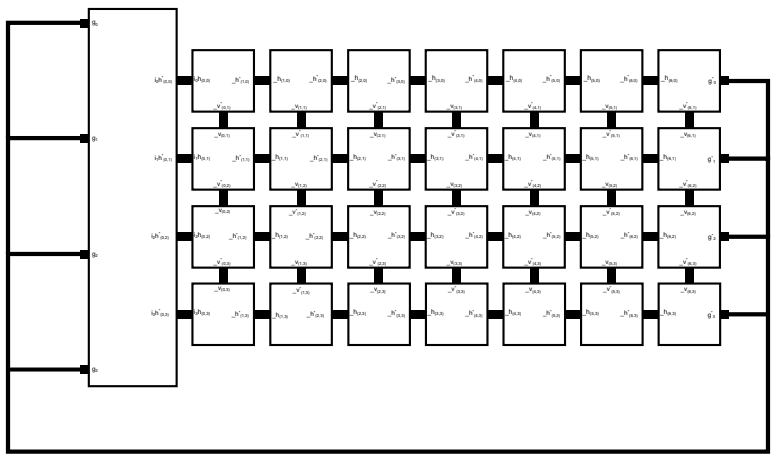}
    \caption{An example of $\alpha_{n+1}$ where $s=4$ and $t=3$. Domains symbols corresponding to the glues of a tile set simulating the Turing machine have been suppressed and the symbol ``\_'' is shown.}
    \label{fig:capping4}
\end{figure}

We then obtain $\alpha_{n+1}$ from $\alpha_n$ as follows. For all polymers $p$ containing a seed monomer and for each end monomer $e$ in $p$ with domains that are bound to two codomains of a capping monomer, let $j\in \N$ be such that $g_j$ is a domain of $e$. Note that by definition of $\alpha_{n}$, $p$ contains an end monomer $e'$ that is distinct from $e$ and contains a codomain $g_j^*$. The domains for $e'$ are bound to codomains of computation monomers in $p$. Next, break the bond between $g_j^*$, the domain in $e$, and $g_j^*$, a domain of the seed monomer of $p$, and then create a bond between the domain $g_j$ of the seed monomer of $p$ and $g^*$, the domain in $e'$. The resulting configuration is $\alpha_{n+1}$. Since every broken bond preceeds the creation of a bond, $H(\alpha_n) = H(\alpha_{n+1})$. 

Finally, we note that $\alpha_{n+1} = \alpha$ by definition of $\alpha_{n+1}$ and $\alpha$. Moreover, since $\alpha_1$ is saturated, and for $i\in \N$ such that $1\leq i \leq n+1$, $H(\alpha_i) = H(\alpha_{i-1})$, we see that $\alpha$ is saturated. Moreover, $H(\alpha) = 2\cdot \#comp(\vvc) + 2\cdot \#end(\vvc) + s$. 
\end{proof}

To show that $\alpha$ is stable we must show that $S(\alpha)$ is greater than or equal to the entropy for any saturated configuration. The following claim will help prove an upper bound on the entropy of any saturated configuration.

\begin{claim}\label{clm:bound-to-seed-or-cap}
    Let $\beta$ be an arbitrary saturated configuration of $\vvc$. Moreover, let $p$ be a polymer of $\beta$ such that $p$ contains a computation monomer or an end monomer $m$. Then, $p$ contains either a seed monomer or a capping monomer (or both).
\end{claim}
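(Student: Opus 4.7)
The plan is to assume for contradiction that $p$ contains neither a seed monomer nor a capping monomer, so every monomer in $p$ is either a computation monomer or an end monomer. I would first establish, as a preliminary counting lemma, that under the hypothesis $\#(\vvc,m_i) \le c_{min} \le k_{min}$, every primary domain appearing on a computation or end monomer in $\beta$ must be bound. The argument is that, for each computation-monomer type $C$ with matching cap $K$, the total number of copies of any input codomain $d^*$ in $\vvc$ (coming from capping monomers, possibly from a single neighboring computation-monomer type, and possibly from the seed) strictly exceeds the number of copies of $d$ (which appear only on $C$ itself, since $d$ is hard-coded to location $(x,y)$). Thus if some $d$ on a computation/end monomer were left unbound in $\beta$, there would still be enough unbound $d^*$ available in $\beta$ to bind it, contradicting saturation.

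Next, I would reduce to the case that $p$ contains at least one computation monomer. If $p$ contained only end monomers, pick any end monomer $e \in p$ and consider its primary ``cell-value'' domain $s$. By the preliminary step, $s$ must be bound, but the only codomain $s^*$ in $\vvc$ live on matching capping monomers and on east outputs of final-column computation monomers; either option forces a capping monomer or a computation monomer into $p$, contradicting our assumption (no cap) or the purported lack of computation monomers in $p$.

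Finally, assuming $p$ contains a computation monomer, let $m \in p$ be the one minimal under $<_{zz}$, at location $(x_0,y_0)$, and consider its west-input primary domain $d_w$ of the form $d\,h_{(x_0,y_0)}$. By the saturation step, $d_w$ is bound. By construction, the codomain $d\,h^*_{(x_0,y_0)}$ occurs in $\calM$ only on (a) the matching capping monomer $K_{(x_0,y_0)}$, (b) as the east output of a computation monomer at $(x_0-1,y_0)$ if $x_0 > 0$, or (c) on the seed monomer if $x_0 = 0$ and $y_0 < |i|$. Since a bond places the partner in $p$, case (a) puts a capping monomer in $p$, case (c) puts the seed in $p$, and case (b) puts a computation monomer at $(x_0-1,y_0)$ in $p$; but $(x_0-1,y_0) <_{zz} (x_0,y_0)$, contradicting the minimality of $m$. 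All three cases yield contradictions, proving the claim.

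The main obstacle I expect is the preliminary counting/saturation step: it is the only part that genuinely uses the global structure of $\vvc$ and the hard-coded construction of $\calM$, and one must verify carefully that for every relevant $d$ the tally $\#(\vvc, d^*) > \#(\vvc, d)$ holds after summing contributions from $K$, the (at most one) neighboring computation-monomer type, and possibly the seed --- which reduces to the stated count hypothesis $\#(\vvc,m_i) \le c_{min} \le k_{min}$. A secondary subtlety is confirming the enumeration of where $d\,h^*_{(x_0,y_0)}$ can appear is exhaustive, which follows directly from the location-indexed domain-renaming recipe in Section~\ref{sec:construction-explosion}, but it is worth making explicit to rule out stray bonds.
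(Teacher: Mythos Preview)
Your proposal is correct and follows essentially the same idea as the paper's proof. The paper argues by induction on the $x$-coordinate of the computation/end monomer: the west-input domain $dh_{(x,y)}$ must (by saturation) be bound, and the only carriers of $dh_{(x,y)}^*$ are a capping monomer, the seed (when $x=0$), or a computation monomer at $(x-1,y)$, the last case invoking the inductive hypothesis. Your minimal-element argument under $<_{zz}$ is the standard extremal reformulation of exactly this induction; case (b) contradicting minimality is precisely the inductive step read contrapositively.

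Two minor remarks. First, your separate handling of the ``end-monomers-only'' case is unnecessary: the paper treats computation and end monomers uniformly (end monomers sit at $x=2t-1$ and their horizontal input domain behaves identically in the induction), so you could simply take the $<_{zz}$-minimum over all computation \emph{and} end monomers in $p$ and skip step~2. Second, you are right to flag the counting step as the delicate point. The paper's proof asserts ``as $\beta$ is saturated, $dh_{(x,y)}$ must be bound'' without spelling out the count comparison; your explicit observation that $\#d^* \ge \#d$ (each computation/end type $C$ carrying $d$ has a distinct matching cap $K_C$ carrying $d^*$, and caps are in excess) is the content behind that assertion. Just note that you really only need $\ge$, not strict inequality, and that the reduction to the stated hypothesis uses the per-type cap-excess described in the overview (``the number of each of the cap monomer types is greater than the number of any of the other monomer types''), which is slightly stronger than the bare inequality $c_{min} \le k_{min}$ between minima.
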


\begin{proof}(Proof of Claim~\ref{clm:bound-to-seed-or-cap})
Let $x \in \{0, 1, \dots, 2t-1 \}$ and $y\in \{0,1, \dots, s-1\}$, and let $m_{(x,y)}$ denote either a computation or end monomer such that $loc(m_{(x,y)}) = (x,y)$. We prove the claim by induction on $x$. 

For the base case, suppose that $p$ contains some computation monomer or end monomer $m_{(x,y)}$ for $x=0$ and some $y$. Let $dh_{(x,y)}$ and $dv_{(x,y)}$ be the domains of $m_{(x,y)}$. Note that as $\beta$ is saturated, $dh_{(x,y)}$ must be bound to a codomain $dh_{(x,y)}^*$ belonging to either a seed monomer or a capping monomer. This follows from the fact that there are no other monomer types with codomain $dh^*_{(x,y)}$. This proves the base case.

Now, for the sake of induction, assume that for some $x>0$, if $p$ contains a computation or end monomer, $m$ say, with $loc(m) = (x-1, y)$, then $p$ contains a capping monomer or a seed monomer. Then, suppose that $p$ contains some computation monomer or end monomer $m_{(x,y)}$ with $loc(m_{(x,y)}) = (x,y)$ for some $y$. Once again, let $dh_{(x,y)}$ and $dv_{(x,y)}$ be the domains of $m_{(x,y)}$. Then, since there exists a capping monomer with codomains $dh^*_{(x,y)}$ and $dv^*_{(x,y)}$, the domain $dh_{(x,y)}$ is either bound to a capping monomer, or bound to a codomain belonging to a computation monomer.  In the former situation, the claim holds. In the latter situation, there is a computation monomer, which we denote by $m_{(x-1, y)}$ such that $loc(m_{(x-1, y)}) = (x-1, y)$. By the inductive hypothesis, the polymer containing $m_{(x-1, y)}$ contains either a capping monomer or a seed monomer. Therefore, $p$ contains either a seed monomer or a capping monomer. Therefore, the claim holds.
\end{proof}

The following claim gives an upper bound on the entropy of any saturated configuration.

\begin{claim}\label{clm:entropy-bound}
    Let $\beta$ be a saturated configuration of $\vvc$. $S(\beta) \leq \#(\vvc, m_i) + \#cap(\vvc)$.
\end{claim}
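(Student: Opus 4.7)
The plan is to leverage Claim~\ref{clm:bound-to-seed-or-cap} together with a straightforward injective counting argument. The key observation is that every polymer of a saturated configuration must contain at least one seed or capping monomer, and since each individual monomer belongs to exactly one polymer, the number of polymers cannot exceed the total supply of seed plus capping monomers.

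More concretely, I would first partition the polymers of $\beta$ into two classes: those that contain at least one computation or end monomer, and those that contain none. For polymers in the first class, Claim~\ref{clm:bound-to-seed-or-cap} directly provides a seed or capping monomer inside each such polymer. For polymers in the second class, every monomer in the polymer must have type in $\calM_{seed} \cup \calM_{cap,xy}$, and since polymers are by definition non-empty (they are connected components of the configuration), each such polymer trivially contains at least one seed or capping monomer.

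From here I would define a map $\phi$ sending each polymer $p$ of $\beta$ to an arbitrary choice of a seed or capping monomer inside $p$ (well-defined by the above). Because distinct polymers are disjoint as sets of monomers, $\phi$ is injective. Therefore
\[
S(\beta) \;=\; \#\{\text{polymers of } \beta\} \;\leq\; \#(\vvc, m_i) + \#cap(\vvc),
\]
which is exactly the desired bound.

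I do not anticipate any real obstacle here: the content of the claim is entirely encapsulated by Claim~\ref{clm:bound-to-seed-or-cap} (whose proof is the genuinely nontrivial ingredient, as it had to propagate the unbound-west-domain argument inductively across the $x$-coordinate), combined with the trivial fact that polymers are non-empty and pairwise monomer-disjoint. The only subtlety worth being explicit about in the writeup is handling the degenerate polymers that consist entirely of seed and/or capping monomers, to make sure the counting argument is watertight.
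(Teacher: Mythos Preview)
Your proposal is correct and takes essentially the same approach as the paper: both arguments derive the entropy bound directly from Claim~\ref{clm:bound-to-seed-or-cap} together with the fact that every monomer type lies in $\calM_{comp,xy} \cup \calM_{end,xy} \cup \calM_{cap,xy} \cup \calM_{seed}$. The paper's proof is a single sentence leaving the counting implicit, whereas you spell out explicitly the partition into polymers with/without computation or end monomers and the injective map to seed/capping monomers; this added explicitness is fine but not a genuinely different route.
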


\begin{proof}(Proof of Claim~\ref{clm:entropy-bound})
Since $\calM = \calM_{comp,xy} \cup \calM_{cap,xy} \cup \calM_{seed} \cup \calM_{end,xy}$, the claim follows from Claim~\ref{clm:bound-to-seed-or-cap}.

\end{proof}

We now note that $\alpha$ is a stable configuration of $\vvc$.

\begin{claim}\label{clm:alpha-stable}
    $\alpha\in [\vvc]_{\Box}$.
\end{claim}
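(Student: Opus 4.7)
The plan is to prove that $\alpha$ is stable by combining the two preceding claims: Claim~\ref{clm:saturated} already guarantees that $\alpha$ is saturated, and Claim~\ref{clm:entropy-bound} gives the uniform upper bound $S(\beta) \le \#(\vvc, m_i) + \#cap(\vvc)$ on the entropy of \emph{any} saturated configuration $\beta$ of $\vvc$. So it suffices to exhibit $\alpha$ as a saturated configuration whose entropy meets this upper bound; it will then automatically maximize entropy among saturated configurations and hence lie in $[\vvc]_\Box$.

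To that end I would carry out a direct counting of the connected components of $\alpha$ using the four polymer types enumerated just before Claim~\ref{clm:saturated}. First, each seed monomer in $\vvc$ sits in exactly one computation polymer (type~1), contributing $\#(\vvc, m_i)$ components. Next, I would observe that every capping monomer in $\vvc$ belongs to exactly one polymer, and that polymer is of type~2, 3, or 4; in each of those polymer types the capping monomer is the unique capping monomer in its component. Hence the number of polymers contributed by types~2--4 is exactly $\#cap(\vvc)$. Adding these contributions gives
\[
S(\alpha) \;=\; \#(\vvc, m_i) + \#cap(\vvc),
\]
which matches the upper bound of Claim~\ref{clm:entropy-bound}. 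Combined with Claim~\ref{clm:saturated}, this proves $\alpha \in [\vvc]_\Box$.

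The only subtle checks in this accounting are (i) that the construction of $\alpha$ in the proof of Claim~\ref{clm:saturated} really does assign every remaining computation monomer and every remaining end monomer (i.e.\ those not absorbed into a computation polymer) to a distinct capping-monomer partner, so that no capping monomer is double-counted, and (ii) that the surplus capping monomers, which form singleton polymers of type~4, are indeed all counted. Both follow from the hypotheses that $\#(\vvc, m_i) \le c_{min}$ and $c_{min} \le k_{min}$, which ensure enough computation/end monomers exist to build $\#(\vvc, m_i)$ computation polymers and enough capping monomers remain to cap every leftover computation or end monomer with one to spare. I do not expect a substantive obstacle here; the main work is simply an inventory of components, and the heavy lifting has already been done in establishing saturation and the entropy bound.
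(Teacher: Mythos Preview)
Your proposal is correct and follows exactly the paper's approach: the paper's proof simply asserts $S(\alpha) = \#(\vvc, m_i) + \#cap(\vvc)$ and invokes Claims~\ref{clm:saturated} and~\ref{clm:entropy-bound}. Your write-up merely spells out the component-counting behind that entropy equality in more detail than the paper does.
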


\begin{proof}(Proof of Claim~\ref{clm:alpha-stable})
We note that $S(\alpha) = \#(\vvc, m_i) + \#cap(\vvc)$. Therefore, the claim follows from Claim~\ref{clm:saturated} and Claim~\ref{clm:entropy-bound}.

\end{proof}

\begin{claim}\label{clm:stable-seed-or-cap}
    Suppose that $\beta \in [\vvc]_{\Box}$. Then, for all polymers $p$ contained in $\beta$, $p$ contains either a single capping monomer or a single seed monomer, but not both monomer types.
\end{claim}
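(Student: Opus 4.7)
The plan is a short counting/pigeonhole argument that leans on the entropy equalities already established. First, I would pin down the exact entropy of $\beta$. Since $\beta$ is stable, $S(\beta) \geq S(\alpha)$, and by Claim~\ref{clm:alpha-stable} we have $S(\alpha) = \#(\vvc, m_i) + \#cap(\vvc)$; meanwhile Claim~\ref{clm:entropy-bound} gives the matching upper bound $S(\beta) \leq \#(\vvc, m_i) + \#cap(\vvc)$. These together force $S(\beta) = \#(\vvc, m_i) + \#cap(\vvc)$, i.e., the number of polymers in $\beta$ equals exactly the total count of seed monomers plus capping monomers in the collection $\vvc$.

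Next, I would verify that every polymer of $\beta$ contains at least one seed or capping monomer. If the polymer contains a computation or end monomer, this is exactly Claim~\ref{clm:bound-to-seed-or-cap}. Otherwise the polymer is nonempty, and since the only remaining monomer categories in $\calM$ are seed and cap types, it must consist entirely of seed or cap monomers and hence trivially contains at least one. Thus each of the $\#(\vvc, m_i) + \#cap(\vvc)$ polymers of $\beta$ can be \emph{charged} at least one distinct seed or capping monomer.

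Now the pigeonhole step is immediate: we have exactly $\#(\vvc, m_i) + \#cap(\vvc)$ seed-or-cap monomers available in the whole collection, and each of the same number of polymers absorbs at least one of them, so each polymer must absorb \emph{exactly} one. This simultaneously rules out (i) a polymer containing both a seed and a cap (which would charge two seed-or-cap monomers to a single polymer) and (ii) a polymer containing two or more monomers of the same kind. That is precisely the content of the claim.

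The main subtlety, rather than a hard step, is bookkeeping: confirming that $\#cap(\vvc)$ as used in Claim~\ref{clm:entropy-bound} really denotes $\Sigma_{m \in \calM_{cap,xy}} \#(\vvc, m)$, and that the entropy of $\alpha$ computed via the saturated sequence $\alpha_0, \dots, \alpha_{n+1}$ in the proof of Claim~\ref{clm:saturated} indeed matches this same total (one polymer per seed plus one polymer per cap, since caps have only codomains and cannot bind to one another). Once that accounting is confirmed, the pigeonhole count closes the claim with no further casework.
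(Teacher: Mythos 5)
Your proposal is correct and is essentially the paper's own argument in expanded form: the paper's one-line proof cites Claim~\ref{clm:bound-to-seed-or-cap} together with the equality $S(\beta) = S(\alpha) = \#(\vvc, m_i) + \#cap(\vvc)$, and the pigeonhole count you spell out (each of the exactly $\#(\vvc, m_i) + \#cap(\vvc)$ polymers contains at least one seed or capping monomer, hence exactly one) is precisely the implicit content of that citation. Your bookkeeping remarks about $\#cap(\vvc)$ correctly fill in a notational gap the paper leaves unstated.
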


\begin{proof}(Proof of Claim~\ref{clm:stable-seed-or-cap})
This claim follows from Claim~\ref{clm:bound-to-seed-or-cap} and the fact that $S(\alpha) = \#(\vvc, m_i) + \#cap(\vvc) = S(\beta)$. 
\end{proof}

\begin{claim}\label{clm:2-comps}
    Suppose that $\beta \in [\vvc]_{\Box}$. Then, for all polymers $p$ contained in $\beta$, if $p$ contains two monomers with type in $\calM_{comp,xy}\cup \calM_{end,xy}$, then $p$ contains a seed monomer.
\end{claim}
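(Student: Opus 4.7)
The plan is to argue by contradiction. Suppose some polymer $p$ of $\beta$ contains $k\ge 2$ monomers $m_1,\dots,m_k$ whose types lie in $\calM_{comp,xy}\cup\calM_{end,xy}$, and no seed monomer. By Claim~\ref{clm:bound-to-seed-or-cap} applied to any one $m_i$, $p$ must contain either a seed or a cap, and then Claim~\ref{clm:stable-seed-or-cap} forces $p$ to contain exactly one capping monomer $K$. By construction $K$ is the matching cap for some hard-coded computation or end monomer at a specific grid cell $(x^*,y^*)$, and its two codomains are subscripted so that they can only bind primary domains located at the two input boundaries of $(x^*,y^*)$.

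The first ingredient is the observation that in $\beta$ every input primary of every computation or end monomer is bound. Using the assumption that each capping-monomer type is strictly more abundant in $\vvc$ than every computation or end monomer type, the cap-supplied codomains of each primary domain type already outnumber the primaries of that type. Saturation of $\beta$ then forces every such primary to be bound, and since polymers are connected components, all $2k$ input primaries of $m_1,\dots,m_k$ are bound to codomains that live inside $p$.

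The heart of the argument, and the step I expect to be the main obstacle, is a cascade using the location subscripts to show that every $m_i$ must sit at the single cell $(x^*,y^*)$. I would start by considering the leftmost column $x_0=\min\{x:(x,y)\text{ is the position of some }m_i\}$. For any cell $(x_0,y)\ne(x^*,y^*)$ in $p$, the west input primary at its west boundary can only be matched by an east codomain from position $(x_0-1,y)$ or by $K$'s west codomain; the former is excluded by minimality of $x_0$, and the latter lives at the west boundary of $(x^*,y^*)$ and so has the wrong subscript. Hence $x_0=x^*$ and the only permitted cell of column $x^*$ is $(x^*,y^*)$. Now iterate: any cell $(x^*+1,y)\in p$ would need its other input primary (north for an odd-parity column, south for even) provided by another cell in column $x^*+1$, which in turn would need its own west primary provided by a cell in column $x^*$ other than $(x^*,y^*)$, a situation the leftmost-column argument has already ruled out. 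Inducting on columns rules out every cell with $x\ne x^*$.

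Once every $m_i$ is forced to sit at $(x^*,y^*)$, each contributes a west input primary at that cell's west boundary, but the only codomain of $p$ located at that boundary is the single west codomain of $K$. So at most one of the $k\ge 2$ west primaries can be bound, contradicting the global binding conclusion from the second paragraph. This contradiction finishes the proof.
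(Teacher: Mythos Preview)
Your approach is correct and uses the same core ingredients as the paper: the single-cap constraint from Claim~\ref{clm:stable-seed-or-cap}, the fact that every input primary is bound in a saturated configuration (which the paper gets from the enthalpy formula in Claim~\ref{clm:saturated}), and a leftmost-column analysis. The paper organizes the contradiction differently: it picks a monomer $m_1$ in the leftmost column, shows its west input is bound to a cap $c_1$, then takes any comp/end neighbor $m_2$ of $m_1$ and, by a three-way case split on which domain of $m_1$ connects to $m_2$, locates a \emph{second} cap in $p$ distinct from $c_1$, directly contradicting Claim~\ref{clm:stable-seed-or-cap}. Your version instead starts from the unique cap $K$ and argues globally that every comp/end monomer must sit at $K$'s grid cell $(x^*,y^*)$, then obtains the contradiction from having $k\ge 2$ west primaries at one boundary with only one matching codomain available. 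The paper's local-neighbor argument is shorter and avoids the column induction entirely.

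Your iteration step has a gap as written. You assert that the vertical neighbor of a cell $(x^*+1,y)$ ``would need its own west primary provided by a cell in column $x^*$ other than $(x^*,y^*)$,'' but you never justify the ``other than'': if the original cell is at $(x^*+1,y^*\pm 1)$, its vertical neighbor sits at $(x^*+1,y^*)$ and \emph{can} take its west primary from the monomer at $(x^*,y^*)$. The fix is to first apply the west-input constraint to the \emph{original} cell $(x^*+1,y)$: since column $x^*$ contains only $(x^*,y^*)$ and the cap $K$ cannot serve column $x^*+1$, this already forces $y=y^*$. Only then does the vertical neighbor necessarily land at a row $y'\neq y^*$, and its west primary has no partner. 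With this order made explicit the induction on columns goes through cleanly; for columns $x^*+j$ with $j\ge 2$ the west-input constraint alone suffices since the preceding column is empty by hypothesis.
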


\begin{proof}(Proof of Claim~\ref{clm:2-comps})
We will prove this claim by contradicting Claim~\ref{clm:stable-seed-or-cap}. Suppose that $p$ is a polymer of $\beta$ such that $p$ contains two distinct monomers with type in $\calM_{comp,xy}\cup \calM_{end,xy}$. For the sake of contradiction, assume that $p$ does not contain a seed monomer. 

Let $D\subseteq \calD$ be the set of all domains belonging to either a computation monomer or an end monomer contained in $p$. Let $L\subset \Z^2$ be the set $\{(x,y) | \exists\ d\in D \text{ with } loc(d) = (x,y)\}$. Then let $L_{xmin}$ be $\{(x,y) | \forall\ (x', y')\in L, x \leq x'\}$. Observe that $L_{xmin}\neq \emptyset$ as $p$ contains at least two monomers with type in $\calM_{comp,xy}\cup \calM_{end}$. 

Let $m_1$ be an arbitrary monomer in $L_{xmin}$, let $(x_1, y_1) \in \Z^2$ be such that $m_1$ contains a domain, $d_1h$ say, with $loc(d_1h) = (x_1, y_1)$ and $or(d_1h) = h$. Let $a_1v$ denote the domain belonging to $m_1$ such that $or(a_1v) = v$, and let $b_1h^*$ and $c_1v^*$ denote the codomains belonging to $m_1$ such that $or(b_1h^*) = h$ and $or(c_1v^*) = v$.  

Note that $d_1h$ must be bound to a codomain belonging to a capping monomer. This follows from the following observations. 
1) As $\beta$ is saturated, Claim~\ref{clm:saturated} implies that $H(\beta) = 2\cdot \#comp(\vvc) + 2\cdot \#end(\vvc) + s$, and 
2) $d_1h$ is contained in $L_{xmin}$.
By (1), $d_1h$ must be bound to some codomain belonging to some monomer. 
By (2), $d_1h$ cannot be bound to a codomain belonging to a monomer with type in $\calM_{comp,xy}\cup \calM_{end}$, as such a monomer would imply that $d_1h\not \in L_{xmin}$. Then, as $\beta$ does not contain a seed monomer by assumption, it must be the case that $d_1h$ is bound to some codomain belonging to a capping monomer. Let $c_1$ denote this monomer.

By assumption, $p$ contains a second monomer with type in $\calM_{comp,xy}\cup \calM_{end}$. Hence, $m_1$ must contain a domain or codomain that is bound to a codomain or domain of a  monomer, which we denote by $m_2$, with type in $\calM_{comp,xy}\cup \calM_{end}$. We consider the following situations. 
\begin{enumerate}
    \item $a_1v$ is bound to a codomain, $a_1v^*$, of $m_2$,\label{item:a-bound}
    \item $c_1v^*$ is bound to a domain, $c_1v$, of $m_2$, or\label{item:c-bound}
    \item (1) and (2) do not hold, and $b_1h^*$ is bound to a domain, $b_1h$ of $m_2$.\label{item:b-bound}
\end{enumerate}

If (\ref{item:a-bound}) holds, then there exists a domain $d_2h$ belonging to $m_2$ and an integer $y_2$ such that $loc(d_2h) = (x, y_2)$ and $or(d_2h)=h$. 
Recall the argument above showing that $d_1h$ must be bound to a codomain of a capping monomer. By a similar argument, if (\ref{item:a-bound}) holds, $d_2h$ must be bound to a codomain of a capping monomer, $c_2$ say.  Moreover, $c_2 \neq c_1$ as $c_1$ contains a codomain $d_1h^*$ and $c_2$ contains a codomain $d_2h^*$ and there does not exist a capping monomer type containing the two codomains $d_1h^*$ and $d_2h^*$. Thus, by Claim~\ref{clm:stable-seed-or-cap}, we have arrived at a contradiction when (\ref{item:a-bound}) holds. Using an analogous argument, when (\ref{item:c-bound}) holds, we arrive at the same contradiction.

Finally, suppose that (\ref{item:b-bound}) holds. We note that $loc(b_1h) = (x+1, y)$. 
Let $a_2v$ denote the domain belonging to $m_2$ such that $or(a_2v) = v$, and let $b_2h^*$ and $c_2v^*$ denote the codomains belonging to $m_2$ such that $or(b_2h^*) = h$ and $or(c_2v^*) = v$. Then either 
A) $a_2v$ is unbound,
B) $a_2v$ is bound to a domain belonging to a capping monomer, or
C) $a_2v$ is bound to a domain belonging to a monomer with type in $\calM_{comp,xy}\cup \calM_{end}$.

If (A) holds, then $\beta$ is not saturated as the enthalpy of a saturated configuration must be $2\cdot \#comp(\vvc) + 2\cdot \#end(\vvc) + s$ by Claim~\ref{clm:saturated}. Thus, (A) cannot hold.

Now suppose that (B) holds. Then, let $c_3$ denote the monomer with codomain $a_2v^*$ bound to the domain $a_2v$. Note that $c_1 \neq c_3$. This follows from the fact that for the two codomains of $c_1$, $loc$ maps these domains to $(x,y)$, yet for the two codomains of $c_3$, $loc$ maps these domains to $(x+1, y')$ for some $y'\in \N$. Since $c_1 \neq c_3$, we arrive at a contradiction of Claim~\ref{clm:stable-seed-or-cap} when (B) holds.

Finally, suppose that (C) holds. Then, let $m_3$ denote the monomer with type in $\calM_{comp,xy}\cup \calM_{end}$ such that $m_3$ has a codomain $a_2v^*$ bound to $a_2v$ of $m_2$.  Now, there exists a domain $d_3h$ belonging to $m_3$ such that $or(d_3h) = h$ and $loc(d_3h) = (x+1, y')$ for $y' \in \N$. Note that $y'$ is either $y+1$ or $y-1$. In either case, $d_3h$ is either bound to a codomain of a capping monomer, $c_4$ say, or bound to a codomain of a computation monomer, $m_4$ say. In the former case, we note that $c_4 \neq c_1$. This follows from the fact that for the two codomains of $c_1$, $loc$ maps these domains to $(x,y)$, yet for the two codomains of $c_4$, $loc$ maps these domains to $(x, y')$ for $y'$ either $y+1$ or $y-1$. Since $c_1 \neq c_4$, we arrive at a contradiction of Claim~\ref{clm:stable-seed-or-cap}. In the latter case, we note that $m_4$ must contain a domain $d_4h$ such that $or(d_4h) = h$ and $loc(d_4h) = (x, y')$. Now, as $d_4h\in L_{xmin}$ and $\beta$ is saturated, it must be the case that $d_4h$ is bound to a codomain of a capping monomer, $c_5$. We finally note that $c_1\neq c_5$ since $loc$ maps the codomains belonging to $c_1$ to $(x,y)$ and maps the codomains belonging to $c_5$ to $(x,y')$ and $y\neq y'$. Once again, we arrive at a contradiction of Claim~\ref{clm:stable-seed-or-cap} when under the assumption that (C) holds. In any case, if (\ref{item:b-bound}) holds, we arrive at a contradiction.  

Thus, in any case, (\ref{item:a-bound}), (\ref{item:c-bound}), or (\ref{item:b-bound}), we arrive at a contradiction. Hence, there is no polymer $p$ of $\beta$ such that $p$ contains two distinct monomers with type in $\calM_{comp,xy}\cup \calM_{end,xy}$ and does not contain a seed monomer, which was what we wanted.
\end{proof}

\begin{claim}\label{clm:stable-grid}
Suppose that $\beta \in [\vvc]_{\Box}$. Let $p$ be a polymer contained in $\beta$ such that $p$ contains a seed monomer. Then, $L(p) = \{(x,y) | 0\leq x\leq 2t-1 \text{ and } 0\leq y \leq s - 1\}$.
\end{claim}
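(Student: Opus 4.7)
The strategy rests on two facts already in hand: $\beta$ is saturated (since it is stable), and by Claim~\ref{clm:stable-seed-or-cap} the polymer $p$ contains no capping monomer. Hence every bound domain of every monomer in $p$ must find its binding partner in $p$ itself, and the only available partners are seed, computation, and end monomers. Combined with the location-specific subscripting of $\mathcal{D}\cup\mathcal{D}^*$, this rigidity is what prevents any ``splicing'' alternative and forces $p$ to cover the full grid.

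I would first plant the rightmost column. Saturation forces each helper domain $g_n$ of the seed $m_i$ to be bound to some $g_n^*$; by construction this codomain appears only on the end monomer of row $n$, located at $(2t-1, n)$. Hence for every $n\in\{0,\ldots, s-1\}$ the location $(2t-1, n)$ lies in $L(p)$.

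Next I would propagate westward by an induction on the column index $k$ decreasing from $2t-1$ down to $0$. Suppose $(k,n)\in L(p)$ for every $n$ and some $k\geq 1$. The monomer at $(k,n)$ carries a west-input domain labelled $dh_{(k,n)}$, and saturation requires it to be bound. The only carriers of $dh_{(k,n)}^*$ in $\mathcal{M}$ are: (a) the east-output codomain of a computation monomer at $(k-1,n)$; (b) the capping monomer keyed to location $(k,n)$; or (c) the seed $m_i$, available only when $k=1$ and $n<|i|$. Option (b) is excluded by Claim~\ref{clm:stable-seed-or-cap}, and option (c), when it applies, already anchors the column $x=0$. In every other case option (a) forces a computation monomer at $(k-1,n)$ into $p$, so $(k-1,n)\in L(p)$. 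Iterating this step gives $L(p)\supseteq\{0,\ldots,2t-1\}\times\{0,\ldots,s-1\}$, and the reverse inclusion is immediate from the subscripting convention, which confines every domain location to this grid.

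The main obstacle I anticipate is the $x=0$ boundary: the seed carries only $|i|$ bit-encoding codomains, so the column-$0$ monomers that correspond to initially blank tape cells (rows $n\geq|i|$) cannot bind to the seed directly. I would close this gap either by adding to the seed complementary codomains for the blank-cell west-input domains of those rows, or by observing that once the column-$1$ monomers are all in $p$ (by the induction), saturation of their west-input domains still demands the presence of the corresponding column-$0$ monomers. A secondary subtlety is the reduced degree of column-end computation monomers in rows $0$ and $s-1$, whose missing north or south partner must be consistent with the saturation count of Claim~\ref{clm:saturated}; this is handled by designing those rows with a single vertical input domain and reusing the same partner-enumeration argument, so it affects only the bookkeeping and not the structure of the induction.
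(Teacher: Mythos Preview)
Your proposal is correct and matches the paper's proof essentially step for step: anchor the end-monomer column $x=2t-1$ via the seed's helper domains $g_y$, then propagate westward one position at a time using saturation together with Claim~\ref{clm:stable-seed-or-cap} to rule out caps, leaving a computation monomer as the only possible binding partner. The paper phrases the walk row-by-row rather than column-by-column, and the obstacles you anticipate do not in fact arise---the induction terminates once column~$0$ is reached without ever invoking the seed's bit-encoding codomains (your option~(c) is actually available only at $k=0$, not $k=1$, hence lies outside your induction range $k\geq 1$ and is never needed).
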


\begin{proof}(Proof of Claim~\ref{clm:stable-grid})
Let $p$ be a polymer contained in $\beta$ such that $p$ contains a seed monomer. Fix $y\in \N$ such that $0\leq y \leq s-1$. Let $g_y$ be the domain belonging to the seed monomer of $p$ such that there is an end monomer with codomain $g_y^*$. Note that as $\beta$ is saturated, $g_y$ must be bound to some codomain of an end monomer. Let $m_{(2t-1,y)}$ denote this end monomer and note that $loc(m_{(2t-1,y)}) = (2t-1,y)$. Hence, $(2t-1, y)\in L(p)$. Then, $m_{(2t-1,y)}$ contains a domain $dh_{(2t-1,y)}$ such that $or(dh_{(2t-1,y)}) = h$ and $loc(dh_{(2t-1,y)}) = (2t-1,y)$. Once again, as $\beta$ is saturated, $dh_{(2t-1,y)}$ must be bound to a codomain of a monomer which we denote by $m_{(2t-2,y)}$. By Claim~\ref{clm:stable-seed-or-cap}, $m_{(2t-2,y)}$ cannot be a capping monomer as $p$ contains a seed monomer. Hence, $m_{(2t-2,y)}$ is a computation monomer. We note that $loc(m_{(2t-2,y)}) = (2t-1, y)$, and hence, $(2t-2, y)\in L(p)$. Repeating this argument $2t-2$ more times, we see that $(x, y)\in L(p)$ for all $x$ such that $0\leq x\leq 2t-1$ and this fixed $y$ value. Moreover, as $y$ is arbitrary, we see that $L(p) = \{(x,y) | 0\leq x\leq 2t-1 \text{ and } 0\leq y \leq s - 1\}$.
\end{proof}

\begin{claim}\label{clm:seed-comp}
    Let $\hat{p}$ denote a polymer in $\alpha$ that contains a seed monomer.
    Suppose that $\beta \in [\vvc]_{\Box}$. Then, for all polymers $p$ contained in $\beta$, if $p$ contains a seed monomer, then the set of monomer types for monomers in $p$ is exactly the set of monomer types for monomers in $\hat{p}$.
\end{claim}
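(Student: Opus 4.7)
The approach is to show that the type of the monomer sitting at each grid position in $p$ is forced to equal the type of the monomer at the same position in $\hat p$. The two key enabling facts are (a)~Claim~\ref{clm:stable-grid}, which guarantees $p$ contains a monomer at every position $(x,y)\in \{0,\dots,2t-1\}\times\{0,\dots,s-1\}$, and (b)~Claim~\ref{clm:stable-seed-or-cap}, which rules out any capping monomer from $p$. Combined with the saturation of $\beta$, these facts force every domain of every monomer of $p$ to be bound to a complementary domain carried by another monomer of $p$ (no capping monomer is available inside $p$ to soak up a dangling domain, and any binding edge lives in a single connected component). The argument then mirrors the deterministic propagation of tile placements in the underlying zig-zag aTAM system $\calT_M$ that defined $\hat p$.

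The plan is first to pin down the monomers of column~$0$. The seed monomer's codomains $d_n^{*}$ must each be bound, and the only monomer types carrying a matching primary domain $d_n$ are the column-$0$ computation monomers whose west glue encodes $i_n$, together with their matching capping monomers; the caps are excluded by (b). By construction of the zig-zag aTAM, exactly one such input monomer type exists per row (and analogous reasoning forces the blank tape cells in rows $|i|\le n < s$), so the column-$0$ monomer types in $p$ coincide with those in $\hat p$. I would then induct on the column index $x$. Assuming the monomer types of column $x-1$ in $p$ match those in $\hat p$, each monomer at position $(x-1,y)$ exposes an ``east'' output codomain (with orientation flipped for odd $x-1$); saturation and the cap-freeness of $p$ force it to bind to the corresponding west domain of a column-$x$ computation monomer in $p$. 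Because the zig-zag aTAM is directed, the two input glues at $(x,y)$ — horizontal from column $x-1$ and vertical from the adjacent cell within column~$x$, dealt with by a secondary induction starting at the column-end monomer whose input side connects only to column $x-1$ — uniquely determine the simulated tile, hence the monomer type at $(x,y)$. The end monomers of column $2t-1$ are handled identically, with the extra constraint that their $g^{*}$ codomains bind back to the seed's $g_n$ domains, which is automatic from saturation.

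To close the argument, I would observe that $\hat p$ arises from the very same deterministic simulation, so the monomer type appearing at each position $(x,y)$ in $\hat p$ equals the one just forced in $p$, yielding equality of the two monomer-type sets. The main obstacle I expect is ruling out that two distinct monomers in $p$ share a grid position, which would silently enlarge the type set: this needs a counting argument showing that position $(x-1,y)$ exposes exactly one east codomain and can therefore saturate the west input of only one column-$x$ monomer, so saturation of $\beta$ together with Claim~\ref{clm:2-comps} precludes duplicates. A secondary nuisance is writing the inductive step uniformly across the two column orientations (bottom-up for even $x$ and top-down for odd $x$), which is conceptually identical but notationally heavier.
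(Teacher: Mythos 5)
Your overall skeleton is the same as the paper's: use Claim~\ref{clm:stable-grid} for full grid coverage, Claim~\ref{clm:stable-seed-or-cap} to exclude caps, saturation to force bindings, and the determinism of the zig-zag simulation to pin down the type at each location, proceeding along the zig-zag order (the paper phrases this as a $<_{zz}$-minimal counterexample rather than a forward induction, which is only a cosmetic difference). However, the mechanism you use to force the bindings is inverted, and as stated it does not follow from saturation. Saturation forces every \emph{primary} (input) domain of a computation/end monomer, and the seed's $g$ domains, to be bound --- this is exactly the content of the enthalpy count $H(\beta)=2\cdot\#comp(\vvc)+2\cdot\#end(\vvc)+s$ from Claim~\ref{clm:saturated}, and it holds because the complementary codomains are in excess (every input has a cap carrying its complement). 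It does \emph{not} force codomains to be bound: the seed's codomains $d_n^*$ and the ``east'' output codomains of a column-$(x-1)$ monomer can dangle in a saturated configuration, since the complementary primary domains elsewhere in the collection can all be satisfied by capping monomers in other polymers. So your base case (``the seed monomer's codomains $d_n^*$ must each be bound'') and your inductive step (``saturation and cap-freeness force the east output codomain of $(x-1,y)$ to bind the west domain of a column-$x$ monomer'') are not justified by the reasons you give; this is the crux of the argument, not a side condition.

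The repair is to run the forcing through the input side, which is what the paper does throughout: the monomer at $(x,y)$ (which exists in $p$ by Claim~\ref{clm:stable-grid}) has its west and vertical \emph{input} domains bound by the enthalpy count, these cannot bind a cap inside $p$ by Claim~\ref{clm:stable-seed-or-cap}, and the only remaining carriers of the location-coded complementary codomains are the seed (for column $0$) or computation/end monomers at the $<_{zz}$-preceding position, whose types are already fixed by the inductive hypothesis; determinism of the transition function then fixes the type at $(x,y)$. (This is also why the paper's Claim~\ref{clm:stable-grid} propagates from the seed's primary $g$ domains through the end monomers \emph{westward}, rather than from the seed's codomains eastward.) Once stated this way, your worry about two distinct monomers sharing a grid position evaporates: the claim concerns the \emph{set} of monomer types, and the repaired inductive hypothesis quantifies over every monomer of $p$ at a given position, so no separate counting argument (beyond Claim~\ref{clm:2-comps}, which you already cite) is needed.
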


\begin{proof}(Proof of Claim~\ref{clm:seed-comp})
By Claim~\ref{clm:stable-grid}, $L(p) = \{(x,y) | 0\leq x\leq 2t-1 \text{ and } 0\leq y \leq s - 1\}$.  For a polymer $p$, let $T(p)$ denote the set of monomer types in $p$. For the sake of contradiction, assume that $T(p)\neq T(\hat{p})$. One can observer that there exists a type in $T(p)$ that is not in $T(\hat{p})$.
To see this, note that by Claim~\ref{clm:stable-grid}, $T(p)\geq T(\hat{p})$, so it cannot be the case that $T(p)\subsetneq T(\hat{p})$.

 Then, there exists a computation or end monomer in $p$ with type in $T(p) \setminus T(\hat{p})$. Let $m$ be a computation or end monomer in $p$ such that 1) the type of $m$ is in $T(p) \setminus T(\hat{p})$, and 2) for any computation or end monomer $m'$ with type in $T(p) \setminus T(\hat{p})$, $loc(m) <_{zz} loc(m')$. (Recall the definition of the relation $<_{zz}$ defined in Section~\ref{sec:hard-coded-notation}.) Consider the domains of $m$. Let $d_h$ and $d_v$ denote the two domains of $m$ such that $or(d_h)=h$ and $or(d_v)=v$.  Note that it cannot be the case that there are monomers in $\hat{p}$ with codomains $d_h^*$ and $d_v^*$ for otherwise $m$ would have a monomer type in $T(\hat{p})$. So, either $\hat{p}$ does not contain a monomer with codomain $d_h^*$ or does not contain a monomer with codomain $d_v^*$.
First suppose that $d_h^*$ is not contain in any monomer belonging to $\hat{p}$. Then, either the domain $d_h$ of $m$ (in $\beta$) is not bound to any codomain, in which case $\beta$ is not saturated, or $d_h$ is bound to a capping monomer or seed monomer, which contradicts Claim~\ref{clm:stable-seed-or-cap}. In either case, we arrive at a contradiction. Now suppose that $\hat{p}$ does not contain a monomer with codomain $d_v^*$. An identical argument to the one for $d_h^*$ leads to the same contradiction. Therefore, $T(p) = T(\hat{p})$, which proves the claim.
\end{proof}

The following claim is our final claim before we finish the proof of Theorem~\ref{thm:hard-coded-sim}.

\begin{claim}\label{clm:one-stable}
    $|[\vvc]_{\Box}| = 1$.
\end{claim}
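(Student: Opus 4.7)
The plan is to take an arbitrary $\beta\in[\vvc]_\Box$ and show that it coincides, as a collection of polymers with their bond structures, with the configuration $\alpha$ constructed in Claim~\ref{clm:alpha-stable}. Since $\alpha\in[\vvc]_\Box$, this yields $|[\vvc]_\Box|=1$. The four structural claims already proved (\ref{clm:stable-seed-or-cap}, \ref{clm:2-comps}, \ref{clm:stable-grid}, and \ref{clm:seed-comp}) do most of the work: they pin down where each seed and capping monomer sits and what monomer types accompany them. What remains is to check that the bond assignment inside each seed-containing polymer is forced, and that the leftover monomers form exactly the two-monomer capped polymers and singleton caps appearing in $\alpha$.

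First I would classify the polymers of $\beta$. By Claim~\ref{clm:stable-seed-or-cap}, every polymer of $\beta$ contains exactly one seed monomer or exactly one capping monomer, and not both. For a seed-containing polymer $p$, Claim~\ref{clm:stable-grid} shows that $L(p)$ covers the full $2t\times s$ grid, and Claim~\ref{clm:seed-comp} shows that the multiset of monomer types in $p$ equals that of the canonical simulating polymer $\hat p$ from $\alpha$. I would then argue that the bond structure of $p$ is forced: each subscripted domain $d_{(x,y)}$ carried by a monomer in $p$ admits a complementary codomain $d^*_{(x,y)}$ on exactly two monomer types, namely the unique zig-zag neighbor prescribed by the aTAM simulation and that neighbor's matching cap. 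Since $p$ contains no capping monomer, and $\beta$ is saturated, the only way to bind $d_{(x,y)}$ is to the zig-zag neighbor, so the induced binding graph of $p$ is isomorphic to that of $\hat p$ and correctly encodes $M(i)$.

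Next I would count what is left. Each of the $\#(\vvc,m_i)$ seed-containing polymers consumes one copy of each computation and end monomer type appearing in $\hat p$, so the remaining computation and end monomers must lie in capping-containing polymers by Claim~\ref{clm:stable-seed-or-cap}, and no such polymer contains two of them by Claim~\ref{clm:2-comps}. Saturation forces both domains of every remaining computation or end monomer to be bound, and (outside a seed-containing polymer) the unique monomer whose two codomains match those two domains is the matching cap, so each such leftover monomer sits with its matching cap in a two-monomer polymer. The remaining caps, whose matching computation or end monomers were all consumed by seed-containing polymers, must therefore form singleton polymers. This polymer inventory is exactly that of $\alpha$, so $\beta=\alpha$.

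The main obstacle is the bond-forcing step for seed-containing polymers: one must verify case by case (horizontal versus vertical domains and interior, column-end, and end-monomer positions) that the subscripted codomain $d^*_{(x,y)}$ really appears only on the zig-zag neighbor and its cap among the monomer types of $\hat p$. This is a straightforward enumeration using the construction of Sections~\ref{sec:construction-comps} and~\ref{sec:construction-explosion}, and once carried out it rules out any alternative saturated matching inside a seed-containing polymer, sealing the argument.
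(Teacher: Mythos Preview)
Your proposal is correct and follows essentially the same approach as the paper. The paper's own proof is a single line citing Claims~\ref{clm:stable-seed-or-cap}, \ref{clm:2-comps}, and \ref{clm:seed-comp}; you have simply spelled out the reasoning the paper leaves implicit, namely the bond-forcing argument inside seed-containing polymers and the classification of the leftover comp/end and cap monomers into two-monomer capped polymers and singleton caps.
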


\begin{proof}(Proof of Claim~\ref{clm:one-stable})
This claim follows from Claims~\ref{clm:stable-seed-or-cap},~\ref{clm:2-comps}, and~\ref{clm:seed-comp}.
\end{proof}

We now finish the proof of Theorem~\ref{thm:hard-coded-sim}.
Referring to Definition~\ref{def:simulate-TM}, the set input monomer types, $\mathcal{I}$, is $\calM_{seed}$. The set of output monomer types $\mathcal{O} = \calM_{end}$. The encodings $E_{input}$ and $E_{output}$ are given by the construction. 
We then note that by Claim~\ref{clm:one-stable}, there is one stable configuration,  $\alpha$ say, in $[\vvc]_\Box$. Let $p$ be any polymer in $\alpha$ such that $p$ contains a monomer with type in $\calM_{seed}$. Then, $p$ is equal to $\hat{p}$ ($\hat{p}$ is defined above). Therefore, $p$ contains a monomer with type in $\mathcal{I}$, namely the seed monomer, such that
\begin{enumerate}
\item by Claim~\ref{clm:stable-seed-or-cap}, $p$ contains only one monomer $m$ with type in $\mathcal{I}$, and moreover, $E_{input}(m) = i$ by construction of types $\calM_{seed}$, and
\item the subset of monomers in $p$ consisting of exactly the monomers with type in $\mathcal{O}$, which we denote by $p_{\mathcal{O}}$, is such that $E_{output}(p_{\mathcal{O}}) = M(i)$, where $M(i)$ is the output of the Turing machine $M$ on input $i$. This follows from the description of $p = \hat{p}$.
\end{enumerate}
\end{proof}

\section{Details of arbitrary Boolean circuit simulation via TBN}\label{sec:circuits-append}
\subsection{Kinetic pathway and geometric plausibility}
Abandoning tile assembly gives no argument of geometric plausibility, but the construction given here satisfies that the bond graph of the polymer containing the seed in the stable configuration of the TBN is mostly isomorphic to the directed, acyclic graph (DAG) which models the circuit--- mostly isomorphic in the sense that it includes just $n$ more edges than the DAG for the circuit, where $n$ is the input length, and instead of $n$ input nodes, the TBN has one input monomer bound to the first set of gate monomers; the inter-connectivity of the gate monomers.

Intuitively, this gives that the ``geometric complexity'' of the circuit being simulated determines the ``geometric complexity'' of the polymer which performs the computation--- e.g., if your circuit is planar, so is the bond graph of the computing polymer in the stable configuration of the TBN.
\subsection{Arbitrary Boolean circuit simulation construction}
An $n$-input, $m$-output, $k$-fan-in, $l$-fan-out \emph{arbitrary circuit} is modeled by a directed, acyclic graph $C$ with three types of nodes: $n$ \emph{input} nodes with in-degree zero, $m$ output nodes with in-degree one and out-degree zero, and \emph{gate} nodes with maximum in-degree $k$ and out-degree $l$.
Each input node is associated with a unique Boolean variable selected from an \emph{input} $\{b_1, b_2, \cdots, b_n\} $ with each $b_j \in \{0,1\}$.
Each gate node $g_k$ with in-degree $i$ and out-degree $o$ is associated with a function $f_k : \{0,1\}^i \rightarrow \{0,1\}^o$.
Given an input string for the input nodes, each outgoing edge from a gate node $g_k$ is associated with a Boolean variable, uniquely selected from $f_k(\iota)$ with $\iota$ a sequence of $\{0,1\}$'s determined by the incoming edges for the gate--- this is computed starting at the gate nodes which have incoming edges from the input nodes.
The output corresponding to the input $i$ is a sequence generated from the Boolean variables associated with the incoming edges to the output nodes (with an arbitrary ordering fixed for all inputs; e.g., the choice of selecting the output bit string from the output nodes must be consistent independent of the input bit string).

Using the notation and a similar definition as given in Section~\ref{sec:hard-coded-comp} of a TBN simulating a TM, the definition of a TBN simulating an arbitrary circuit follows:
\begin{definition}
A monomer collection $\vvc$ for the TBN $\calT$ \emph{simulates} an arbitrary circuit $C$ on input $I$ if and only if
there exist encodings $E_{input}$ and $E_{output}$ such that for every stable configuration $\alpha$ in $[\vvc]_\Box$, if $\alpha$ contains a polymer that contains a monomer with type in $\mathcal{I}$, then 
\begin{enumerate}
\item for $S_{in}$ the set of monomers in $\alpha$ with types in $\mathcal{I}$,  $E_{input}(S_{in}) = i$,
\item for $S_{out}$ the set of monomers in $\alpha$ with types in $\mathcal{O}$,  $E_{output}(S_{out}) = C(i)$, where $C(i)$ is the output of the Boolean circuit evaluated on input $i$.
\end{enumerate}
\end{definition}

Given an arbitrary circuit $C$ and an input $i$, we construct a simulating TBN $\calT = \{\calD, \calM \}$ by the following scheme, using the same decomposition as Section~\ref{sec:construction-comps} of monomers into four sets:
\begin{enumerate}
    \item The \emph{seed monomers}, constructed equivalently as described in Section~\ref{sec:construction-comps}: one seed monomer for each possible bit string input.
    \item The \emph{computation monomers}: for each node $v$ in the circuit of out-degree $o$, for all sequences of Boolean variables $s$ of length $i$ equal to the in-degree of $v$, with $f_v(s)_j$ equal to the $j^{th}$ bit of $f_v(s)$, let $d(s_k)$ be a domain constructed by concatenating $s_k$ with a subscript $u,v$ where $u$ is the node connected to $v$ by the edge which is associated with the input variable $s_k$--- e.g., the domain may be $0_{u,v}$ or $1_{u,v}$--- and let $d(f_v(s)_j)$ be a domain constructed by evaluating $f_v(s)_j$ and adding superscript $*$ and subscript $v,w$ where $w$ is the node attached to $v$ by the edge associated with $f_v(s)_j$--- e.g., the domain may be $0_{v,w}^*$ or $1_{v,w}^*$ depending on the input $s$ and $f_v$. Construct a monomer for each $v$ and $s$ as follows: $m_v^s = \{d(s_1), d(s_2), \cdots, d(s_i), d(f_v(s)_1), d(f_v(s)_2),\cdots d(f_v(s)_o)\}$.
    \item The \emph{end monomers}: for each output node $u$ in the circuit, include in $\calM$ two monomers: $\{0_{v,u}, g_{u}^*\}$ and $\{1_{v,u}, g_u^*\}$ where $v$ is the gate node sharing an edge with $u$. $g_u^*$ are the domains complementary to those on the seed monomers.
    \item The \emph{capping monomers}, constructed equivalently as described in Section~\ref{sec:construction-comps}; i.e., a capping monomer for each monomer in the computation and end monomer sets.
\end{enumerate}

Given this scheme, the proof of Theorem~\ref{thm:hard-coded-sim} follows for this construction as well, to prove the following corollary:

\begin{corollary}\label{cor:circuit_sim}
For any arbitrary circuit $C$, there exists a set of primary domain types $\calD$, and sets of monomer types $\calM$, $\calM_{seed}$, and $\mathcal{O} \subset \calM$ consisting of monomers with binding domains in $\calD\cup \calD^*$
such that, for any valid input $i$ to $C$, the following properties hold.

\begin{enumerate}
    \item there exists a monomer type $m_i$ in $\calM_{seed}$ such that $m_i$ encodes $i$, 
    \item for $\calM_i = \calM \cup \{m_i\}$, there exists a monomer collection $\vvc$ for TBN $\calT_i = (\calD, \calM_i)$ such that $\vvc$ simulates $C$ on input $i$, and
    \item the set of output monomer types for the simulation is equal to $\mathcal{O}$.
\end{enumerate} 
\end{corollary}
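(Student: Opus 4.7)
The plan is to follow the four-phase template used for Theorem~\ref{thm:hard-coded-sim}, substituting the DAG structure of the circuit for the zig-zag grid. First, given $C$ and input $i$, I would fix the monomer collection $\vvc$ so that $\#(\vvc, m_i) \le c_{min} \le k_{min}$, where $c_{min}$ (resp. $k_{min}$) is the minimum count across all computation or end monomer types (resp. capping monomer types); this is the circuit analogue of the counting constraint used in Section~\ref{sec:hard-code-proof}. I would then exhibit an explicit configuration $\alpha$ consisting of three kinds of polymers: (i) for each seed copy, a single polymer containing that seed together with exactly one computation monomer $m_v^{s_v}$ for each gate $v$, where $s_v$ is the input sequence to $v$ obtained by propagating $i$ through $C$, plus one end monomer per output node encoding the corresponding bit of $C(i)$ and bound to the matching $g_u$ domain on the seed; (ii) size-2 polymers pairing every remaining computation or end monomer with its unique matching cap; and (iii) leftover caps as singletons.

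Next I would show $\alpha$ is saturated by replaying the $\alpha_0 \to \alpha_1 \to \cdots \to \alpha_{n+1}$ bond-swap argument of Claim~\ref{clm:saturated}: start from the all-caps configuration in which every computation and end monomer is paired with its matching cap and every seed domain $g_u$ is bound to a $g_u^*$ on some cap-bound end monomer, and iteratively swap a single cap for the corresponding predecessor-gate bond in the seed-containing polymer. Each swap breaks as many bonds as it creates, so enthalpy is preserved throughout; at the end, the wrap-around step that redirects each end monomer's $g_u^*$ onto the seed (freeing one cap) matches the circuit analogue of the final step in the TM case.

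The heart of the argument is the entropy upper bound, corresponding to Claims~\ref{clm:bound-to-seed-or-cap}--\ref{clm:entropy-bound}. I would replace the induction on $x$-coordinate by an induction on topological depth in $C$. In the base case, a gate $v$ whose only predecessors are input nodes has input-side domains $b_{u,v}$ whose complements appear only on seeds (encoding input bits) or on $v$'s matching cap, so in any saturated $\beta$ the polymer containing $m_v^{s}$ must contain a seed or a cap. In the inductive step, each input-side domain of a deeper gate binds to either a cap, a seed, or a predecessor computation monomer that already belongs to a seed- or cap-containing polymer. The same reasoning applies to end monomers via their $b_{v,u}$ and $g_u^*$ domains. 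This yields $S(\beta) \le \#(\vvc, m_i) + \#\mathrm{cap}(\vvc) = S(\alpha)$, so $\alpha$ is stable.

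Finally, uniqueness and correctness mirror Claims~\ref{clm:stable-seed-or-cap}--\ref{clm:seed-comp}: in any stable $\beta$, a seed-containing polymer $p$ cannot also contain a cap (entropy is tight), so every input-side domain of every computation monomer in $p$ must be bound to a codomain belonging to another monomer of $p$. The seed's codomains $b_{u,v}^*$ then force the first layer of gate monomers in $p$ to use input bits matching $i$; by induction on the DAG depth, the unique choice of $m_v^{s_v}$ that is consistent at every interface propagates the circuit evaluation forward, and the end monomers in $p$ are forced to encode $C(i)$. The main obstacle I expect is the DAG induction in the third and fourth steps: unlike the zig-zag grid where the leftmost column provided an unambiguous base case and column-by-column progression, arbitrary fan-in and fan-out mean I must track which output codomain of $m_v^{s_v}$ binds to which successor and rule out saturated but ``miswired'' polymers where two siblings in the fan-out of one gate disagree on the bit it emits; acyclicity of $C$ together with the uniqueness of the $(s_k, u, v)$ subscript on each domain is what makes this bookkeeping go through.
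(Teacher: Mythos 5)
Your proposal is correct and follows essentially the same route as the paper, which proves the corollary simply by asserting that the proof of Theorem~\ref{thm:hard-coded-sim} carries over to the circuit construction; your substitution of induction on topological depth in the DAG for the induction on $x$-coordinate, with the edge subscripts $(u,v)$ playing the role of the hard-coded locations, is exactly the intended transfer. The fan-out consistency issue you flag is indeed the right thing to worry about, and your resolution (acyclicity plus uniqueness of the per-edge domain subscripts forcing a unique gate monomer per gate in any seed-containing polymer) is the correct analogue of Claims~\ref{clm:2-comps} and~\ref{clm:seed-comp}.
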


\section{Technical details of a negative result on porting computing systems from the aTAM to the TBN model}
\label{sec:aTAM-fail-details}%

In this section we provide the technical details of the argument of Section~\ref{sec:aTAM-fail}, which shows what occurs when standard aTAM systems which are designed to simulate Turing machines are treated as TBNs. This is similar to the TBN designed for the proof of Theorem~\ref{thm:hard-coded-sim} but without creating unique, hard-coded monomer types for each location.  Essentially, in a naive but more monomer-type-efficient construction, there is no guarantee of correct simulation, and we present a property of the simulated computation which can be inspected to determine if a sufficient (although perhaps not necessary) condition exists whereby the TBN can incorrectly simulate the computation.

\begin{figure}
    \centering
    \includegraphics[height=1.5in]{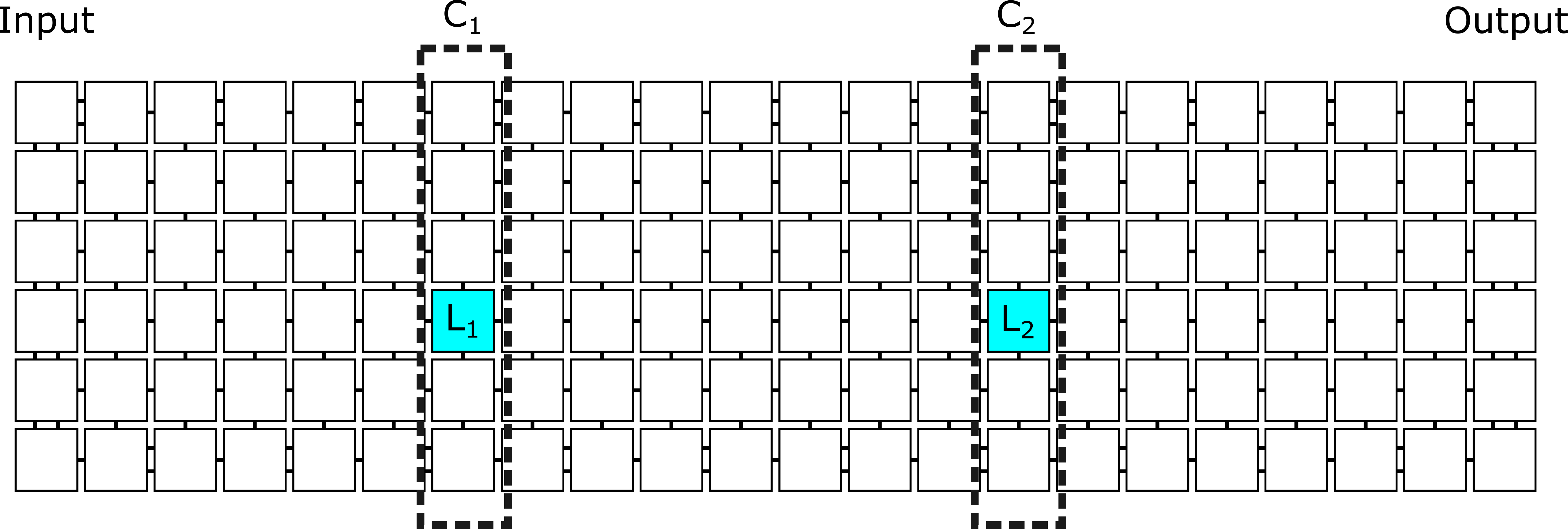}
    \caption{Schematic view of the terminal assembly of a zig-zag aTAM system which simulates a Turing machine.}
    \label{fig:comp-error-locations}
\end{figure}

Let $M$ be an $s$-space-bounded Turing machine and $T_M$ be a standard zig-zag aTAM tile set which simulates $M$.  (Note that this argument will also apply when $M$ is not space-bounded.) For $n \in N$ where $n$ is a valid input to $M$, let $T_n$ be the set of ``input'' tile types which assemble the binary representation of $n$ as a vertical column to serve as the input to $T_M$.  Then the aTAM system $\calT_{M(n)} = (T_M \cup T_n, \sigma_n, 2)$, where $\sigma_n$ is simply the first tile of $T_n$ at the origin, simulates $M(n)$. We refer to the unique terminal assembly of $\calT_{M(n)}$ as $S_n$ (i.e. simulation $n$).  Additionally, if $X$ is a set of coordinate locations, by $S_n(X)$ we refer to the subassembly of $S_n$ contained at the locations of $X$. Let $i \ne j \ne k$ be valid inputs to $M$, and $S_i$, $S_j$ and $S_k$ be the terminal assemblies of $\calT_{M(i)}$, $\calT_{M(j)}$, and $\calT_{M(k)}$, respectively, such that the following conditions hold:

\begin{enumerate}
  \setlength{\itemsep}{0pt}
  \setlength{\parskip}{0pt}
    \item The outputs $M(i) \ne M(k)$
    \item There exist columns (i.e. sets of all tile locations in a given column) $c_1$ and $c_2$, and individual tile locations $l_1$ and $l_2$ within $c_1$ and $c_2$, respectively, such that:
    \begin{enumerate}\label{cond:matching-locs}
        \item $S_i(c_1) = S_j(c_1)$ (i.e. both columns have the exact same tile types in the same locations) except at location $l_1$, where they have differing tile types with different glues on their west sides (which would represent different cell values for the respective simulated tape cells of $M$)
        \item $S_j(c_2) = S_k(c_2)$ except at location $l_2$ where they have tile types which differ in their west glues
        \item $S_i(l_1) = S_k(l_2)$
        \item $S_j(l_1) = S_j(l_2)$
        \item $S_i(c_2) \ne S_k(c_2)$
    \end{enumerate}
\end{enumerate}

Figure~\ref{fig:comp-error-locations} shows a schematic diagram of an example $S_n$ and locations $c_1$, $c_2$, $l_1$, and $l_2$.  The conditions of \ref{cond:matching-locs} essentially establish that although the various computations differ in their inputs and therefore at least portions of their computations, there are portions which are similar, which would be the case for many computations on nearly similar inputs.  The positions and types of these similarities are carefully chosen to support the argument which follows.

We'll define a TBN $\mathcal{N}$ which simulates $M$ using the same procedure as in Section~\ref{sec:construction-comps} (and thus skipping the blow-up performed in Section~\ref{sec:construction-explosion}).

Now, using Definition\ref{def:simulate-TM}, of simulation, and the same notions of inputs and outputs and encodings, we create a monomer collection $\vvc$ for $\mathcal{N}$ which has (only) the following constraints. Let $\#(\vvc,m)$ be the count of the monomers of type $m$ in $\vvc$, and  $c_{min} = min\{\#(\vvc,m) | m \in \mathcal{M}_{comp}\}$ (i.e. the number of copies of the computation monomer with the fewest number of copies), and $k_{min} = min\{\#(\vvc,m) | m \in \mathcal{M}_{cap}\}$. Then in $\vvc$ it must be that the number of seed monomers $\#(\vvc,m_i) \le c_{min} \le k_{min}$.

We now analyze $\mathcal{N}$ and note that while it is largely similar to the system $\calT$ used in the proof of Theorem~\ref{thm:hard-coded-sim}, with its domains and monomers designed so that it is possible for them to combine in a pattern analogous to the way the tiles in the zig-zag aTAM system $\calT_M$ combine to simulate $M$, there is an important difference in the two TBNs.  While in $\calT$ a set of unique monomer types exists for every row and column of the simulated computation and thus each monomer type occurs no more than one time in the polymer representing the computation, in $\mathcal{N}$ the same monomer types may be reused many times.  More specifically, if the conditions on $M$ and $\calT_M$ discussed above hold, it is possible to create a polymer which is saturated and nearly identical in structure to the polymer representing the correct computation, but which represents an invalid computation and output.  To prove this, we now describe how to construct such a polymer.

We know by the definition of the domains and polymers of $\mathcal{N}_n$ that it is possible for configurations to exist with polymers $p_i$, $p_j$, and $p_k$ (whose monomers have direct mappings with the tiles of assemblies $S_i$, $S_j$, and $S_k$) within the monomer collections $\vvc_i$, $\vvc_j$, and $\vvc_k$ of TBNs $\mathcal{N}_i$, $\mathcal{N}_j$, and $\mathcal{N}_k$, respectively.  We first note that such polymers would correctly represent simulations of their respective computations, and that in each polymer every domain of every included monomer could be bound.  Given the monomer counts of the collections, each input monomer could be contained within such a polymer, and all leftover computing monomers could be bound to their matching caps, and only caps in excess of the monomers that they ``cap'' remaining as unbound monomers, causing the configurations to be saturated. This results in configurations with maximum enthalpy and whose entropy is maximized among configurations with that maximum enthalpy (following the same argument as in the proof of the construction for Theorem~\ref{thm:hard-coded-sim}). However, let us now focus on the monomer collection utilizing TBN $\mathcal{N}_i$, which is simulating the computation $M(i)$, and assume that the conditions related to inputs $i$, $j$, $k$ and the aTAM assemblies which simulate them are true.  In this case, (and please refer to Figure~\ref{fig:comp-splicing-rewiring} for reference), we can form a polymer by including the input monomer $m_i$ and attaching all of the monomers which would have attached in $p_i$ up to, but not including, those which would logically represent column $c_1$ of $S_i$. We can then continue by attaching the monomers consistent with $S_j$ starting from column $c_1$ until, but not including, $c_2$.  Finally, we could finish the polymer by attaching the monomers consistent with $S_k$ from $c_2$ through the output.  Due to the constraints above (included in \ref{cond:matching-locs}), this could result in all monomer domains being bound to their logical neighbors except for $4$ domains.  These would be the domains corresponding the the western glues of the monomers representing tiles that are at locations $l_1$ and $l_2$, and those of the eastern glues of the monomers representing the tiles to their left sides. However, the conditions of \ref{cond:matching-locs} ensure that these $4$ domains form $2$ pairs of complementary domains, which can bind together.  This also yields a polymer and configuration which is saturated and which also maximizes entropy and thus is stable, as is the configuration containing only polymers with correctly simulated computations.  Unfortunately, it also has the property of having output monomers which map to $M(k)$, and by our selection of $i$ and $k$, we know that $M(i) \ne M(k)$, and therefore the monomer collection $\vvc_i$ over $\mathcal{N}_i$ does not correctly simulate $M(i)$.

\section{Technical Details of Efficient simulation of Turing machines in GTBNs}\label{sec:GTBN-append}

In this section, we present the technical details of the proof of Theorem~\ref{thm:geo-TM} from Section~\ref{sec:GTBN}.

\begin{proof}

\begin{figure}
    \centering
    \includegraphics[width=\linewidth]{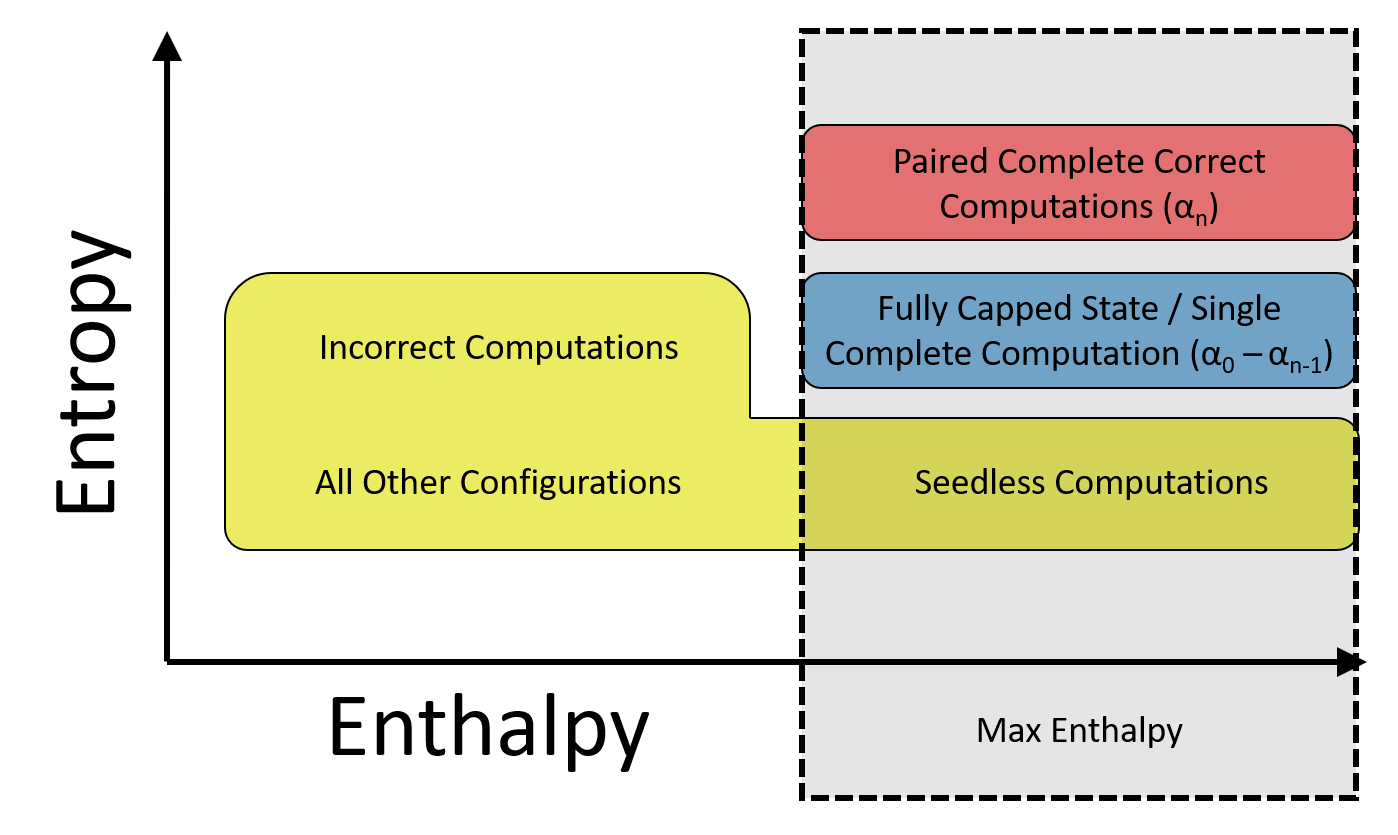}
    \caption{This graph shows gaps in both enthalpy and entropy of different polymers in our system. The most favorable polymer should always be the paired computations.}
    \label{fig:enthalpy_entropy_graph}
\end{figure}

We prove Theorem~\ref{thm:geo-TM} by construction.

Let $M = (Q, \Sigma, \Gamma, \delta, q_0, q_H)$ be the Turing machine which decides $L$, with state set $Q$, input alphabet $\Sigma$, tape alphabet $\Gamma$, transition function $\delta:(Q,\Gamma) \rightarrow (Q,\Gamma,D)$ (where $D$ is the set of directions $\{L,R\}$), start state $q_0$, and halting state $q_H$\footnote{For our construction, we assume that $M$ halts by entering $q_H$ with its tape head on a cell containing a $1$ for an accepting computation or a $0$ for a rejecting computation.}. We define the geometric monomers of $\calT_i$ based on the definition of $M$ and logically grouping them by functionality into a few main components. Note that these are logically very similar to those of the construction in Section~\ref{sec:hard-coded-comp}, but since they are geometric monomers we must also describe their shapes and the placements of their binding domains.  Also note that for easier visual depiction, the Turing machine simulations of this construction can be though of as oriented vertically, with each successive tape and machine state encoded by a row above the previous, while the other construction was oriented horizontally.

\begin{figure}[htp]
    \centering
    \includegraphics[width=\linewidth]{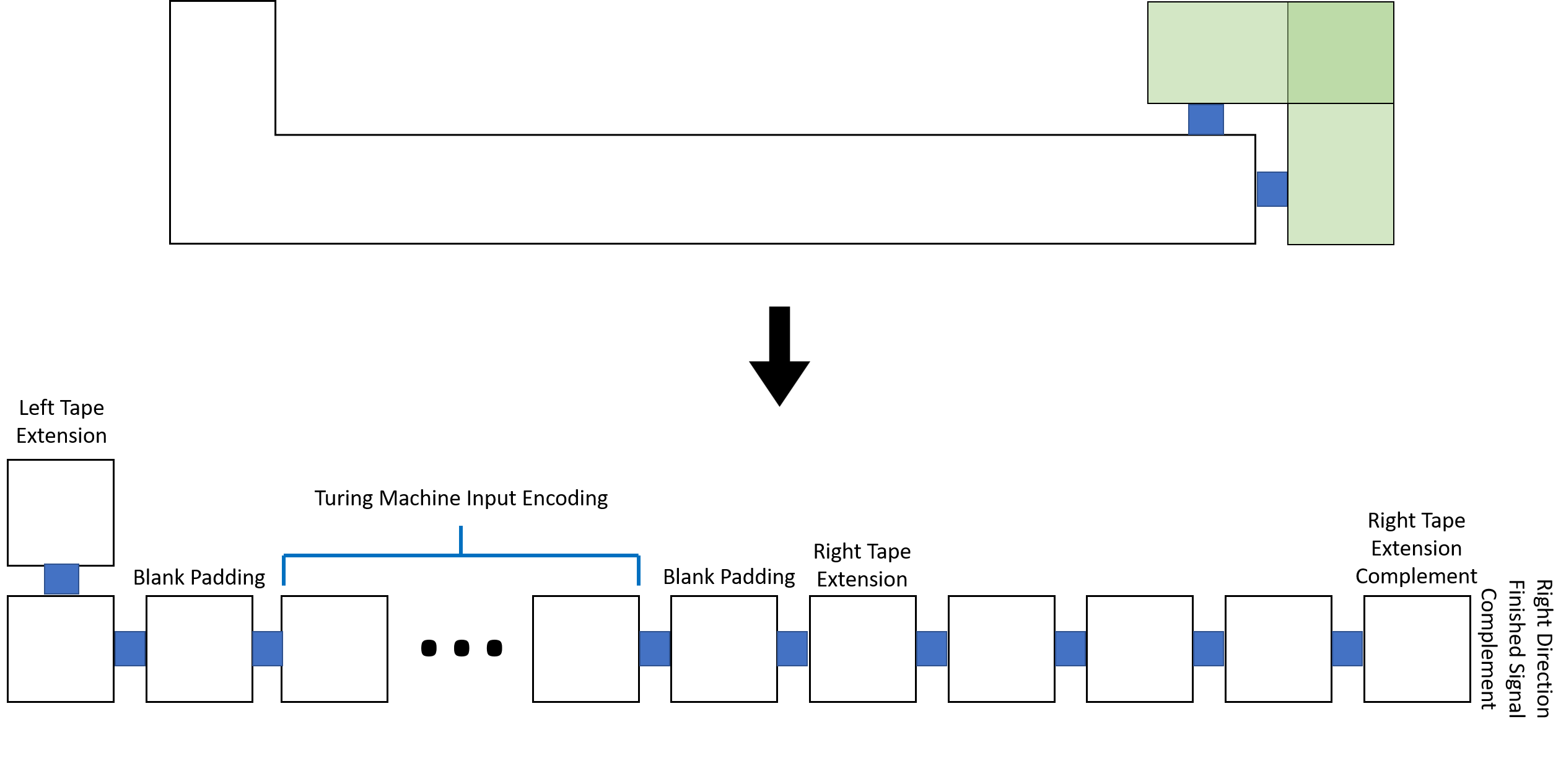}
    \caption{On the top is the illustration of the seed we use in other figures. The bottom diagram shows the specific domains that are present on the seed supertile. These include two tape extension domains, two padding domains, the TM input encoding, and the final pairing domains.}
    \label{fig:seed_breakdown}
\end{figure}

\subsection{Construction Components}

When defining specific geometric monomers, we use the convention $Monomer = \{ North, East, South, West \}$. Each direction is either a domain from $\mathcal{D} \cup \mathcal{D}^*$ or corresponds to a $\lambda$ to represent that no domain is present on that side. Note that, as we said earlier, geometric monomers are allowed to rotate, meaning this convention represents an ordering of the domains as opposed to which direction they are actually pointing. The geometric monomers of $\calT_i$ can be logically grouped into the following categories:

\begin{enumerate}

    \item The \emph{seed monomers}: We call this group of monomers $\calM_{seed}$. This group is infinite, $|\calM_{seed}| = \infty$, to account for the infinite unique strings that can be input into the simulated Turing machine. This group can further be broken down into two subgroups. The first being $\calM_{seed-input}$, an infinite subgroup that contains the monomers that actually encode the input and are the only monomers to change between different simulations of the same Turing machine. The other being $\calM_{seed-extra}$ which contains 9 monomers that are the same in every simulation and help the seed to operate correctly. The definitions of these monomers are given below and a visual representation is given in \ref{fig:seed_breakdown}.
    
    \begin{center}
    \begin{tabular}{c c c c c c}
        \emph{$M_{S1}$} $ = \{$ & $EXT\_LEFT^*,$ & $R^*,$ & $seed_{1,2},$ & $\lambda$ & $\}$ \\
        \emph{$M_{S2}$} $ = \{$ & $seed_{1,2}^*,$ & $seed_{2,3},$ & $\lambda,$ & $\lambda$ & $\}$ \\
        \emph{$M_{S3}$} $ = \{$ & $\_\_^*,$ & $seed_{3,I1},$ & $\lambda,$ & $seed_{2,3}^*$ & $\}$ \\
        \emph{$M_{I1}$} $ = \{$ & $[(I1),r]^*,$ & $seed_{I1,I2},$ & $\lambda,$ & $seed_{3,I1}^*$ & $\}$ \\
        \vdots & & & & \vdots \\
        \emph{$M_{I(k)}$} $ = \{$ & $[(Ik),r]^*,$ & $seed_{I(k),I(k+1)},$ & $\lambda,$ & $seed_{I(k-1),I(k)}^*$ & $\}$ \\
        \vdots & & & & \vdots \\
        \emph{$M_{I(n)}$} $ = \{$ & $[(In),r]^*,$ & $seed_{I(n),4},$ & $\lambda,$ & $seed_{I(n-1),I(n)}^*$ & $\}$ \\
        \emph{$M_{S4}$} $ = \{$ & $\_\_^*,$ & $seed_{4,5},$ & $\lambda,$ & $seed_{I(n),4}^*$ & $\}$ \\
        \emph{$M_{S5}$} $ = \{$ & $EXT\_RIGHT^*,$ & $seed_{5,6},$ & $\lambda,$ & $seed_{4,5}^*$ & $\}$ \\
        \emph{$M_{S6}$} $ = \{$ & $\lambda,$ & $seed_{6,7},$ & $\lambda,$ & $seed_{5,6}^*$ & $\}$ \\
        \emph{$M_{S7}$} $ = \{$ & $\lambda,$ & $seed_{7,8},$ & $\lambda,$ & $seed_{6,7}^*$ & $\}$ \\
        \emph{$M_{S8}$} $ = \{$ & $\lambda,$ & $seed_{8,9},$ & $\lambda,$ & $seed_{7,8}^*$ & $\}$ \\
        \emph{$M_{S9}$} $ = \{$ & $EXT\_RIGHT,$ & $done,$ & $\lambda,$ & $seed_{8,9}^*$ & $\}$
    \end{tabular}
    \end{center}
    
    Note that in the monomer descriptions, any character(s) in parentheses is a place holder. Therefore, $(I1)$, $(Ik)$, and $(In)$ represent the character that those monomers encode from ${0,1}$, whereas $(n)$ represents the length of the input string and $(k)$ is just the number of any character $1 < k < n$. Intuitively, the seed works by doing 4 things: first, it starts the left and right extension columns which will be further discussed in the next item. Second, it pads the input with two blank characters, which prevents the Turing machine from ever being able to run over the edge of the tape. Third, it "starts" the computation by providing the first $R^*$ domain. Lastly, it has a few monomers that attach to the right that will later connect two independent computations together to give a final entropy gap. Recall that $\vvc$ is defined using $\calM_i = \calM \cup \{m_i\}$ for GTBN $\mathcal{T}_i = (\mathcal{D},\mathcal{M}_i)$ where $M$ is all other monomers discussed other than $\calM_{seed}$ and $m_i$ is $\calM_{seed-extra}$ plus the monomers $M_{I(k)}$ for each $1 \le k \le |n|$.

    \item The \emph{computation monomers}: We call this set of monomers $\calM_{comp}$. It consists of two mutually exclusive subsets as well, the first being $\calM_{comp-passive}$ which consists of 6 monomers that pass values from line to line, each representing a value from the cross product of the line directions and tape alphabet, i.e. ${L,R} \times {\_\_, 0, 1}$. The definitions of the $\calM_{comp-passive}$ are below, substituting the value being passed into $(v)$:
    
    \begin{center}
    \begin{tabular}{c c c c c c}
        \emph{$M_{(v),L}$} $ = \{$ & $[(v),r]^*,$ & $L,$ & $[(v),l],$ & $L^*$ & $\}$ \\
        \emph{$M_{(v),R}$} $ = \{$ & $[(v),l]^*,$ & $R^*,$ & $[(v),r],$ & $R$ & $\}$
    \end{tabular}
    \end{center}
    
    The other subset is $\calM_{comp-transition}$ which consists of monomers that encode the logic of the transition function. The size of $\calM_{comp-transition}$ is asymptotically the same as the number of states times the size of the tape alphabet, i.e.  $|\calM_{comp-transition}| = O(Q \times \Gamma)$. The construction of these monomers is such that, for ever possible $(q,s) \in Q \times \gamma$, we generate the following monomers using the specifications in the transition function $\delta(q,s) = (q', s', D)$ and $\forall v \in \Gamma$.
    
    When $D = L$:
    
    \begin{center}
    \begin{tabular}{c c c c c c}
        \emph{$M_{(q,s)-skip}$} $ = \{$ & $[(q,s),l]^*,$ & $R^*,$ & $[(q,s),r],$ & $R$ & $\}$ \\
        \emph{$M_{(q,s)-move1}$} $ = \{$ & $[(s'),r]^*,$ & $L,$ & $[(q,s),l],$ & $(q')^*$ & $\}$ \\
        \emph{$M_{(q,s)-move2}$} $ = \{$ & $[(q',v),r]^*,$ & $(q'),$ & $[(v),l],$ & $L^*$ & $\}$
    \end{tabular}
    \end{center}
    
    When $D = R$:
    
    \begin{center}
    \begin{tabular}{c c c c c c}
        \emph{$M_{(q,s)-skip}$} $ = \{$ & $[(q,s),r]^*,$ & $L,$ & $[(q,s),l],$ & $L^*$ & $\}$ \\
        \emph{$M_{(q,s)-move1}$} $ = \{$ & $[(s'),l]^*,$ & $(q')^*,$ & $[(q,s),r],$ & $R$ & $\}$ \\
        \emph{$M_{(q,s)-move2}$} $ = \{$ & $[(q',v),l]^*,$ & $R^*,$ & $[(v),r],$ & $(q')$ & $\}$
    \end{tabular}
    \end{center}

    \item The \emph{tape extension monomers}: Because our goal is to simulate Turing machines that are not space-bounded, as opposed to the traditional TBN hard-coded system that simulates space-bounded Turing machines, we need additional monomers in our system that extend the tape of the Turing machine. We define this group of eight monomer types to be $\calM_{ext} = \{M_{L1}, M_{L2}, M_{L3}, M_{L4}, M_{R1}, M_{R2}, M_{R3}, M_{R4}\}$. Because of the small number of these monomers, we define each individually:
    
    \begin{center}
    \begin{tabular}{c c c c c c}
        \emph{$M_{L1}$} $ = \{$ & $extL_{4,1}^*,$ & $extL_{1,2},$ & $\lambda,$ & $\lambda$ & $\}$ \\
        \emph{$M_{L2}$} $ = \{$ & $extL_{2,3},$ & $L,$ & $EXT\_LEFT,$ & $extL{1,2}^*$ & $\}$ \\
        \emph{$M_{L3}$} $ = \{$ & $[\_\_,l]^*,$ & $R^*,$ & $extL_{2,3}^*,$ & $extL_{3,4}$ & $\}$ \\
        \emph{$M_{L4}$} $ = \{$ & $EXT\_LEFT^*,$ & $extL_{3,4}^*,$ & $extL_{4,1},$ & $\lambda$ & $\}$ \\
        \emph{$M_{R1}$} $ = \{$ & $extR_{4,1}^*,$ & $\lambda,$ & $\lambda,$ & $extR_{1,2}$ & $\}$ \\
        \emph{$M_{R2}$} $ = \{$ & $extR_{2,3},$ & $extR_{1,2}^*,$ & $EXT\_RIGHT,$ & $R$ & $\}$ \\
        \emph{$M_{R3}$} $ = \{$ & $[\_\_,r]^*,$ & $extR_{3,4},$ & $extR_{2,3}^*,$ & $L^*$ & $\}$ \\
        \emph{$M_{R4}$} $ = \{$ & $EXT\_RIGHT^*,$ & $\lambda,$ & $extR_{4,1},$ & $extR_{3,4}^*$ & $\}$
    \end{tabular}
    \end{center}
    
    These monomers don't encode values, just blank spaces that will pass up to higher rows to potentially be used in the Turing machine simulation. By utilizing these monomers, we ensure that the tape is always big enough for the computation to continue. An example of these supertiles is shown in figure \ref{fig:tape_extension_addition}.

    \item The \emph{end monomers}: We call this group of monomers $\calM_{end}$. These monomers only connect to the computation after the simulated TM goes into a halting state and act similar to the monomers in $\calM_{comp-passive}$ while also passing a signal that the computation is finished. The monomers will form one row if passing above the halting state $M_{(q_H),(v)}$ to the right or two rows if passing above the halting state to the left. This will ensure that the top row of the final computation monomer is a right-growing row, which is necessary for the eventual pairing of two complete computations. It is this final row that is input into the encoding function $E_{output}$ to read the result of the TM.
    
    For all halting states $(q_H)$ and output characters $(v) \in {0,1}$, should the halting state first be read with a left passing row, it attaches the following monomer:
    
    \begin{center}
    \begin{tabular}{c c c c c c}
        \emph{$M_{(v),haltL}$} $ = \{$ & $[(q_H,v),r]^*,$ & $L,$ & $[(q_H),(v),l],$ & $L^*$ & $\}$
    \end{tabular}
    \end{center}
    
    and on the subsequent right passing row, it will attach one of the following monomers, the same monomer that is checked for as input to the $E_{output}$ function:
    
    \begin{center}
    \begin{tabular}{c c c c c c}
        \emph{$M_{(v),haltR}$} $ = \{$ & $\lambda$ & $halt^*,$ & $[(q_H,v),r],$ & $R$ & $\}$
    \end{tabular}
    \end{center}
    
    and lastly, it will attach a string of the following monomers to finish off the top row, the last of which will also attach to the seed of the paired computation:
    
    \begin{center}
    \begin{tabular}{c c c c c c}
        \emph{$M_{(v),R1}$} $ = \{$ & $\lambda,$ & $halt^*,$ & $[(v),r],$ & $halt$ & $\}$ \\
    \end{tabular}
    \end{center}

    \item The \emph{capping monomers}: Caps in this system are very important because they prevent unwanted polymers from forming and make it easier to argue about the outcome of the system. Each of the monomer types talked about previously, the seed monomers, computation monomers, tape extension monomers, and end monomers all have caps. Each cap consists of three monomers, which form an "L" shape to reach both inputs of the capped monomer. The caps all have codomains $x^* \in \calD^*$, since every input in the system is from the set of domains $x \in \calD$, which prevents them from sticking together. While there are too many caps to explicitly define each individually, we do give an example of the seed cap:

    \begin{center}
    \begin{tabular}{c c c c c c}
        \emph{$M_{seed-cap1}$} $ = \{$ & $\lambda,$ & $seedcap_{1,2},$ & $EXT\_RIGHT^*,$ & $\lambda$ & $\}$ \\
        \emph{$M_{seed-cap2}$} $ = \{$ & $\lambda,$ & $\lambda,$ & $seedcap_{2,3},$ & $seedcap_{1,2}^*$ & $\}$ \\
        \emph{$M_{seed-cap3}$} $ = \{$ & $seedcap_{2,3}^*,$ & $\lambda,$ & $\lambda,$ & $halt^*$ & $\}$
    \end{tabular}
    \end{center}
    
    All other caps follow this same convention, just replacing $seedcap_{1,2}$ and $seedcap_{2,3}$ with other unique glues and $halt^*$ and $EXT\_RIGHT^*$ with the codomains that correspond to the inputs of the monomer to be capped.
    
\end{enumerate}

There are two key dynamics our system uses, the first of which is the formation of \emph{supertiles}. A supertile in our system is a set of monomers that have a unique set of matching domains/codomains between them but aren't capped with respect to one another. Therefore, it is always favorable to have them bind together, regardless of how other monomers in the system bind. The supertile itself can be capped, however, and all caps in this system are actually supertiles themselves as well. The other supertiles in our construction are the seed and tape extension pieces. Supertiles allows us to utilize the properties of polyominoes in our system while only requiring unit square shaped monomers. To simplify our arguments, we assume that the counts of all the monomers in the same supertile to be the same, as to avoid dealing with partially formed supertiles. Therefore, the counts of every monomer type from $\calM_{seed}$ in $\vvc$ are equal, the counts of every monomer type from $\calM_{ext}$ in $\vvc$ are equal, and the counts of every monomer type that makes up one cap are equal. Note that this assumption is analogous to having polyominoes replace supertiles in our system, which we avoided because we felt unit square geometric monomers were the TBN model that most naturally corresponded to the traditional aTAM model.

The other key dynamic is a \emph{fully capped state}. We utilize one other assumption in our system, that the count of each cap $x$ in the system is always greater than the number of monomers or supertiles that $x$ corresponds to. We use the term \emph{fully capped state} to refer to a configuration of the system in which every monomer or supertile that has a cap is bound to that cap through both domains. The individual monomers that bind supertiles are still bound in this state too, but no other polymers should be bound together. We use this state as a reference, somewhat of a neutral baseline, to show how correct computations are favored and all other non-capped polymers are unfavored.

In addition to the geometric monomers of $\calT$, we give \emph{relative counts} for the monomers in $\vvc$. This mean our system does not rely on exact counts of each monomer, but rather a group of assumptions about the relative counts. To show that every seed supertile is part of a complete paired computation polymer, we require that the number of cap supertiles in $\vvc$ is greater than the number of computation monomers, tape extension supertiles, and end monomers, whose counts in $\vvc$ are greater than the number of seed supertiles by a factor of $O(f(n)^2)$ (where $L \in DTIME(f(n))$ is the language being decided by the Turing machine $M$). In other words, we rely on the assumptions that: (1) every pair of seed supertiles has enough computation monomers, tape extension supertiles and end monomers to create a complete paired computation polymer (which is the reason for needing $O(f(n)^2)$ of those types relative to seed types), and (2) every computation monomer, tape extension supertile, and end monomer can have its own cap if it wasn't bound to anything else.

One last bit of terminology that we use in the following proofs is \emph{inputs} and \emph{outputs}. By inputs, we are referring to the two domains (not codomains) on the seed, computation monomers, tape extension supertiles, and end monomers. These two domains on each monomer/supertile mentioned are the domains that can bind to the associated cap. By outputs, we are referring to all of the codomains (not domains) on the seed, computation monomers, tape extension supertiles, end monomers and capping supertiles. While all computation monomers have two outputs, the seed can have any number, tape extension supertiles have three, and end monomers can have one or two. Notice that outputs only bind to inputs, and inputs only bind to outputs. The only domains present in our system that are not inputs or outputs are the unique domains that bind supertiles together.

\begin{claim}\label{clm:fcs_saturated}
    The fully capped state is saturated.
\end{claim}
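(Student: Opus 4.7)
The plan is to prove saturation by characterizing exactly which binding sites remain unbound in the fully capped state and then observing that all such unbound sites share the same ``polarity'' (i.e. they are all codomains), so no complementary pair of unbound sites exists.

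First, I would give a careful enumeration of the fully capped configuration. By definition, every seed supertile, computation monomer, tape extension supertile, and end monomer is bound to its matching capping supertile through both of its input sites. In addition, every supertile (seed, tape extension, capping) has all of its internal binding sites bound, since by assumption the counts of monomers forming each supertile are equal, so each supertile forms completely. The only other monomers/supertiles present are the excess capping supertiles, which exist as singleton polymers by the relative counts assumption. I would set up the dichotomy between \emph{inputs} (domains, i.e.\ elements of $\calD$), \emph{outputs} (codomains, i.e.\ elements of $\calD^*$), and the supertile-internal glue pairs used only to hold supertiles together.

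Next, I would verify by direct inspection of the monomer definitions in Section~\ref{sec:GTBN-append} that (i) the two sites of every capping supertile exposed on its ``outside'' are codomains of the forms matching the inputs of a unique main monomer or supertile, (ii) the two non-internal exposed sites of each computation monomer, tape extension supertile, seed supertile, and end monomer that are inputs are domains (not codomains), and (iii) all remaining non-internal exposed sites of those main objects are codomains. Since everything in (i) is bound to inputs of main objects, and everything in (ii) is bound to caps, the only sites left unbound in the fully capped state are (a) the output codomains of main monomers/supertiles whose inputs are bound to caps, and (b) the output codomains on excess singleton capping supertiles. In particular, every unbound site is a codomain in $\calD^*$.

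To finish, I would invoke the definition of saturation: a configuration fails to be saturated iff there exist complementary unbound sites $d$ and $d^*$. Since every complementary pair consists of one primary domain and one codomain, and we have just shown that no primary domain in the fully capped state is unbound, no such pair can exist, so the fully capped state is saturated. The main obstacle is purely bookkeeping: one must verify, monomer type by monomer type, that the ``input = domain, output = codomain'' dichotomy really does hold as stated, with no hidden exposed domain (for instance on the seed or on the tape extension supertile $M_{R2}$, which has the domain $R$ on its west side that could be easy to overlook). Once that case analysis is done, the saturation argument is immediate.
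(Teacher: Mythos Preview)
Your approach is essentially the same as the paper's. Both arguments partition the domain types into supertile-internal glues (which are all bound because supertile monomer counts are assumed equal) and input/output types, and then observe that every primary-domain instance of the latter kind is an \emph{input} and hence bound to a cap in the fully capped state, so no unbound $d$/$d^*$ pair can exist. The paper phrases this as ``no domain in the system will be left unbound'' (meaning no primary domain), while you phrase it as ``every unbound site is a codomain''; these are the same statement. Your write-up is more explicit about the input/output polarity dichotomy and flags the bookkeeping cases (e.g.\ the $R$ domain on $M_{R2}$ and the seed's exposed input) that the paper glosses over, but there is no substantive difference in strategy.
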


\begin{proof}
    The argument for this claim is simple. Recall that a system is saturated if there are no labels in which both a domain and a codomain are left unbound. Any element $x \in \calD$ in our system falls into two categories, $\calD_{supertiles}$ which are unique domains used to form supertiles, and $\calD_{capped}$ which are input or output domains used to encode states, values, signals, etc. and always have a cap. In the fully capped state (and all other configurations), supertile monomers are always bound together forming the maximum number of supertiles and ensuring as many domains in $\calD_{supertiles}$ are bound as possible. For domains in $\calD_{capped}$, because the number of complete cap supertiles for each pair of inputs $(I1, I2)$ in $\vvc$ is greater than the number of monomers or supertiles of the type that corresponds to those inputs, every domain in $\calD_{capped}$ will also be in a bond. Therefore, no domain in the system will be left unbound, meaning that the fully capped state of $\vvc$ is saturated.
\end{proof}

\subsection{The Desired Configuration}

We now argue about the desired configuration of $\vvc_i$, which we call $\alpha$, that consists of 5 types of polymers. 1) polymers containing two instances of the seed supertile, computation monomers corresponding to two correct and complete computations of $M$ on input $i$, a series of left tape extension supertiles and a series of right tape extension supertiles, and one final partial row of end monomers that, along with the final right tape extension supertiles, connect to one instance of the seed 2) polymers that consist of a single capping supertile bound to a single computation monomer and bound by $2$ binding domains, 3) polymers that consist of a single capping supertile bound to a single tape extension supertile and bound by $2$ binding domains, 4) polymers that consist of a single capping supertile bound to a single end monomer and bound by $2$ binding domains, and 5) polymers that are single capping supertiles. For polymers described by (1), we let $n$ denote the number of computation monomers, tape extension supertiles, and end monomers in such a polymer plus one. We will prove that $\alpha$ is saturated, and then show that $\alpha$ is the only stable configuration of $\vvc_i$.

\begin{claim}\label{clm:geo_saturated}
    $\alpha$ is saturated.
\end{claim}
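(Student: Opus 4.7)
The plan is to reduce the saturation of $\alpha$ to the saturation of the fully capped state (fcs) by showing that $H(\alpha) = H(\text{fcs})$. Since fcs is saturated by Claim~\ref{clm:fcs_saturated}, its enthalpy equals the model-theoretic maximum $\sum_{a\in\calD} \min(\#a, \#a^*)$. Any other configuration achieving the same total enthalpy must, for each domain type $a$, also attain $\min(\#a,\#a^*)$ matched bonds, which rules out any coexisting unbound $a$ and $a^*$. So it suffices to produce an enthalpy-preserving transformation from fcs to $\alpha$.

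First I would separate the domains into the supertile-internal domains $\calD_{\text{supertiles}}$ (which are bound identically in every configuration of $\vvc_i$ because supertiles always assemble) and the capped input/output domains $\calD_{\text{capped}}$. The supertile-internal bonds are common to fcs and $\alpha$, so I only need to track bonds incident to $\calD_{\text{capped}}$.

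Next I would describe $\alpha$ as the result of a finite sequence of \emph{bond-swap} steps applied to fcs, in the same spirit as the recursive construction of $\alpha_0,\alpha_1,\ldots$ used in the proof of Claim~\ref{clm:saturated}. Enumerate, for each pair of seed supertiles that form a paired computation polymer of $\alpha$, the computation monomers, tape extension supertiles, and end monomers of that polymer in the zig-zag order induced by the simulated $M$-computation. At each step, take the next such monomer $m$, break its two bonds to its matching cap supertile $K$ (freeing $K$), and bind those two domains of $m$ to the complementary codomains already exposed on the partially built polymer. Because the computation, extension, and end monomer types were designed from the zig-zag aTAM simulation of $M$, the complementary codomain for each domain of $m$ is guaranteed to be present and unbound at the appropriate face of the growing polymer. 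Every step breaks exactly two bonds and forms exactly two bonds, so enthalpy is preserved. The final step, which joins the two computations by binding the top end-monomer row of one computation into the \emph{$M_{S6}$}--\emph{$M_{S9}$} extension of the other seed, is handled identically: each dislodged seed-cap bond is replaced by a bond to the incoming end monomer row, again preserving enthalpy.

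Finally I would conclude $H(\alpha) = H(\text{fcs})$ from the enthalpy-neutrality of every step, and then apply the domain-type-wise argument in the first paragraph to deduce that $\alpha$ has no unbound complementary pair and is therefore saturated. The main obstacle is bookkeeping in the bond-swap sequence: I must verify that at every step the required complementary codomain on the growing paired polymer is in fact unbound and geometrically adjacent to the incoming monomer (so that the ``formed bond'' is a legitimate GTBN bond), and that the swap involving the seed--end connection between the two paired computations fully accounts for all $EXT\_RIGHT$, $halt$, and $seed_{i,i+1}$ domains exposed by the $M_{S5}$--$M_{S9}$ portion of each seed. Once this is checked, saturation of $\alpha$ follows immediately.
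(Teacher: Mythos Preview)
Your proposal is correct and follows essentially the same approach as the paper: both construct a sequence of configurations from the fully capped state to $\alpha$ by enthalpy-neutral bond swaps (break two cap bonds, form two polymer bonds), then conclude saturation from $H(\alpha)=H(\text{fcs})$ together with Claim~\ref{clm:fcs_saturated}. The paper phrases the intermediate steps via a ``concave corner'' at the growing polymer's top row and treats the final pairing step separately, exactly as you outline; your explicit remark that equal enthalpy to a saturated configuration forces the per-domain-type matching bound $\min(\#a,\#a^*)$ is a clean justification the paper leaves implicit.
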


\begin{proof}

    Similar to the proof of claim \ref{clm:saturated}, we define a sequence of configurations $\alpha_i \in seq(\alpha)$ for $0 \le i \le n$ where $\alpha_0$ is the fully capped state, $\alpha_n = \alpha$, and for every $i > 1$, $\alpha_i$ is saturated if $\alpha_{i-1}$ is saturated. Furthermore, we also define $p(\alpha_i)$ to be the polymer that contains the seed(s) in the configuration $\alpha_i$ and $p(seq(\alpha))$ to be the set of all polymers $p(\alpha_i)$ for $0 \le i \le n$, i.e. a seed supertile, the complete paired computation, and everything in between. The difference between every $\alpha_{i-1}$ and $\alpha_i$ is the transition of just one computation monomer, tape extension supertile, or end monomer from a polymer with its cap to the polymer $p(\alpha_i)$. Notice at any point in the sequence before $\alpha_n$ there will always be one (and only one) concave corner at the end of the partially formed top row of $p(\alpha_i)$, where two unbound output domains will be adjacent to the same unit square location. Because of the deterministic nature of $M$, there is exactly one type of computation monomer, tape extension supertile, or end monomer (or seed when $i = n-1$) with input domains that correspond to the unbounded output domains at this location. The piece that corresponds to this input is the piece that transitions from its cap to $p(\alpha_i)$ in the step $\alpha_{i-1} \rightarrow \alpha_i$. The only exception is the final step $\alpha_{n-1} \rightarrow \alpha_n$ which is the pairing of two complete computation polymers together. This step works by starting with two independent computations polymers that have completed their respective instances of the simulation and have one final concave corner location with the unbound output domains $halt^*$ and $EXT\_RIGHT^*$ exposed. These correspond to the capped input domains on the far right of the seed supertile. In this step $\alpha_{n-1} \rightarrow \alpha_n$, both seeds break off their respective cap and bind to the unbound output domains on the other computation polymer.
    
    We view the enthalpy and entropy difference of each step $\alpha_{i-1} \rightarrow \alpha_i$ as independent of the rest of the system, considering the rest of the system is constant while the transition is happening. From figure \ref{fig:monomer_additions}, it is easy to see how the transition of computation monomers, tape extension pieces, and end monomers from their respective caps to $p(\alpha_i)$ is always enthalpy and entropy neutral. Each transition breaks two bonds and makes two more bonds and starts with two polymers and ends with two polymers. Fast forwarding to the end of the sequence, it is also clear from figure \ref{fig:complete_computation_gaps} that the overall enthalpy and entropy of the $\alpha_{n-1}$ configuration is the same as the overall enthalpy and entropy of $\alpha_0$ (in this example, 86 bonds and 43 polymers). The last step of the sequence, illustrated in figure \ref{fig:paired_computation_gap}, shows how the pairing of two complete computation polymers is enthalpy neutral (breaking 4 bonds and making 4 bonds) but actually gives an entropy bonus of $+1$ to the system (starting with 2 polymers and ending with 3 polymers). Because this final configuration has the same enthalpy as the fully capped state, which is saturated, we know this configuration is saturated as well.

\end{proof}

\begin{claim}\label{clm:one_output}
    A polymer in $\alpha_n$ that contains two seeds will contain two monomers of the type $M_{(v),haltR}$ that encode the same value of $v$.
\end{claim}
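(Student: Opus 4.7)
The plan is to trace through the paired polymer's structure and invoke determinism of $M$. First I would observe that in the monomer collection $\vvc_i$ the only seed monomer types present belong to $m_i \subset \calM_{seed}$, which collectively encode the single input $i$. Consequently, any polymer in $\alpha_n$ that contains two seeds must contain two copies of the same seed supertile (up to rotation/translation), both encoding the string $i$. Thus the two ``sides'' of any paired polymer start from identical initial tape/head/state configurations.

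Next I would appeal to the inductive construction of the sequence $\alpha_0, \alpha_1, \ldots, \alpha_n$ used in the proof of Claim~\ref{clm:geo_saturated}. At each step $\alpha_{j-1}\to\alpha_j$ the growing polymer $p(\alpha_j)$ has exactly one concave corner whose two exposed output codomains uniquely determine which input piece (computation monomer, tape extension supertile, or end monomer) can transition in from its cap. By the definitions of the $\calM_{comp-transition}$, $\calM_{comp-passive}$, $\calM_{ext}$, and $\calM_{end}$ monomers, these unique matches faithfully encode $\delta$, the blank-cell insertions at tape boundaries, and the termination behavior. Hence each side of the paired polymer is forced to record, row by row, precisely the sequence of tape/state/head configurations produced by $M$ on input $i$.

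Since $M$ is deterministic, both sides reach the halting state $q_H$ after exactly the same number of simulated steps, with the head on the same tape symbol $v\in\{0,1\}$. The construction inserts an $M_{(v),haltR}$ monomer in the top right-moving row of each side with value $v$ equal to the character under the head at the moment $q_H$ is entered. Because $M(i)$ is a single fixed value, both occurrences of the $M_{(v),haltR}$ type in the paired polymer must encode this same $v = M(i)$.

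I expect no serious obstacle; the only care required is to make explicit that the concave-corner matching at each step of the $\alpha_0\to\alpha_n$ sequence is in fact unique, so that the two halves cannot ``diverge'' via alternative monomer attachments. This uniqueness is already implicit in the construction (every pair of output codomains that can appear at a concave corner is complementary to inputs of exactly one monomer type in the zig-zag-style encoding of $\delta$), and the geometric rigidity of the GTBN rules out the kind of splicing described in Section~\ref{sec:aTAM-fail}. Once that observation is recorded, the deterministic simulation of $M$ on both sides yields the claim immediately.
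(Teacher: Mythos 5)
Your proposal is correct and follows essentially the same route as the paper: both arguments rest on the determinism of $M$ together with the uniqueness of the monomer or supertile attached at each step of the $\alpha_0,\dots,\alpha_n$ sequence, so the two paired computations are identical and their $M_{(v),haltR}$ monomers encode the same $v$. The only cosmetic difference is that the paper establishes the existence of the two $M_{(v),haltR}$ monomers by tracing the path of $\calM_{comp-transition}$ monomers started by each seed (ending in $M_{(v),haltL}$ or $M_{(v),haltR}$), whereas you read it off from the construction of the end rows; the substance is the same.
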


\begin{proof}
    First, notice the south input to the $M_{(v),haltR}$ monomer is the $[(q_H,v),r]$ input domain. The corresponding output domain $[(q_H,v),r]^*$ is only present on the north face of the $M_{(q,s)-move2} \in \calM_{comp-transition}$ monomer. However, the $\calM_{comp-transition}$ set is designed so that every monomer in the set can only bind one of its input domains to one of the output domains of another monomer in the set or the seed. Therefore, the seed starts one path of monomers exclusively in $\calM_{comp-transition}$. This path ends when either $M_{(v),haltL}$ or $M_{(v),haltR}$ is attached. If $M_{(v),haltL}$ is attached first, it subsequently attaches one monomer of the $M_{(v),haltR}$ type. Because this path is started by the seed and the main polymer of $\alpha$ has two seeds, this polymer will also contain two monomers of the type $M_{(v),haltR}$.
    
    Because the simulated Turing machine is deterministic, every step from $\alpha_i$ to $\alpha_{i+1}$ has a unique monomer or supertile making the transition from its cap to the computation polymer. Therefore, the two independent computations that are paired together in the main polymer of $\alpha$ must be identical, therefore ensuring that the two monomers of type $M_{(v),haltR}$ encode the same value $v$.
\end{proof}

\begin{figure}[htp]
\centering
    \begin{subfigure}{1\linewidth}
        \includegraphics[width=1\linewidth]{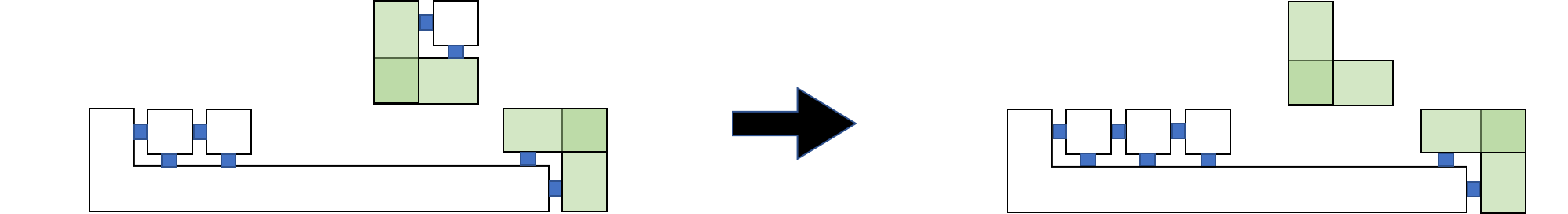}
        \caption{Computation Monomer (End Monomers)}
        \label{fig:computation_monomer_addition}%
        \end{subfigure}
        \begin{subfigure}{1\linewidth}

        \includegraphics[width=1\linewidth]{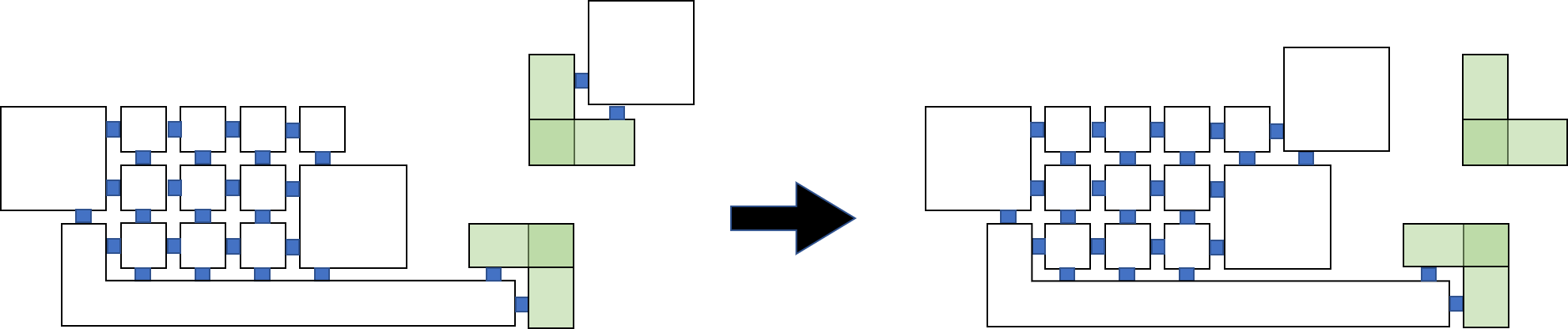}
        \caption{Tape Extension Piece}
        \label{fig:tape_extension_addition}%
        \end{subfigure}
  \caption{The addition of all correct pieces to the seed are enthalpy and entropy neutral. Each breaks two bonds with the cap and forms two more with the computation while also remaining at two polymers before and after the transition.}
  \label{fig:monomer_additions}
\end{figure}

\begin{figure}
    \centering
    \includegraphics[width=\linewidth]{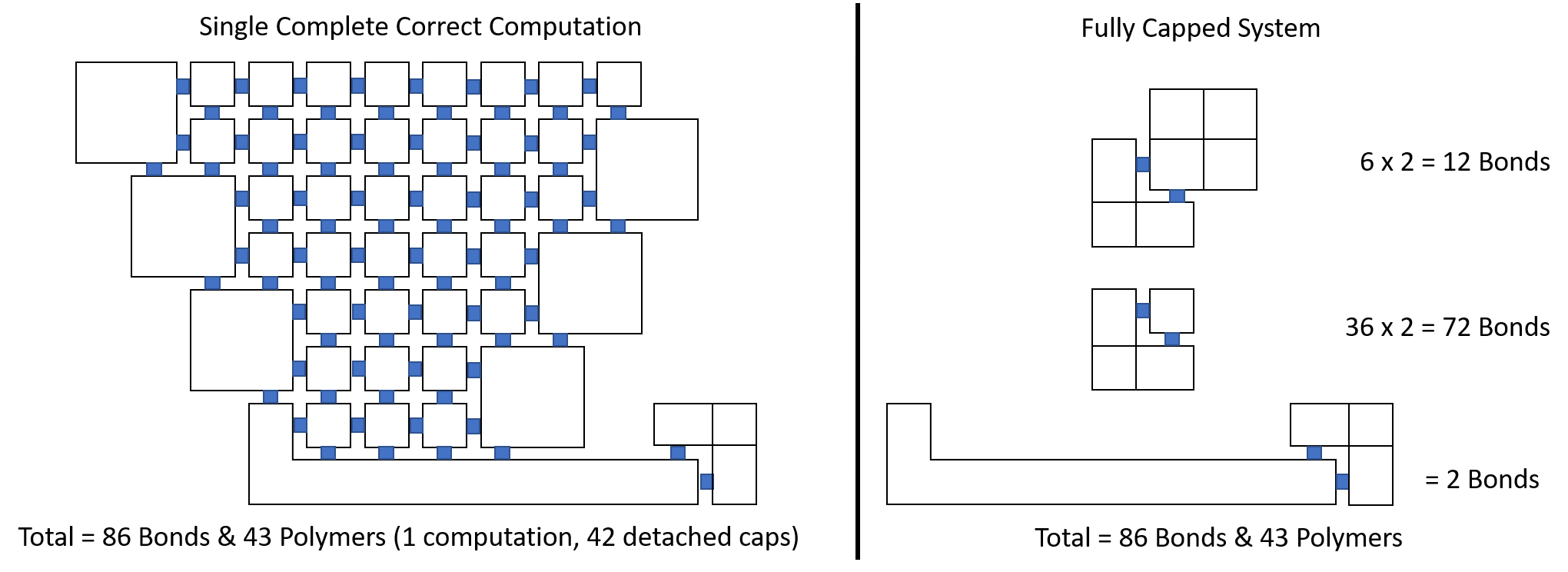}
    \caption{On the left is a completed computation that has not yet been paired. Compared to the fully capped state, it has the same number of bonds and polymers, meaning it still has neutral enthalpy and entropy.}
    \label{fig:complete_computation_gaps}
\end{figure}

\subsection{Incorrect Computations}

We already showed that $\alpha$ is saturated. In order to show that $\alpha$ is stable, we need to show that all other saturated configurations in $\vvc$ have less entropy. We start by showing the majority of configurations in $\vvc$ aren't saturated at all. Then we show the few that are also saturated create less polymers than $\alpha$.

First, we prove a couple claims that help us show that configurations not in $seq(\alpha)$ are unstable.

\begin{claim}\label{clm:unbound_input}
    Assuming an excess of caps, any configuration of $\vvc$ that contains at least one monomer/supertile with an unbound input is not saturated.
\end{claim}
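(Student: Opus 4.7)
My plan for Claim~\ref{clm:unbound_input} is a straightforward pigeonhole (counting) argument on complementary domain labels. Fix a configuration $\beta$ of $\vvc$ and suppose some monomer or supertile $m$ in $\beta$ has an unbound input $d$; the task is to exhibit an unbound $d^*$ somewhere in $\beta$, which would directly witness non-saturation by Definition.

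First I would pin down what ``input'' means in the construction: it is a domain (not codomain) living on a monomer or supertile for which a capping supertile has been built, and by design every such $d$ has a complementary codomain $d^*$ on exactly the associated cap. Let $T_d$ denote the set of monomer/supertile types carrying $d$ among their inputs, and for each $t\in T_d$ let $c_t$ denote the capping supertile whose codomain set contains $d^*$. The excess-of-caps hypothesis gives $\#(\vvc, c_t) > \#(\vvc, t)$ for every $t \in T_d$, using the standing convention that every constituent monomer of a supertile has the same count in $\vvc$, so that supertile counts are well-defined.

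The core step is to compare the total number $N(d)$ of $d$ domains in $\vvc$ to the total number $N(d^*)$ of $d^*$ codomains. Summing over types, $N(d) = \sum_{t\in T_d}\#(\vvc, t)$, whereas the caps alone contribute $\sum_{t\in T_d}\#(\vvc, c_t)$ to $N(d^*)$; any further $d^*$ codomains appearing as outputs on computation, tape-extension, end, or seed pieces only add to this total. The per-type excess-of-caps inequality then gives $N(d^*) > N(d)$. Since any bond assignment is a matching between $d$ domains and $d^*$ codomains, the number of bound $d^*$ codomains equals the number of bound $d$ domains, which is at most $N(d) - 1$ because the $d$ on $m$ is unbound. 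Hence at least $N(d^*) - (N(d) - 1) \ge 2$ codomains of type $d^*$ are unbound in $\beta$, and any such $d^*$ paired with the unbound $d$ on $m$ witnesses non-saturation of $\beta$.

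The step requiring the most care is the accounting at the supertile level: for instance, some ``inputs'' of a tape-extension supertile are domains sitting on distinct constituent monomers of the same supertile, so ``the cap $c_t$ for type $t$'' must be interpreted as the specific capping supertile designed for the outward-facing inputs of $t$, not for individual monomers inside the supertile. Once counts are defined uniformly at the supertile level, the counting above goes through without any further subtlety, and the claim follows.
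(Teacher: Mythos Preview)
Your proof is correct and rests on the same counting idea as the paper: the excess-of-caps assumption guarantees that for every input domain type $d$ there are strictly more $d^*$ codomains than $d$ domains in $\vvc$, so an unbound $d$ forces an unbound $d^*$. The paper packages this slightly differently---it invokes the fully capped state (Claim~\ref{clm:fcs_saturated}) and argues that ``inputs are the limiting factor on the maximum number of bonds,'' hence any configuration with an unbound input has strictly fewer bonds than the fully capped state and is not saturated---whereas you work directly from the definition of saturation by exhibiting the witnessing $d/d^*$ pair via a per-type pigeonhole count. Your version is a bit more explicit about the accounting (particularly the supertile-level interpretation of counts), but the two arguments are the same in substance.
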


\begin{proof}
    Given an excess of caps, the fully capped state is possible. Since the fully capped state has every input of all monomers and supertiles in the $\vvc$ bound, and since inputs are the limiting factor on the maximum number of bonds in $\vvc$, then any configuration that does not have every input bound has not formed the maximum number of bonds and is therefore unsaturated.
\end{proof}

\begin{claim}\label{clm:two_caps}
    Assuming an excess of caps, any configuration $\vvc$ that includes a polymer $p$ consisting of 2 or more caps is not stable.
\end{claim}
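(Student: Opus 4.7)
The plan is to argue by constructing an alternative saturated configuration $\beta'$ with strictly more polymers than $\beta$, contradicting stability. If $\beta$ is not saturated then the fully capped state from Claim~\ref{clm:fcs_saturated} already witnesses non-stability, so I may assume $\beta$ is saturated and contains a polymer $p$ with $k \ge 2$ cap supertiles. Let $n \ge k$ denote the number of non-cap monomers or supertiles in $p$, with $M_1, \ldots, M_k$ designated as the $k$ of them that are bound to caps. By saturation together with Claim~\ref{clm:unbound_input}, every uncapped non-cap piece in $p$ has both of its two inputs bound to outputs of other non-cap pieces in $p$; together with the $2k$ cap bonds, this accounts for exactly $2k + 2(n - k) = 2n$ inter-supertile bonds in $p$.

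To form $\beta'$ I would dissolve $p$ into $n$ ``mini-polymers'', each consisting of one non-cap piece paired with its matching cap. The $k$ caps already inside $p$ remain paired with $M_1, \ldots, M_k$, and for each of the remaining $n-k$ uncapped pieces I would draw a fresh singleton cap of the appropriate type from the excess-cap pool, leaving every other polymer of $\beta$ untouched. Because each of the $n$ mini-polymers has exactly two cap bonds, the new configuration has exactly $2n$ inter-supertile bonds across the modified region, matching $\beta$, so enthalpy is preserved. Saturation is then verified exactly as in Claim~\ref{clm:fcs_saturated}: every input domain in $\beta'$ is bound to a codomain of its matching cap, and the remaining unbound codomains (on outputs of the dissolved pieces and on any leftover singleton caps) have all of their matching domains already bound. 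Counting polymers, the modified subset contributes $1 + (n-k)$ polymers in $\beta$ (the polymer $p$ plus the consumed singletons) but $n$ polymers in $\beta'$, a net gain of
\[
n - \bigl(1 + (n-k)\bigr) = k - 1 \ge 1,
\]
which contradicts the assumed stability of $\beta$.

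The main obstacle is verifying that the excess-cap pool provides enough singletons of each required type, and that the dissolution is geometrically realizable in the GTBN. For the first point, if $u_T$ denotes the number of uncapped $T$-type pieces in $p$ and $b_T$ the number of $T$-pieces in $\beta$ bound to a cap, then $b_T \le \#T - u_T$, so the number of singleton $\mathrm{cap}_T$ supertiles in $\beta$ is $\#\mathrm{cap}_T - b_T \ge (\#\mathrm{cap}_T - \#T) + u_T > u_T$, giving strictly more than $u_T$ singletons of type $T$ to draw from. For the second point, geometric realizability follows from the fact that each cap is constructed as an ``L''-shaped supertile specifically designed to clasp the two input faces of its target monomer or supertile, so the mini-polymers are valid geometric polymers independent of one another; note moreover that the dissolution severs only inter-supertile bonds while keeping every supertile (seed, tape-extension, and cap alike) intact as a single unit.
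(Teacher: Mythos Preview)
Your argument is correct and follows essentially the same counting idea as the paper: dissolve $p$ into fully-capped pieces and observe that the entropy increases by $k-1$. The paper's version is much terser---it simply takes the $M$ non-cap pieces and $N$ caps in $p$, compares the $M+1$ polymers (the polymer $p$ together with $M$ fresh caps drawn from the pool) against the $M+N$ polymers obtained by fully capping every piece and freeing the $N$ original caps, and reads off the gain $N-1$. Your accounting differs only in that you reuse the $k$ caps already inside $p$ rather than drawing all caps fresh; the net gain $k-1$ is identical.

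One small point worth tightening: you assert $n\ge k$ and that the $k$ caps in $p$ can be matched to $k$ distinct non-cap pieces $M_1,\dots,M_k$, but you do not justify this. It does hold in a saturated configuration, because the L-shaped geometry of a cap forces its second arm to occupy the location adjacent to the piece's other input face; if that input were bound to anything other than this cap's second codomain there would be a geometric overlap, and if it were unbound the configuration would fail saturation by Claim~\ref{clm:unbound_input}. Hence every cap in $p$ fully clasps a single matching piece, giving $n\ge k$. The paper's bookkeeping (draw all caps fresh, free all $N$ original caps) sidesteps this issue entirely and would work even without that geometric observation.
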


\begin{proof}
    This only requires a simple counting proof. Assume $p$ consists of $M$ monomers and supertiles (each supertile counts as one) and $N$ caps. Considering the caps of the $M$ monomers and supertiles, this accounts for $M + 1$ polymers. However, in a fully capped state, $p$ could be broken up with its caps into $M + N$ polymers. Therefore, whenever $N > 1$, the number of polymers in the fully capped state, $M + N$, is greater the number of polymers in $\vvc$, $M + 1$, meaning $\vvc$ is not stable.
\end{proof}

\begin{claim}\label{clm:no_rotation}
    If any two monomers $m1$ and $m2$ in a single polymer are connected through a path of other monomers and supertiles that does not include the seed, $m1$ and $m2$ must have the same rotation.
\end{claim}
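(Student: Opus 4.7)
The plan is to proceed by induction on the length of the path connecting $m_1$ to $m_2$, reducing everything to the single-bond case. Concretely, if I can show that whenever two adjacent monomers $a$ and $b$ in the polymer are joined by a complementary pair $(d,d^*)$ and neither of them is a seed monomer, then $a$ and $b$ carry the same rotation, the inductive step follows by composition along the path. The base case (path of length zero) is immediate.

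The crucial structural observation driving the single-bond case is that every non-seed monomer type in the construction has four \emph{distinct} entries in its canonical $\{N,E,S,W\}$ tuple (treating the placeholder $\lambda$ as a distinguishable ``no-bond'' marker), and rotation of the monomer merely cyclically permutes these entries among the absolute compass faces of the cell. Hence, if one observes which specific label sits on any single given absolute face, the rotation is determined uniquely. Since a bond between $a$ and $b$ pins down one absolute face on each of them (the shared edge), the rotations of both participants are forced.

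It remains to verify that the rotations forced in this way necessarily agree. I would do this by case inspection over the families of complementary labels introduced in Section~\ref{sec:GTBN-append}: (i) horizontal-flow labels $L, L^*, R, R^*$, which appear only on East or West faces in canonical form; (ii) cell-value labels $[(v),r], [(v),l]$ and their duals, which appear only on North or South faces in canonical form; (iii) state-carrying labels $(q')$ and $(q')^*$ introduced by $\calM_{comp\text{-}transition}$; (iv) internal supertile-linking labels (such as $extL_{i,j}$, $extR_{i,j}$, and the cap-internal labels), which by construction form unique domain/codomain pairs that fix the rigid rotation of every monomer within a supertile relative to the others; and (v) the exposed codomains of capping supertiles, which simultaneously match two adjacent input domains of a partner non-seed monomer. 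In families (i)--(iii), the horizontal-only or vertical-only face placement of $d$ and $d^*$, combined with the orientation of the shared edge, forces the rotation modulo $90^\circ$ to coincide, and distinctness of the labels pins it down exactly. For family (v), the asymmetric L-shape of the capping supertile together with the chirality of the two input domains on the partner monomer leaves only one relative placement, which is the identity rotation.

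The main obstacle is simply the bookkeeping of this case analysis, particularly for the tape-extension and capping supertiles, where one must check that no accidental rotational symmetry in the multi-monomer supertile permits an alternative attachment rotation; this amounts to verifying that the multiset of exposed domains on each supertile, taken with their geometric positions, admits only one automorphism. The seed is explicitly excluded because its own monomers are deliberately arranged in a non-row-uniform pattern ($M_{S1}$ and $M_{S2}$ stacked vertically while $M_{S2}$ through $M_{S9}$ extend horizontally), so a path that threads through the seed can connect two halves of the polymer, namely the two paired computations of the desired configuration $\alpha$, whose rotations need not match; the claim is exactly the statement that this is the \emph{only} way the polymer can contain monomers of different rotations.
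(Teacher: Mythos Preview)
Your inductive reduction to the single-bond case is sound, and the case analysis you outline would eventually succeed, but the paper dispatches the claim in one sentence by isolating the single structural invariant that drives every one of your cases: outside the seed, every complementary pair $(d,d^*)$ in the construction is placed on \emph{opposite} canonical faces (one on North and the other on South, or one on East and the other on West). Once this is observed, the conclusion is immediate: if $a$ carries $d$ on canonical face $F$, then at rotation $\theta$ it exposes $d$ on absolute face $\theta+F$; the neighbour $b$ must present $d^*$ on the opposite absolute face $\theta+F^{\mathrm{opp}}$, and since $d^*$ sits on canonical face $F^{\mathrm{opp}}$, $b$ is also at rotation $\theta$. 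Your families (i)--(iv) are precisely the piecemeal verification of this invariant, so your route is correct but obscures the one-line reason it works.

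Two minor corrections are worth noting. First, your assertion that every non-seed monomer has four \emph{distinct} face entries is false as stated---$M_{L1}$, $M_{R1}$, and the middle cap monomers each carry two $\lambda$'s---though this is harmless, since the face participating in a bond is always a genuine domain and those are indeed pairwise distinct within each monomer. Second, your explanation of why the seed must be excepted misidentifies the mechanism. The seed's internal bonds $seed_{i,j}/seed_{i,j}^*$ are themselves all opposite-face pairs, so every monomer of the seed supertile sits at the \emph{same} rotation; the kink in its shape is irrelevant. The rotation flip arises instead from the external ``wrap-around'' bonds that pair two computations: $M_{S9}$ carries $EXT\_RIGHT$ on its \emph{North} face while $M_{R4}$ carries $EXT\_RIGHT^*$ also on its \emph{North} face (and similarly for the $halt$ connection to the end monomers), so the partner is forced to a $180^\circ$ rotation. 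This is exactly the ``seed/end monomer connection and the seed/right tape extension supertile connection'' the paper names as the lone exception to the opposite-face invariant.
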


\begin{proof}
    Although rotation of monomers and polymers relative to each other is permitted in the GTBN, we design domains so that they are only complementary in north/south or east/west pairs, except for the seed/end monomer connection and the seed/right tape extension supertile connection that allow paired computations to combine. Therefore, if two monomers are bound together in a polymer of $\vvc$ by a path that does not contain the seed, one cannot be rotated relative to the other.
\end{proof}

We are also going to make use of the idea of a \emph{subpolymer}. We define a subpolymer of a polymer $p$ to be another polymer made up of a set of ordered pairs of monomer types and corresponding locations that is a subset of the set of ordered pairs in $p$, i.e. a it is the same polymer as $p$ but with some monomers potentially missing.  Additional terminology that we will use in these proofs is \emph{signal}, \emph{pumping}, and \emph{stopper}. These proofs are all focused on finding an unbound input in a polymer. By signal, we are referring to any domain that has an input on one face of a monomer and the corresponding output on the opposite face. Examples of signals are $L$, $R$, $EXT\_LEFT$, $EXT\_RIGHT$, and $halt$. Pumping is the idea that trying to bind an unbound input instance of a signal with one of these monomers will just expose another input instance. Finally, a stopper is a monomer or supertile that can bind to an unbound input instance of a signal without exposing another input instance, thereby stopping the signal from pumping. For example, the left tape extension supertile acts as a stopper for the $R$ signal because it has an $R^*$ output domain but does not have an $R$ input domain. Likewise, the seed is a stopper for the $EXT\_LEFT$ signal because it has an $EXT\_LEFT^*$ output domain but does not have an $EXT\_LEFT$ input domain.

\begin{claim}
    Any configuration of $\vvc$ that includes a polymer $p_{incorrect}$ that contains the seed but is not in the set $p(seq(\alpha))$ is unsaturated.
\end{claim}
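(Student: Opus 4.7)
The plan is to reduce the claim to Claim~\ref{clm:unbound_input}: it suffices to exhibit at least one monomer or supertile in $p_{incorrect}$ that has an unbound input, since by that claim the surrounding configuration must then be unsaturated. To do this, I would exploit the deterministic structure of the construction together with the signal/pumping/stopper vocabulary just defined.

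The first step is to observe a rigid "forcing" property of the construction: fixing the seed and requiring all inputs of all attached pieces to be bound uniquely determines the shape and contents of any polymer containing the seed, up to a prefix in $seq(\alpha)$. Concretely, the seed exposes a particular set of output codomains (the padded tape, the initiating $R^*$ domain, and the boundary signals $EXT\_LEFT^*$ and $EXT\_RIGHT^*$), and for any pair of exposed output codomains at a concave corner there is exactly one monomer or supertile whose two inputs can fill that corner (by the way $\calM_{comp-transition}$, $\calM_{comp-passive}$, $\calM_{ext}$, and $\calM_{end}$ are designed). Using Claim~\ref{clm:no_rotation} to rule out spurious rotations of interior monomers, an inductive argument on the $\alpha_0,\alpha_1,\dots,\alpha_n$ construction of Claim~\ref{clm:geo_saturated} shows that the only saturated polymers containing the seed (with all inputs bound) are exactly those appearing in $p(seq(\alpha))$.

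Given this, I would then take $p_{incorrect}\notin p(seq(\alpha))$ and locate the earliest place where it deviates from the correct simulation. Traverse the polymer following the same order used to build $\alpha_i$ from $\alpha_{i-1}$ (seed, then row-by-row filling at the unique concave corner). At the first deviation, either (i) the correct piece is absent at the concave corner, or (ii) an incorrect piece sits there. In case (i), the signal that the absent piece was supposed to consume remains "active" on the boundary of $p_{incorrect}$; since the only stoppers for the active signal ($R$, $L$, $EXT\_LEFT$, $EXT\_RIGHT$, or $halt$) are the specific left/right tape extension supertiles, the seed, or an end monomer, and each stopper itself exposes another signal that must be stopped, pumping forces the signal chain to terminate either by returning to the correct structure (contradicting that a deviation occurred) or by leaving an input unbound on the boundary. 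In case (ii), the incorrect piece has some output that does not match the unique expected output of the correct piece; propagating this mismatch via the same pumping argument forces an incompatible signal to reach a polymer boundary, again yielding an unbound input.

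The main obstacle is making the pumping argument fully rigorous in case (ii), because an incorrect piece can temporarily agree with the correct output on one face while differing on the other, so the mismatch can propagate many rows before it manifests as an unbound input. I would handle this by defining, for each horizontal row, the finite sequence of boundary signals on its top edge and arguing by induction on row height that any deviation strictly decreases the set of signals that can still be stopped by the remaining available stopper types ($EXT\_LEFT$ on the left boundary, $EXT\_RIGHT$ on the right boundary, and $halt$ terminating the top). Because these three stoppers are each tied to a specific geometric location dictated by the seed's output codomains and the pairing mechanism, once the deviation perturbs that geometry the pumped signal has no stopper available within the polymer, forcing at least one unbound input and hence, by Claim~\ref{clm:unbound_input}, an unsaturated configuration.
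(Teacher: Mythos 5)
Your overall strategy---locate the first deviation from the correct growth order and pump the resulting mismatched signal until something is left unbound---has the same skeleton as the paper's proof, which takes the maximal subpolymer $p_{sub}$ of $p_{incorrect}$ lying in $p(seq(\alpha))$ and case-analyzes which of the five exposed output codomains of $p_{sub}$ is bound to a wrong piece. But there is a genuine gap: you try to reduce every branch to an unbound input via Claim~\ref{clm:unbound_input}, and that reduction fails. A pumped chain of erroneous monomers need not terminate in an unbound input: the terminal input can instead be bound by its capping supertile, in which case the polymer is fully bound locally and Claim~\ref{clm:unbound_input} gives you nothing. What goes wrong there is entropy, not enthalpy: $p_{incorrect}$ then contains two caps (the seed cap, which is still attached before the final pairing step, plus the cap on the dangling monomer), and the paper must invoke Claim~\ref{clm:two_caps} to conclude instability rather than unsaturation. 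Your proposal never lists caps among the possible signal terminators, so this case is simply unhandled, and it cannot be handled by the method you describe.

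Separately, your ``forcing'' step---that an induction over $\alpha_0,\dots,\alpha_n$ already shows the only saturated seed-containing polymers are those in $p(seq(\alpha))$---is essentially the statement being proved, so presenting it as a preliminary observation is circular; all the content must come from the deviation analysis itself. And in your case (ii), the proposed induction that ``any deviation strictly decreases the set of signals that can still be stopped'' is asserted rather than argued, precisely where the proof is delicate (an incorrect piece can agree with the correct one on one face for many rows). The paper closes this case by a different mechanism: when a pumped chain re-attaches along the boundary of $p_{sub}$ (e.g.\ at the edge of the partially formed top row), its input domains are geometrically pressed against exposed outputs of $p_{sub}$, and determinism of $M$ forces the attaching piece to be the unique correct one, contradicting maximality of $p_{sub}$; the only remaining terminations are an unbound input (unsaturated) or a bound cap (unstable by Claim~\ref{clm:two_caps}). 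You would need this maximality-plus-geometry argument, together with the cap case, to make your outline into a proof.
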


\begin{proof}

    This proof utilizes the fact that our TM is deterministic, and therefore, every pair of inputs $(I1, I2)$ where $I1,I2 \in \calD$ corresponds to one unique monomer. Take that polymer $p_{incorrect}$ that contains the seed but differs from any polymer in $p(seq(\alpha))$. Let $p_{sub}$ be the largest polymer such that $p_{sub} \in p(seq(\alpha))$ and $p_{sub}$ is a subpolymer of $p_{incorrect}$. Notice that such a polymer must exist, since $p(\alpha_0)$ is the seed and we are only considering polymers that contain the seed. Also notice that all polymers in $p(seq(\alpha))$ will only ever have 5 unique types of output codomains (and 0 input domains) unbound and exposed: $EXT\_LEFT^*$, $[(v),\{l,r\}]^*$, $[(q',v'),\{l,r\}]^*$, $EXT\_RIGHT^*$, and one of the three signal output domains $\{L^*,R^*,halt^*\}$. Since $p_{incorrect}$ is different than any polymer in $p(seq(\alpha))$, then at least one of these output domains extending from $p_{sub}$ must be bound to a monomer or supertile that is different than the monomer or supertile that it is bound to in $\alpha$.
    
    In the first case, assume the incorrect extension was bound to $p_{sub}$ by the output domains of either $EXT\_LEFT^*$ or $EXT\_RIGHT^*$. Since the corresponding input domains $EXT\_LEFT$ and $EXT\_RIGHT$ are unique to the tape extension supertiles, these output domains cannot attach the incorrect piece. The only other mistake that can happen is if a tape extension supertile attaches past the final row of computation. Assuming this had happened with a right supertile, this would expose an $R$ input domain. This input could either be unbound, causing the configuration to be unsaturated, or it could be attached to another seed, a computation monomer, or a left tape extension supertile, the three pieces with $R^*$ output domain. It's easy to see another seed or a left tape extension supertile would be blocked geometrically from attaching here. A computation monomer could be attached, however, it would have to pump the $R$ signal to avoid an unbound input. There are two subcases that could happen. One, it would reach the edge of a partially formed top row of $p_{sub}$. Here, it could be attached to a computation monomer that was hanging off the edge, but this computation monomer would have an input to the south that was unbound. If the unbound input was also bound by a computation monomer, that computation monomer would be the correct piece, thereby contradicting that $p_{sub}$ is maximal. The hanging monomer could also have a cap, which would cause $p_{incorrect}$ to have another cap in addition to the seed cap, making it unstable by claim \ref{clm:two_caps}. Instead of additional computation monomers, the last computation monomer of the pumping chain could also be attached to a left tape extension supertile. However, since the inputs of the supertile would be pressed against $p_{sub}$, the inputs would have to be connected to outputs of $p_{sub}$. If this was the case, the supertile would be correct, contradicting that $p_{sub}$ is maximal. Two, it would pump the signal to a location that would require a monomer that has the input domain $[(q_H,v),r]$ on its south face and output domain $R^*$ on its east face, of which there are none. Assuming the extra tape extension supertile was on the left, the supertile would require an $L$ input domain to be bound. This would yield two symmetrical subcases. One, the signal would pump with computation monomers to the edge of a partially formed top row of $p_{sub}$, which we just showed leads to the configuration being unstable. The signal could also pump until it reached the location north of $M_{(v),haltR}$ monomer. Because the $M_{(v),haltR}$ monomer has no output on the north face and all computation pieces have an input on the south face, no monomer could be in this location and have its south input bound, meaning there must be an unbound input somewhere, making the configuration unsaturated.
    
    In the second case, $[(v),\{l,r\}]^*$ and $[(q',v'),\{l,r\}]^*$ can only attach a computation monomer or end monomer. Because these domains encode the direction of the next row, any attached computation or end monomer must also have an input domain in the opposite direction. In other words, any monomer attached to $[(v),l]^*$ or $[(q', v'),l]^*$ must have an input domain $L$ on the east face, likewise with $[(v),r]^*$ or $[(q',v'),r]^*$ and an input domain $R$ (or $halt$) on the west face. In order to have this input also bound, there must be another computation monomer or end monomer connected to this input, with another $L$ or $R$ (or $halt$) on its east or west face, respectively. This signal would have to pump to avoid an unbound input. There's two possibilities for the end of this chain: 1) it eventually connects to the signal output domain $\{L^*,R^*,halt^*\}$ on $p_{sub}$. If this happens, the last monomer would have two input domains that match two output domains from $p_{sub}$, meaning it was the correct tile, contradicting that $p_{sub}$ was maximal. 2) the chain of monomers would pump to the edge of the partially formed top row of $p_{sub}$. Similar again to the argument in case one, the input of the last monomer in the chain could either be attached to: (a) another computation monomer hanging over the edge or (b) a tape extension supertile. For (a), the hanging computation monomer would require another computation monomer to the south, a bound cap, or would have an unbound input domain. Another computation monomer to the south would have to match two input domains to $p_{sub}$, contradicting that $p_{sub}$ was maximal. A bound cap would be the second on $p_{incorrect}$, in addition to the seed cap, making the configuration unstable by claim \ref{clm:two_caps}. For (b), the supertile would have its inputs pressed against $p_{sub}$, forcing it to bind to $p_{sub}$ to avoid an unbound input. This, however, would make the supertile the correct piece for that location, contradicting that $p_{sub}$ was maximal.
    
    In the last case, if any incorrect monomers or supertiles are bound to the signal output domain $\{L^*,R^*,halt^*\}$, that monomer or supertile must also have an input domain to the south that binds with $p_{sub}$. However, if this monomer or supertile matches both of its input domains to two output domains of $p_{sub}$, then it must correspond to the correct monomer or supertile for that location, once again contradicting that $p_{sub}$ was maximal. These three cases show that any erroneous addition to a polymer in $p(seq(\alpha))$ make the configuration containing that polymer unsaturated, and thus unstable.

\end{proof}

\subsection{Seedless Polymers}

Now we will show that configurations in $\vvc$ that include polymers that do not contain the seed and do contain more than one element from the set of all computation monomers, tape extension supertiles, and end monomers are unstable. We start by proving a few claims about signals.

\begin{claim}\label{clm:input_ext_left}
    If a polymer does not contain the seed and does contain a monomer with the input domain $EXT\_LEFT$, it has either an exposed instance of the $EXT\_LEFT$ domain or an attached cap.
\end{claim}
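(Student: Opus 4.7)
The plan is to trace the $EXT\_LEFT$ signal through the polymer by a finite-descent (``signal pumping'') argument. First I would observe from the construction that the only monomer type carrying an unstarred $EXT\_LEFT$ domain is $M_{L2}$ (on its south face), and that the complementary codomain $EXT\_LEFT^*$ occurs in exactly three places: (i) the north face of the seed monomer $M_{S1}$, (ii) the north face of the left tape extension piece $M_{L4}$, and (iii) one face of the left tape extension cap. These three possibilities exhaust all potential binding partners for an $EXT\_LEFT$ input.

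Next I would fix a polymer $p$ satisfying the hypothesis and any copy $u_1$ of $M_{L2}$ in $p$, and examine the $EXT\_LEFT$ input of $u_1$. If it is unbound, the first disjunct of the claim holds and we are done. Otherwise the bond is to some $EXT\_LEFT^*$; option (i) is excluded because $p$ contains no seed, option (iii) immediately yields the second disjunct, and option (ii) means the binding partner is an $M_{L4}$. Because the internal bonds of a left tape extension supertile are uncapped (its $extL_{*,*}$ labels have no cap matches) and its four constituent monomers are supplied in equal counts by the standing supertile-formation convention, an $M_{L4}$ appearing in $p$ forces the whole left tape extension supertile to appear in $p$ as well, in particular contributing a fresh copy $u_2$ of $M_{L2}$ in $p$ with its own $EXT\_LEFT$ input.

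Then I would iterate the case analysis on $u_2, u_3, \ldots$: at each step, the $EXT\_LEFT$ input of $u_i$ is either unbound (done), bound to a cap (done), or bound to an $M_{L4}$ that drags in a new left tape extension supertile containing a fresh $u_{i+1}$. The $u_i$'s are pairwise distinct because each $M_{L4}$ can be bonded to at most one $M_{L2}$, so each iteration consumes a fresh supertile. Since $p$ is finite (the monomer collection is finite), the chain cannot continue indefinitely and must terminate in one of the two desired modes, which proves the claim.

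The main obstacle is justifying precisely that a bond from $M_{L2}$ to an $M_{L4}$ truly pulls the entire left tape extension supertile (and hence a fresh $M_{L2}$) into $p$; this rests on the supertile-formation conventions already used throughout the construction, namely equal-count supply and strict favorability of the uncapped internal $extL_{*,*}$ bonds. A secondary subtlety is ruling out cycles in the chain of $u_i$'s, which is immediate because the $M_{L4}$ that terminates the $EXT\_LEFT$ input of $u_i$ belongs to a supertile not previously visited (its $M_{L2}$ is $u_{i+1}$, which differs from all earlier $u_j$).
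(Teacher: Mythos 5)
Your proof is correct and follows essentially the same route as the paper: identify that the only possible binding partners of an $EXT\_LEFT$ input are the seed (excluded by hypothesis), a cap, or the $EXT\_LEFT^*$ of a left tape extension supertile, and in the last case the signal ``pumps'' by exposing a fresh $EXT\_LEFT$ input, so finiteness of the polymer forces termination in an exposed domain or a cap. Your explicit appeal to the supertile-formation convention (to get the fresh $M_{L2}$) and the distinctness of the chain elements just makes precise what the paper's proof invokes implicitly.
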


\begin{proof}
    First, notice that the only monomer or supertile that contains the $EXT\_LEFT$ domain as an output is the left tape extension supertile. However, this supertile also has the domain as an input. Therefore, if a polymer contains the $EXT\_LEFT$ input domain, it can either be exposed, capped, or bound to an instance of the tape extension supertile. The last case is the only implication not included in the claim, but it exposes another $EXT\_LEFT$ input domain, causing the signal to pump. Whereas, in polymers with the seed, the seed acts as the stopper for this signal, but without the seed, no other monomers or supertile can act as a stopper. Because polymers cannot be infinite, eventually one of these input domains will have to be left exposed or capped. Note that this proof is analogous for $EXT\_RIGHT$.
\end{proof}
    
\begin{claim}\label{clm:input_R}
    If a polymer does not contain the seed and does contain a monomer with input domain $R$, it has either an exposed instance of the $R$ domain, an exposed instance of the $EXT\_LEFT$ domain, or an attached cap.
\end{claim}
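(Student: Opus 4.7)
The plan is to imitate the westward-pumping argument used for Claim~\ref{clm:input_ext_left}, exploiting the fact that the codomain $R^*$ appears as an east-facing output on only a handful of objects: the seed via $M_{S1}$, any capping supertile, the left tape extension supertile via $M_{L3}$, and exactly three computation-monomer types, namely $M_{(v),R}$, the skip monomer $M_{(q,s)\text{-skip}}$ for transitions with $D=L$, and the second move monomer $M_{(q,s)\text{-move2}}$ for transitions with $D=R$.

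First I would assume toward contradiction that $p$ contains no exposed $R$, no exposed $EXT\_LEFT$, and no attached cap, so that every $R$ input must be bound to some $R^*$. The seed is excluded by hypothesis and caps are excluded by assumption. If some $R$ input binds to the $R^*$ of $M_{L3}$, then the entire left tape extension supertile lies in $p$ and its $EXT\_LEFT$ input on $M_{L2}$ immediately triggers Claim~\ref{clm:input_ext_left}, yielding a contradiction. Otherwise every $R$ binds to the east face of some computation monomer. For the two types $M_{(v),R}$ and $M_{(q,s)\text{-skip}}$, that western neighbor carries its own fresh $R$ input on its west face, shifting the chain one cell further west with the same signal. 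For the remaining type $M_{(q,s)\text{-move2}}$, the west input is the state-tagged $(q')$, but the only east-facing $(q')^*$ appears on $M_{(q,s)\text{-move1}}$ with $D=R$, whose west face itself carries an $R$ input, reviving the $R$ chain one additional cell further west.

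Since $p$ is finite and the $x$-coordinate of the active $R$ input strictly decreases at each step, the recursion must terminate, and the only terminating alternatives are precisely the three advertised outcomes plus the $M_{L3}$ case already handled. The main obstacle I anticipate is the $M_{(q,s)\text{-move2}}$ branch, where the intermediate bond uses the non-signal input $(q')$ that could in principle be exposed or capped rather than pumped further. I would address this by arguing that an exposed or capped $(q')$ on $M_{(q,s)\text{-move2}}$ either directly produces a cap attachment in $p$ (closing the claim) or forces the induction to be applied to the sibling south input of the same move monomer, whose westward trace still terminates only at an exposed $R$, an exposed $EXT\_LEFT$, or a cap. Once this two-step bridge through the move1/move2 pair is verified, the westward induction closes in direct analogy with Claim~\ref{clm:input_ext_left}.
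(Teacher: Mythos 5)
Your proposal follows the same route as the paper's proof: treat $R$ as a westward-pumped signal, enumerate the carriers of an east-facing $R^*$, reduce the left-tape-extension case to Claim~\ref{clm:input_ext_left}, and terminate by finiteness of the polymer. You are in fact more careful than the paper at the one delicate point: the paper's proof asserts that every monomer with $R^*$ as an output, other than the left tape extension supertile, also carries $R$ as an input, which overlooks exactly the monomer you single out, $M_{(q,s)\text{-move2}}$ for $D=R$, whose west input is $(q')$ rather than $R$. Your two-step bridge through the $M_{(q,s)\text{-move1}}$/$M_{(q,s)\text{-move2}}$ pair is the right way to keep the chain pumping when $(q')$ is bound to the $(q')^*$ of $M_{(q,s)\text{-move1}}$.

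Where your sketch does not close is the patch for the case that $(q')$ is \emph{not} bound to $M_{(q,s)\text{-move1}}$. If $(q')$ is bound to a capping supertile you indeed have an attached cap and are done; but if $(q')$ is merely unbound, no cap has attached and the exposed input is $(q')$, not $R$ or $EXT\_LEFT$, and tracing the sibling south input of $M_{(q,s)\text{-move2}}$ does not rescue the stated disjunction either: that input $[(v),r]$ is fed from the row below, which carries the $L$ signal, so the analogous pumping argument terminates in an exposed $L$, an exposed $EXT\_RIGHT$, or a cap. What this branch really yields is the weaker conclusion ``some exposed input domain or an attached cap,'' which is all the later seedless-polymer argument needs (via Claim~\ref{clm:unbound_input} and Claim~\ref{clm:two_caps}), but it is not the literal wording of the claim. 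So: same approach as the paper, with a sharper eye for the exceptional monomer, but the fix you propose for that branch would not verify as written; either argue that an unbound $(q')$ already suffices for how the claim is used (unsaturation, given excess caps), or restate the conclusion as ``an exposed input domain or an attached cap.'' Note that the paper's own proof silently has the same blind spot, since it treats the left tape extension supertile as the only $R^*$-bearing piece lacking an $R$ input.
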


\begin{proof}
    Using the same reasoning as the proof for claim \ref{clm:input_ext_left}, notice that the only monomers and supertiles that contain the $R$ domain as an output also have the domain as an input. The only exception this time is the left tape extension supertile, which does have $R^*$ as an output and does not have $R$ as an input. Therefore, if a polymer contains the $R$ input domain, it can either be exposed, capped, bound to a left tape extension supertile, or bound to another monomer that also has $R$ as an input domain. The first two cases are implications in the claim, whereas the third case, being bound to a left tape extension supertile, exposes a $EXT\_LEFT$ input domain, which also leads to the implications of claim \ref{clm:input_R} using claim \ref{clm:input_ext_left}. The final case, being bound to another monomer that also has $R$ as an input domain, creates a new instance of the $R$ input domain causing the signal to pump. Because polymers cannot be infinite, eventually one of the first three cases will happen, eventually leading to an exposed input domain or a cap. Note that this proof shows $R \rightarrow EXT\_LEFT$ and is analogous for $halt \rightarrow R$ and for $L \rightarrow EXT\_RIGHT$.
\end{proof}

\begin{claim}
    Any configuration of $\vvc$ that includes a polymer $p_{seedless}$ that does not contain the seed but does contain two or more elements from the set off all computation monomers, tape extension supertiles, and end monomers is unstable.
\end{claim}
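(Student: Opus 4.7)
The plan is to prove the claim by contradiction. Suppose some stable (hence saturated) configuration $\beta$ of $\vvc$ contains a polymer $p_{seedless}$ with $M \geq 2$ elements from computation monomers, tape extension supertiles, and end monomers, but no seed. The goal is to show that $p_{seedless}$ must contain at least two capping supertiles, which immediately contradicts stability by Claim \ref{clm:two_caps}. By Claim \ref{clm:unbound_input}, every input on every element in $p_{seedless}$ must be bonded; since there is no seed to absorb signals, each such input is either internally bonded to an output of another element in $p_{seedless}$ or bonded to a codomain of a capping supertile that is part of $p_{seedless}$.

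I would first extend the pumping arguments of Claims \ref{clm:input_ext_left} and \ref{clm:input_R} to the remaining signal types: $L$ (symmetric to $R$), $halt$ (symmetric to $R$, with the right tape extension replaced by any monomer carrying $halt^*$), and the vertical content and move signals $[(v),l]$, $[(v),r]$, $[(q',v),l]$, $[(q',v),r]$, $(q')$. For each such input, inspection of the construction shows the complementary codomain appears only on monomer types that again carry the same input (pumping), on the seed (unavailable here), or on a stopper whose own inputs initiate a further pumping chain. By the same induction used in Claim \ref{clm:input_R}, every input in $p_{seedless}$ either is exposed (forbidden by saturation together with Claim \ref{clm:unbound_input}) or terminates its chain at a capping supertile inside $p_{seedless}$.

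Next, I would count caps. Each cap's two codomains are specifically matched to the two input faces of a single element type, so a single cap cannot absorb free inputs belonging to two different elements. Model $p_{seedless}$ by a graph whose vertices are the elements and whose edges are the internal input-output bonds. By Claim \ref{clm:no_rotation} the elements share a common rotation. Each element contributes exactly two input slots, so the number of free inputs is $F = 2M - B_{\text{int}}$, where $B_{\text{int}}$ counts internal bonds. In the tree case $B_{\text{int}} = M-1$, so $F = M+1 \geq 3$, and moreover any tree on $M \geq 2$ vertices has at least two leaves; each leaf has at least one free input which, by the previous paragraph, must be bonded by a cap dedicated to that leaf, producing two distinct capping supertiles. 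For denser topologies (cycles or supertiles sharing multiple adjacencies), a short case analysis using the planarity of the unit-square tiling, plus the monomer-type-specific orientation of each signal output, shows the free inputs still spread across at least two elements, requiring two distinct caps.

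The main obstacle is the topological analysis in the last step: precisely ensuring that the free inputs of $p_{seedless}$ cannot all be concentrated on a single element even when the polymer contains cycles or when supertiles share multiple faces. I would address this by leveraging the directional asymmetry of the construction (left-going rows are complementary only to right-going rows, tape extension supertiles have only a single vertical output feeding the next row, and end monomers only appear above a halting transition), which forces every connected seedless arrangement of $M \geq 2$ elements to present at least two distinct elements with unmet inputs; each then demands its own cap, and Claim \ref{clm:two_caps} delivers the contradiction.
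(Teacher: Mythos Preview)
Your overall strategy is sound and shares the paper's skeleton: use the pumping lemmas (Claims \ref{clm:input_ext_left} and \ref{clm:input_R}) to trace input signals until they terminate at caps, then invoke Claim \ref{clm:two_caps} once two caps are found. The paper does this too. Where you diverge is in the organization: you try to run a uniform graph-counting argument (free inputs $F = 2M - B_{\text{int}}$, leaves of a tree, etc.), whereas the paper does a direct three-way case split on which monomer types appear in $p_{seedless}$ (tape extension present; two horizontal signals on different rows; a single row).

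The genuine gap is exactly the one you flag yourself: the cycle/dense case. Your tree argument is fine, but once $p_{seedless}$ has cycles you no longer have two leaves to point to, and your sketch (``directional asymmetry \ldots forces at least two distinct elements with unmet inputs'') is not yet a proof. In particular, you have not ruled out that a single cap absorbs the only remaining free inputs, nor that all inputs are absorbed internally in some grid-like patch. The paper sidesteps this entirely: rather than counting free inputs globally, it exploits the fact that any element in $p_{seedless}$ carries two \emph{specific} input signals (e.g., a left tape-extension supertile carries both \texttt{EXT\_LEFT} and $L$), and these two signals pump in \emph{different directions} (one vertically, one horizontally), so their terminating caps are necessarily distinct. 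That directional separation is the missing ingredient in your argument; once you use it, the cycle case dissolves without any graph-theoretic bookkeeping.

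A smaller issue: your assertion that ``a single cap cannot absorb free inputs belonging to two different elements'' needs a one-line geometric justification (the L-shaped cap's two codomains sit on adjacent faces of a single unit square, so they cannot simultaneously abut input faces of two distinct unit squares in the same orientation). The paper does not need this because its two chains already lead to caps of different \emph{types}.
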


\begin{proof}
    
    We're going to break up this proof into three cases. In the first case, $p_{seedless}$ has a tape extension supertile. Without loss of generality, assume it is a left tape extension supertile. This supertile has two inputs, $EXT\_LEFT$ and $L$. By claim \ref{clm:input_ext_left}, $EXT\_LEFT$ implies there is an unbound input domain or an attached cap, and by claim \ref{clm:input_R}, $L$ implies there is another unbound input domain or attached cap that is independent of the first. Since $p_{seedless}$ must have at least one unbound input or two attached caps, $p_{seedless}$ is unstable.
    
    In the second case, $p_{seedless}$ has one $L$ input domain and one $R$ (or $halt$) input domain, or two of the same input domain on different rows, i.e. connected through north-south domains. If the signals are different, claim \ref{clm:input_R} can apply to both, indicating that $p_{seedless}$ has at least one exposed input or two attached caps. If they are the same signal on different rows, there is the possibility that they could turn into the same $EXT\_LEFT$ or $EXT\_RIGHT$ signal. However, if they did, $p_{seedless}$ must contain at least one tape extension supertile, making this case one again which we already showed is unstable.
        
    The only other case is when $p_{seedless}$ is a single row of computation (or end) monomers, i.e. a polymer with just one $L$ or $R$ (or $halt$) signal. However, since these polymers consist of computation (or end) monomers, each of which has a second input on its south face, the overall polymer will still have input domains, in the form of values from another row, that need to be addressed. If these input domains are left unbound, then the configuration isn't maximizing bonds and is therefore unsaturated. If these domains are bound, it must be with additional computation monomers, which will create another instance of an $L$ or $R$ in another row, giving the polymer two signals which we already showed is unstable. One important note is that, if the original $L$ or $R$ (or $halt$) signal is capped, it will also cap one of these value input domains. However, since we are arguing about polymers with at least two non-cap monomers or supertiles, there must be at least one other value input domain, which can only be bound by the previously mentioned additional row of computation monomers. Examples of unstable spurious computations are shown in \ref{fig:seedless_computation_gaps}.

\end{proof}

\begin{figure}
    \centering
    \includegraphics[width=\linewidth]{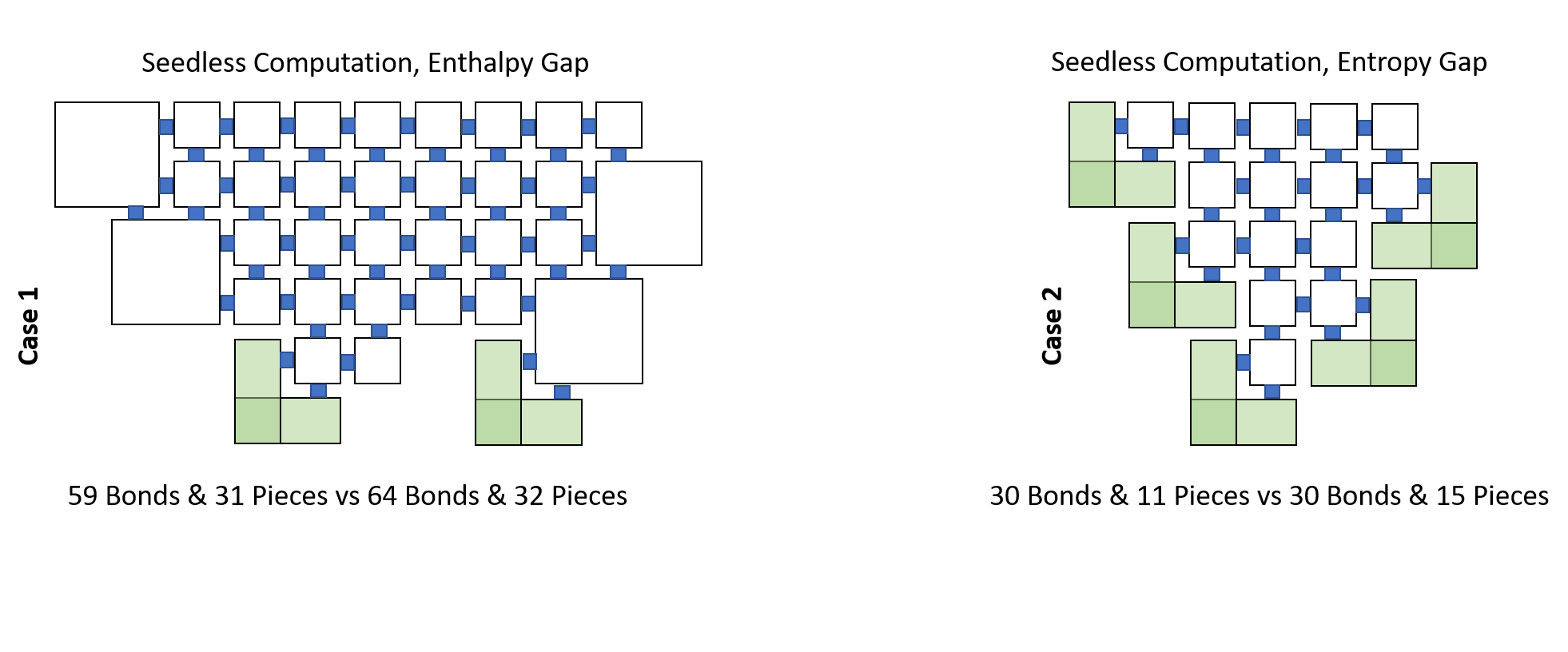}
    \caption{On the left is an example of a spurious computation with exposed domains. Because these domains are unbounded, there is an enthalpy gap between this polymer with its detached caps and a fully capped state. On the right is a special case where all the exposed input domains are still bound to their caps. In this case, there is an entropy gap with a fully capped state. To clarify, these two subsystems are not being compared to each other, but the statistics below each image are the comparisons between each polymer and a fully capped state.}
    \label{fig:seedless_computation_gaps}
\end{figure}

To summarize the previous section, we have shown that all configurations in $seq(\alpha)$ are saturated, from the $\alpha_0$, the fully capped state, to $\alpha$, the configuration in which every seed is bound into a complete paired computation. We also showed that the final configuration $\alpha$ has a higher entropy than every other configuration in $seq(\alpha)$. Then we showed that any configuration with a polymer that contains the seed but is not an element of $p(seq(\alpha))$ has unbound input domains and is therefore not saturated. Finally, we showed that any polymer that does not contain the seed and does contain at least two elements from the set of all computation monomers, tape extension supertiles, end monomers, and caps is either unsaturated or has a lower entropy than $\alpha_0$, the fully capped state. The accumulation of all these results proves that $\alpha$ is the single stable configuration of $\vvc$.

Now that we have shown how $\vvc$ simulates $M$, that $p(\alpha)$ will always contain two matching monomers of type $M_{(v),haltR}$, and proved that $\alpha$ is the stable configuration of $\vvc$, all that is left to do is to show how $E_{output}$ works. $E_{output}$ works by looking at the stable configuration of $\vvc$, analyzing the definition of any polymer that contains the seed supertile ($p(\alpha)$), and outputs a $0$ if that polymer contains two instances of $M_{0,haltR}$ or $1$ if that polymer contains two instance of $M_{1,haltR}$. The proof that two monomers of one of these two types will be in every instance of the final complete paired computation polymer is a combination of the proof that $p(\alpha)$ contains these two monomers and the proof that $\alpha$ is the stable configuration. This final output $0$ or $1$ corresponds to the output of the Turing machine $M$.

\end{proof}

\end{document}